\algnewcommand{\LineComment}[1]{\Statex\hspace{\algorithmicindent}\(\triangleright\) #1}
\algnewcommand\algorithmicforeach{\textbf{for each}}
\algorithmic\endcsname{\itemsep\z@}{\itemsep=0.25ex}{}{}
\newcounter{usesmallsep}
\the\value{usesmallsep}=1
    \newlength{\myitemsep}
    \newlength{\mytopsep}
    \setlist[itemize]{leftmargin=\parindent,parsep=\parskip,
      listparindent=\parindent,itemsep=\myitemsep,topsep=\myitemsep}
    \setlist[enumerate]{leftmargin=\parindent,parsep=\parskip,
      listparindent=\parindent,itemsep=\myitemsep,,topsep=\myitemsep}
    \setlist[description]{font=\bfseries,leftmargin=\parindent,parsep=\parskip,
      listparindent=\parindent,itemsep=\myitemsep,topsep=\myitemsep}
    \newlength{\mypartitlesep}
    \titlespacing{\paragraph}{0pt}{\mypartitlesep}{\mypartitlesep}
    \newlength{\mythmsep}
    \newtheoremstyle{mythmstyle}
      {\mythmsep} %
      {\mythmsep} %
      {\itshape} %
      {} %
      {\bfseries} %
      {.} %
      {.5em} %
      {} %
    \newtheoremstyle{mydefstyle}
      {\mythmsep} %
      {\mythmsep} %
      {} %
      {} %
      {\bfseries} %
      {.} %
      {.5em} %
      {} %
    \theoremstyle{mythmstyle}
        \newtheorem{theorem}{Theorem}
        \newtheorem{proposition}[theorem]{Proposition}
        \newtheorem{lemma}[theorem]{Lemma}
        \newtheorem{corollary}[theorem]{Corollary}
        \newtheorem{fact}[theorem]{Fact}
        \newtheorem*{fact*}{Fact}
    \theoremstyle{mydefstyle}
        \newtheorem{definition}{Definition}
        \newtheorem{problem}{Problem}
        \newtheorem{assumption}{Assumption}
        \newtheorem{remark}{Remark}
        \newtheorem{algr}[algorithm]{Algorithm}
    \newenvironment{proof}
        {\vspace{-0.9em}\begin{proof}}
        {\end{proof}\vspace{-0.4em}}
    \theoremstyle{plain}
        \newtheorem{theorem}{Theorem}
        \newtheorem{proposition}[theorem]{Proposition}
        \newtheorem{observation}[theorem]{Observation}
    \theoremstyle{definition}
        \newtheorem{definition}[theorem]{Definition}
        \newtheorem{remark}{Remark}
        \newtheorem*{remark*}{Remark}
        \newtheorem*{example*}{Example}
        \newtheorem{algr}{Algorithm}
        \newtheorem*{algr*}{Algorithm}
    \setlist[itemize]{leftmargin=\parindent}
    \setlist[enumerate]{leftmargin=\parindent}
    \setlist[description]{font=\bfseries,leftmargin=\parindent}
\newcommand{\Hm}{\mathsf{H}}
\newcommand{\fsimp}[2]{\sigma_{#2}}
\newcommand{\Pers}{\mathsf{Pers}}
\newcommand{\lbarrowspace}{\;}
\let\leftrightarrowsp\lrarrowsp
\newcommand{\incto}{\hookrightarrow}
\newcommand{\inctosp}[1]{\xhookrightarrow{\lbarrowspace#1\lbarrowspace}}
\newcommand{\bakincto}{\hookleftarrow}
\newcommand{\bakinctosp}[1]{\xhookleftarrow{\lbarrowspace#1\lbarrowspace}}
\newcommand{\hatfsimp}{\hat{\sigma}}
\newcommand{\Ud}{{\mathcal U}}
\newcommand{\given}{\,|\,}
\newcommand{\Set}[1]{\{#1\}}
\newcommand{\MF}{\mathsf{MF}}
\newcommand{\idx}{\mathrm{idx}}
\let\emptyset\varnothing
\let\intersect\cap
\let\union\cup
\newcommand{\Fcal}{\mathcal{F}}
\newcommand{\Ical}{\mathcal{I}}
\newcommand{\Lcal}{\mathcal{L}}
\newcommand{\Ucal}{\mathcal{U}}
\newcommand{\aG}{\alpha}
\newcommand{\DG}{\Delta}
\newcommand{\eG}{\epsilon}
\newcommand{\gG}{\gamma}
\newcommand{\lG}{\lambda}
\newcommand{\LG}{\Lambda}
\newcommand{\sG}{\sigma}
\newcommand{\tG}{\tau}
\newcommand{\Dim}{p}
\newcommand{\birth}{b}
\newcommand{\death}{d}
\newcommand{\filtcnt}{m}
\newcommand{\simpcnt}{k}
\newcommand{\Fpo}{\mathbb{T}}
\newcommand{\defemph}[1]{\emph{#1}}
\theoremstyle{definition}
\begin{document}

\title{Revisiting Graph Persistence for Updates and Efficiency\thanks{This research is partially supported by NSF grant CCF 2049010.}}

\author{Tamal K. Dey\thanks{Department of Computer Science, Purdue University. \texttt{tamaldey@purdue.edu}}
\and Tao Hou\thanks{School of Computing, DePaul University. \texttt{thou1@depaul.edu}}
\and Salman Parsa\thanks{School of Computing, DePaul University. \texttt{s.parsa@depaul.edu}}
}

\date{}

\maketitle
\thispagestyle{empty}

\begin{abstract}
It is well known that ordinary persistence on graphs can be computed more
efficiently than the general persistence. Recently, it has  been shown
that zigzag persistence on graphs also exhibits similar behavior.
Motivated by these results, we revisit graph persistence 
and propose
efficient algorithms especially for 
local updates on filtrations, similar to what is done in ordinary persistence for computing the \emph{vineyard}. We show that, for
a filtration of length $m$,
(i)~switches
(transpositions) in ordinary graph persistence can be done in $O(\log m)$ time;
(ii)~zigzag persistence on graphs can be computed in $O(m\log m)$ time, which
improves a recent $O(m\log^4n)$ time algorithm assuming
$n$, the size of the union of all graphs in the filtration, satisfies
$n\in\Omega({m^\varepsilon})$ for any fixed $0<\varepsilon<1$; 
(iii) open-closed, closed-open,
and closed-closed bars
in dimension $0$ for graph zigzag persistence can be updated in
$O(\log m)$ time, whereas the open-open bars in dimension
$0$ and closed-closed bars in dimension $1$ can be done in $O(\sqrt{m}\,\log m)$ time.
\end{abstract}

\newpage
\setcounter{page}{1}

\section{Introduction}
Computing persistence for graphs has been a special focus
within topological data analysis (TDA)~\cite{DW22,edelsbrunner2010computational} because graphs are
abundant in applications and they admit more efficient
algorithms than general simplicial complexes. It is well known that the
persistence algorithm on a graph filtration with $m$ additions
can be implemented with 
a simple 
Union-Find
data structure in $O(m\,\alpha(m))$ time, where $\alpha(m)$
is the inverse Ackermann's function (see e.g.~\cite{DW22}).
On the other hand, the general-purpose persistence algorithm on a simplicial
filtration comprising $m$ simplices runs in $O(m^\omega)$ time~\cite{milosavljevic2011zigzag},
where $\omega< 2.373$ is the exponent for matrix multiplication.
In a similar vein, Yan et al.~\cite{yan2021link} have recently shown that 
extended persistence~\cite{cohen2009extending} for graphs can also be computed more efficiently
in $O(m^2)$ time.
The zigzag version~\cite{carlsson2010zigzag} of the problem also exhibits similar
behavior; see e.g.\ the survey~\cite{berkouk22}. Even though the general-purpose zigzag persistence
algorithm runs in $O(m^\omega)$ time on a zigzag filtration
with $m$ additions and deletions~\cite{carlsson2009zigzag-realvalue,DBLP:conf/esa/DeyH22,maria2014zigzag,milosavljevic2011zigzag},
a recent result in~\cite{dey2021computing} shows that graph zigzag persistence
can be computed in 
$O(m\log^4n)$ 
time using some
appropriate dynamic data structures~\cite{georgiadis2011data,holm2001poly}
($n$ is the size of the union of all graphs in the filtration).

Motivated by the above developments, we embark on
revisiting the graph persistence and find more efficient algorithms using appropriate
dynamic data structures, especially in the dynamic settings~\cite{cohen2006vines,dey2021updating}.
In a dynamic setting, the graph filtration changes, and we are
required to update the barcode (persistence diagram) accordingly. For general simplicial complexes as input, 
the \emph{vineyard} algorithm of~\cite{cohen2006vines} updates the barcode in $O(m)$ time
for a \emph{switch} of two  
consecutive simplices (originally called a \emph{transposition} in~\cite{cohen2006vines}). 
So, we ask
if a similar update can be done more efficiently for a graph filtration. 
We show that, using some appropriate
dynamic data structures, indeed we can execute such updates
more efficiently. Specifically, we show the following:
\begin{enumerate}
    \item In a standard (non-zigzag) graph filtration comprising $m$ additions,
    a switch can be implemented in $O(\log m)$ time with a preprocessing
    time of $O(m\log m)$. See Section~\ref{sec:std-switch}.
    As a subroutine of the update algorithm for switches on graph filtrations,
    we propose an update on the merge trees (termed as \emph{merge forest} in this paper)
    of the filtrations, 
    whose complexity is also $O(\log m)$.
    \item The barcode of a graph zigzag filtration comprising
    $m$ additions and deletions can be computed in $O(m\log m)$ time.
    Assuming $n\in \Omega(m^\varepsilon)$ for any fixed positive $\varepsilon<1$, where $n$ is the size of the union of all graphs in the filtration,
    this is an improvement over
    the $O(m\log^4 n)$ complexity of the algorithm in~\cite{dey2021computing}.
    See Section~\ref{sec:gzz-non-up}.
    Also, our current algorithm using Link-Cut tree~\cite{sleator1981data}
    is much easier to implement than the algorithm in~\cite{dey2021computing} using the Dynamic 
    Minimum Spanning Forest~\cite{holm2001poly}.
    \item 
    For switches~\cite{dey2021updating} on graph zigzag persistence, 
    the \emph{closed-closed} intervals in dimension 0
    can be maintained in $O(1)$ time;
    the \textit{closed-open} and \textit{open-closed} intervals, which appear only in dimension $0$,
    can be maintained in $O(\log m)$ time;
    the \emph{open-open} intervals in dimension 0 and 
    \emph{closed-closed} intervals in dimension 1
    can be maintained in $O(\sqrt{m}\,\log m)$ time. 
    All these can be done with an $O(m^{1.5}\log m)$ preprocessing
    time.
    See Section~\ref{sec:zz-switch}.
\end{enumerate}

\section{Preliminaries}

\paragraph{Graph zigzag persistence.}

A {\it graph zigzag filtration} 
is a sequence of graphs
\begin{equation}\label{eqn:prelim-zz}
\Fcal: G_0 \leftrightarrow G_1 \leftrightarrow 
\cdots \leftrightarrow G_\filtcnt,
\end{equation}
in which each
$G_i\leftrightarrow G_{i+1}$ is either a forward inclusion $G_i\incto G_{i+1}$
or a backward inclusion $G_i\bakincto G_{i+1}$.
For computation,
we only consider \emph{simplex-wise} filtrations 
starting and ending with \emph{empty} graphs
in this paper, i.e.,
$G_0=G_\filtcnt=\emptyset$ and
each inclusion $G_i\leftrightarrow G_{i+1}$ is an addition or deletion
of a single vertex or edge (both called a \emph{simplex}).
Such an inclusion is sometimes denoted as $G_i\leftrightarrowsp{\sG} G_{i+1}$
with $\sG$ indicating the vertex or edge being added or deleted.
The $\Dim$-th homology functor ($p=0,1$) applied on $\Fcal$
induces a {\it zigzag module}:
\begin{equation*}
\Hm_\Dim(\Fcal): 
\Hm_\Dim(G_0) 
\leftrightarrow
\Hm_\Dim(G_1) 
\leftrightarrow
\cdots 
\leftrightarrow
\Hm_\Dim(G_\filtcnt), 
\end{equation*}
in which
each $\Hm_\Dim(G_i)\leftrightarrow \Hm_\Dim(G_{i+1})$
is a linear map induced by inclusion.
It is known~\cite{carlsson2010zigzag,Gabriel72} that
$\Hm_\Dim(\Fcal)$ has a decomposition of the form
$\Hm_\Dim(\Fcal)\simeq\bigoplus_{k\in\LG}\Ical^{[\birth_k,\death_k]}$,
in which each $\Ical^{[\birth_k,\death_k]}$
is an
{\it interval module} over the interval $[\birth_k,\death_k]$.
The multiset of intervals
$\Pers_\Dim(\Fcal):=\Set{[\birth_k,\death_k]\given k\in\LG}$
is an invariant of $\Fcal$
and is called the {\it $\Dim$-th barcode} of $\Fcal$.
Each interval in $\Pers_\Dim(\Fcal)$ is called a {\it $\Dim$-th persistence interval}
and is also said to be in dimension $\Dim$.
Frequently in this paper, we consider the barcode of $\Fcal$ in all dimensions
$\Pers_*(\Fcal):=\bigsqcup_{\Dim= 0,1}\Pers_\Dim(\Fcal)$.

\paragraph{Standard persistence and simplex pairing.}
If all inclusions in Equation~(\ref{eqn:prelim-zz})
are forward, we have a \emph{standard} (\emph{non-zigzag}) graph filtration.
We also only consider standard graph filtrations that are simplex-wise
and start with empty graphs. Let $\Fcal$ be such a filtration.
It is well-known~\cite{edelsbrunner2000topological} that $\Pers_*(\Fcal)$
is generated from a \emph{pairing} of simplices in $\Fcal$ s.t.\ for each pair $(\sG,\tG)$
generating a $[b,d)\in\Pers_*(\Fcal)$,
the simplex $\sG$ creating $[b,d)$ is called \emph{positive} 
and $\tG$ destroying $[b,d)$ is called \emph{negative}.
Notice that $d$ may equal $\infty$ for a $[b,d)\in\Pers_*(\Fcal)$,
in which case $[b,d)$ is generated by an \emph{unpaired} positive simplex.
For a simplex $\sG$ added from $G_{i}$ to $G_{i+1}$ in $\Fcal$, we let its \emph{index}
be $i$ and denote it as $\idx_\Fcal(\sG):=i$.
For another simplex $\tG$ added in $\Fcal$,
if $\idx_\Fcal(\sG)<\idx_\Fcal(\tG)$, 
we say that $\sG$ is \emph{older} than $\tG$
and $\tG$ is \emph{younger} than $\sG$.

\paragraph{Merge forest.}
{Merge forests} (more commonly called {merge trees})
encode 
the 
evolution of connected components in a standard graph filtration~\cite{edelsbrunner2010computational,Parsa13}.
We adopt merge forests as central constructs in our update algorithm for standard graph persistence
(Algorithm~\ref{alg:std-switch-abs}).
We rephrase its definition
below:

\begin{definition}[Merge forest]\label{dfn:MF}
For a simplex-wise standard graph filtration
\[\Fcal:\emptyset= G_0 \inctosp{\sG_0}
G_1 \inctosp{\sG_1}
\cdots
\cdots \inctosp{\sG_{\filtcnt-1}} G_\filtcnt,\]
its \defemph{merge forest}
$\MF(\Fcal)$
is a forest (acyclic undirected graph) where the leaves correspond to vertices in $\Fcal$
and the internal nodes correspond to negative edges  in $\Fcal$.
Moreover, each node in $\MF(\Fcal)$ is associated with a \defemph{level} which is the   index
of its corresponding simplex in $\Fcal$.
Let $\MF^i(\Fcal)$ be the subgraph of $\MF(\Fcal)$ induced by nodes
at levels less than $i$.
Notice that trees in $\MF^i(\Fcal)$ bijectively correspond
to connected components in $G_{i}$.
We then constructively define $\MF^{i+1}(\Fcal)$ from $\MF^{i}(\Fcal)$,
starting with $\MF^{0}(\Fcal)=\emptyset$
and ending with $\MF^{\filtcnt}(\Fcal)=\MF(\Fcal)$.
Specifically, for each $i=0,1,\ldots,\filtcnt-1$, do the following:
\begin{description}
    \item[$\sG_{i}$ is a vertex:]
    $\MF^{i+1}(\Fcal)$ equals $\MF^{i}(\Fcal)$ union an isolated leaf at level $i$
    corresponding to $\sG_i$.
    \item[$\sG_{i}$ is a positive edge:]
    Set $\MF^{i+1}(\Fcal)=\MF^{i}(\Fcal)$.
    \item[$\sG_{i}$ is a negative edge:]
    Let $\sG_i=(u,v)$. Since $u$ and $v$ are in different connected components $C_1$ and $C_2$ in $G_i$,  
    let $T_1,T_2$ be the trees in $\MF^{i}(\Fcal)$
    corresponding to $C_1,C_2$ respectively.
    To form $\MF^{i+1}(\Fcal)$, we add an internal node at level $i$ (corresponding to $\sG_i$) to $\MF^{i}(\Fcal)$ 
    whose children are the roots of $T_1$ and $T_2$.
\end{description}
\end{definition}

In this paper, we do not differentiate a vertex or edge in $\Fcal$
and its corresponding node in $\MF(\Fcal)$.

\section{Updating standard persistence on graphs}
\label{sec:std-switch}

The switch operation originally proposed in~\cite{cohen2006vines} for
general filtrations looks as follows on standard graph filtrations:
\begin{equation}\label{eqn:std-switch}
\begin{tikzpicture}[baseline=(current  bounding  box.center)]
\tikzstyle{every node}=[minimum width=24em]
\node (a) at (0,0) {$\Fcal: \emptyset= G_0 \incto
\cdots
\incto
G_{i-1}\inctosp{\sG} 
G_i 
\inctosp{\tG} G_{i+1}
\incto
\cdots \incto G_\filtcnt$}; 
\node (b) at (0,-0.6){$\Fcal': \emptyset= G_0 \incto
\cdots
\incto 
G_{i-1}\inctosp{\tG} 
G'_i 
\inctosp{\sG} G_{i+1}
\incto
\cdots \incto G_\filtcnt$};
\path[->] (a.0) edge [bend left=90,looseness=1.5,arrows={-latex},dashed] (b.0);
\end{tikzpicture}
\end{equation}
In the above operation, 
the addition of two simplices 
$\sG$ and $\tG$
are switched from $\Fcal$ to $\Fcal'$.
We also require that $\sG\not\subseteq\tG$~\cite{cohen2006vines} ($\sG$ is not a vertex of the edge $
\tG$)
because otherwise $G'_i$ is not a valid graph.

For a better presentation, 
we provide the idea at a high level for the updates
in Algorithm~\ref{alg:std-switch-abs};
the full details are presented in 
Algorithm~\ref{alg:std-switch-full} in Section~\ref{sec:alg-std-switch-full}.

We also notice the following fact about the change on pairing 
caused by 
the switch in Equation~(\ref{eqn:std-switch})
when $\sG,\tG$ are both positive or both negative.
Let $\sG$ be paired with $\sG'$ and 
$\tG$ be paired with $\tG'$ in $\Fcal$. (If $\sG$ or $\tG$ are \emph{unpaired},
then let $\sG'$ or $\tG'$ be $null$.)
By the update algorithm for general complexes~\cite{cohen2006vines},
either (i)~the pairing for $\Fcal$ and $\Fcal'$ stays the same,
or (ii) the only difference on the pairing is that
$\sG$ is paired with $\tG'$ and
$\tG$ is paired with $\sG'$ in $\Fcal'$.

\begin{algr}[Update for switch on standard graph filtrations]\label{alg:std-switch-abs}
\begin{itemize}\item[]\end{itemize}
\noindent For the switch operation in Equation~(\ref{eqn:std-switch}),
the algorithm maintains a merge forest $\Fpo$
(which initially represents $\MF(\Fcal)$)
and a pairing of simplices $\Pi$
(which initially corresponds to $\Fcal$).
The algorithm makes changes
to $\Fpo$ and $\Pi$
so that they correspond to $\Fcal'$
after the processing.
For an overview of the algorithm,
we describe the processing only for
the  cases (or sub-cases)
where we need to make changes to $\Fpo$ or $\Pi$:
\begin{description}
\parpic[r]{%
  \begin{minipage}{0.27\linewidth}
  \includegraphics[width=\linewidth]{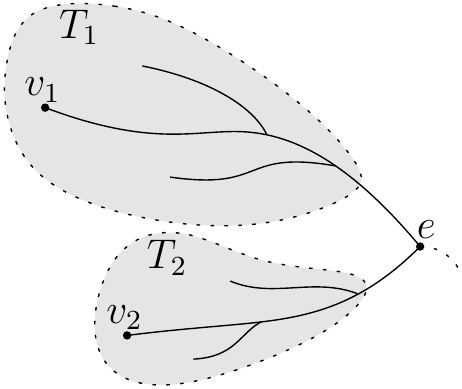}
  \captionof{figure}{}
  \label{fig:v-v-sw}
  \end{minipage}
  }
\item[A. The switch is a vertex-vertex switch:]
First 
let 
    $v_1:=\sG$ and $v_2:=\tG$.
    As illustrated in  Figure~\ref{fig:v-v-sw},
the only situation where the  pairing $\Pi$ changes in this case is that  $v_1,v_2$ are in
  the same tree in $\Fpo$
  and are both unpaired when $e$ 
  is added in $\Fcal$,
  where $e$ is 
    the edge corresponding to 
    the \emph{nearest common ancestor}
    of $v_1,v_2$ in $\Fpo$. 
  In this case, $v_1$,$v_2$ are leaves at the \emph{lowest} levels in the subtrees $T_1,T_2$ respectively (see Figure~\ref{fig:v-v-sw}),
  so that $v_2$ is paired with $e$ and $v_1$ is the \emph{representative} (the only unpaired vertex)
  in the merged connected component due to the addition of $e$.
  After the switch, $v_1$ is paired with $e$ due to being younger and $v_2$ becomes the representative
  of the merged component.
   Notice that the structure of $\Fpo$  stays the same.

\end{description}
\begin{description}
\item[B. The switch is an edge-edge switch:]
Let 
    $e_1:=\sG$ and $e_2:=\tG$.
  We  have  the following sub-cases:
  \medskip
   \begin{description}
       \item[B.1. $e_1$ is negative and $e_2$ is positive:]
       We need to make changes 
       when
       $e_1$ is in a 1-cycle in $G_{i+1}$ (see Figure~\ref{fig:e1-e2-same-conns}),
       which is equivalent to saying that
       $e_1,e_2$ connect to the same two connected components in $G_{i-1}$.
       In this case, $e_1$ becomes positive and 
       $e_2$ becomes negative after the switch, for which we pair $e_2$ with
       the vertex that $e_1$ previously pairs with.
       The node in $\Fpo$ corresponding to $e_1$ should now correspond to $e_2$
       after the switch.
     
    \begin{figure}[!t]
  \centering
  \includegraphics[width=0.6\linewidth]{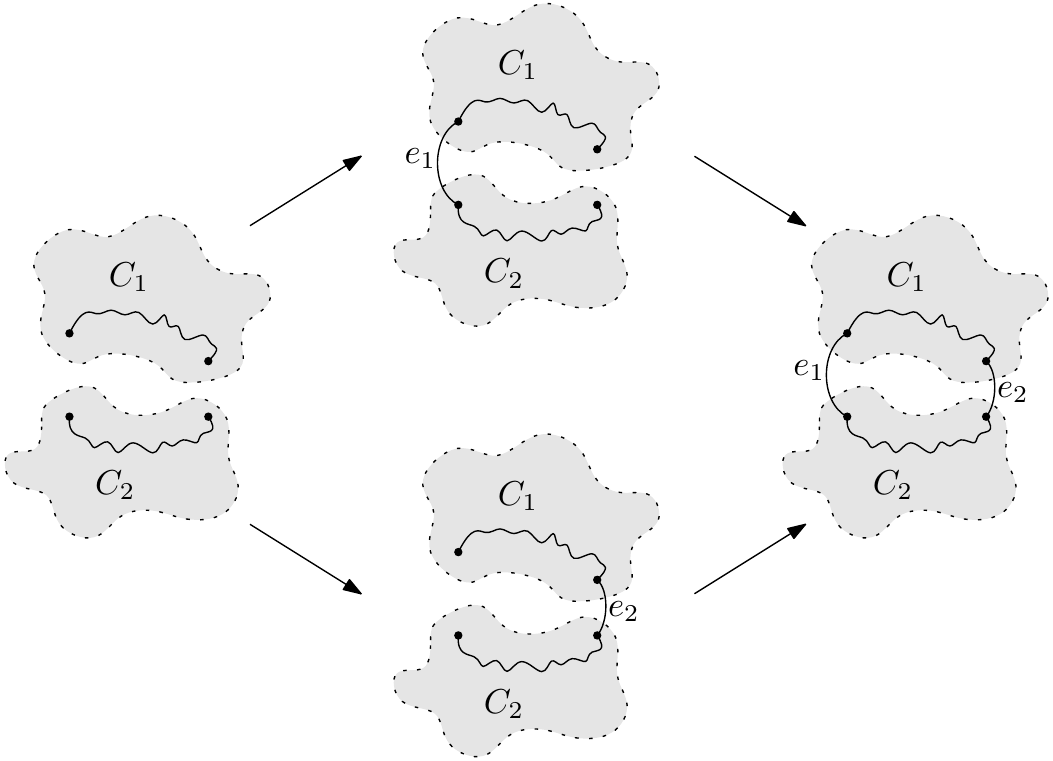}
  \caption{The edges $e_1,e_2$ connect to the same two connected components causing
  the change 
  in an edge-edge switch where $e_1$ is negative and $e_2$ is positive.}
  \label{fig:e1-e2-same-conns}
\end{figure}

  \begin{figure}[!tbh]
  \centering
   \subfloat[]{
        \centering
        \includegraphics[width=0.17\textwidth]{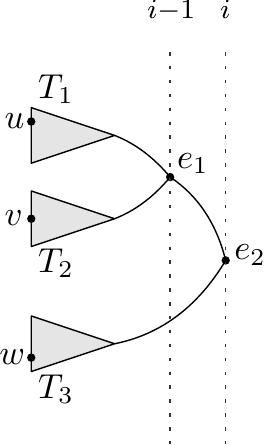} 
        \label{fig:e-e-sw-a}
    }\hspace{5em}
    \subfloat[]{
        \centering
        \includegraphics[width=0.17\textwidth]{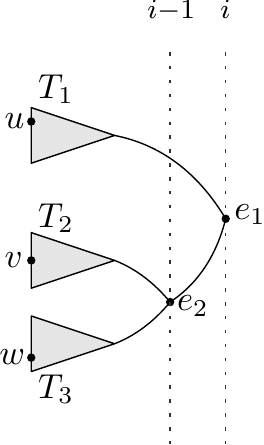} 
        \label{fig:e-e-sw-b}
    }\hspace{5em}
    \subfloat[]{
        \centering
        \includegraphics[width=0.17\textwidth]{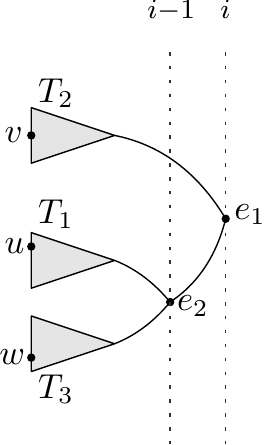} 
        \label{fig:e-e-sw-c}
    }
  \caption{(a) The relevant parts of $\Fpo$ when switching two negative edges in Algorithm~\ref{alg:std-switch-abs} for which the structure of $\Fpo$ changes.
  (b) The changed structure of $\Fpo$ after the switch
  corresponding to the  connecting configuration in Figure~\ref{fig:C1-C2-C3-1}.
  (c) The changed structure of $\Fpo$ after the switch
  corresponding to the  connecting configuration in Figure~\ref{fig:C1-C2-C3-2}.}
  \label{fig:e-e-sw}
\end{figure}

      \medskip
      \item[B.2. $e_1$ and $e_2$ are both  negative:]
    We need to make changes when
    the corresponding node of $e_1$ is a child of the corresponding node of $e_2$
           in $\Fpo$ (see Figure~\ref{fig:e-e-sw}a).
        To further illustrate the situation,
           let $T_1,T_2$ be the subtrees rooted at the two children of $e_1$ in $\Fpo$,
           and let $T_3$ be the subtree rooted at the other child of $e_2$ 
           that is not $e_1$ (as in Figure~\ref{fig:e-e-sw}a).
           Moreover,
           let $u,v,w$ be the leaves at the lowest levels in $T_1,T_2,T_3$ respectively. 
           Without loss of generality, assume that $\idx_\Fcal(v)<\idx_\Fcal(u)$.
           Since $T_1,T_2,T_3$ can be considered as trees in $\MF^{i-1}(\Fcal)$,
           let $C_1,C_2,C_3$ be the connected components of $G_{i-1}$
           corresponding to $T_1,T_2,T_3$ respectively (see Definition~\ref{dfn:MF}).

\begin{figure}[!t]
    \centering
    \subfloat[]{
        \centering
        \includegraphics[width=0.33\textwidth]{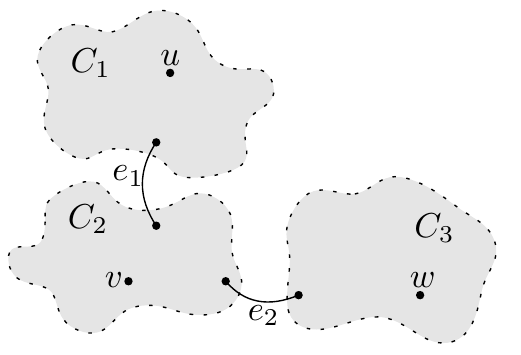} 
        \label{fig:C1-C2-C3-1}
    }
    \hspace{3em}
    \subfloat[]{
        \includegraphics[width=0.33\textwidth]{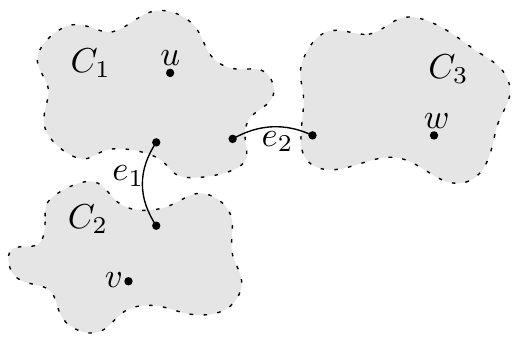} 
        \label{fig:C1-C2-C3-2}
    }
    \caption{Two different ways in which the edges connect the three components $C_1,C_2,C_3$ in $G_{i+1}$
    when switching two negative edges $e_1,e_2$.
    While $e_1$ always connects $C_1$ and $C_2$, $e_2$ could either directly connect $C_2,C_3$ (a) 
    or $C_1,C_3$ (b).}
    \label{fig:C1-C2-C3} 
\end{figure}

           We have that $C_1,C_2,C_3$ are connected by $e_1,e_2$ in $G_{i+1}$
           in the two different ways illustrated in Figure~\ref{fig:C1-C2-C3}.
           For the two different connecting configurations,
           the structure of $\Fpo$ after the switch is different,
           which is  shown in Figure~\ref{fig:e-e-sw}b
           and~\ref{fig:e-e-sw}c.
        Furthermore, if 
               $\idx_\Fcal(w)<\idx_\Fcal(u)$
               and $e_2$ directly connects $C_1,C_3$ as in Figure~\ref{fig:C1-C2-C3-2}, 
               then we swap the paired vertices of $e_1,e_2$
               in $\Pi$.
               (See Section~\ref{sec:alg-std-switch-full} for further details and  justifications.)
           
   \end{description}
\end{description}

\noindent In all cases,
the algorithm also updates the levels of the leaves in $\Fpo$ corresponding to $\sG$ and $\tG$
(if such leaves exist) due to the change of indices for the vertices.
Notice that
the positivity/negativity of simplices can be easily read off from the 
simplex pairing $\Pi$.
\end{algr}

\paragraph{Data structure for merge forests.}
We use the \emph{Depth First Tour Tree} (DFT-Tree) proposed by Farina and Laura~\cite{farina2015dynamic}
to implement the merge forest $\Fpo$, which supports the following operations:

\begin{itemize}
    \item $\textsc{root}(v)$: Returns the root of the tree containing node $v$.
    \item $\textsc{cut}(v)$: Deletes the edge connecting node $v$ to its parent.
    \item $\textsc{link}(u,v)$: Makes the root of the tree containing node $v$ be a child of node $u$.
    \item $\textsc{nca}(u,v)$: Returns the nearest common ancestor of two nodes $u,v$
    in the same tree.
    \item $\textsc{change-val}(v,x)$: Assigns the value associated to a leaf $v$ to be $x$.
    \item $\textsc{subtree-min}(v)$: Returns the leaf with the minimum associated value in the subtree rooted at $v$.
\end{itemize}

Let $N$ be the number of nodes.
All above operations in DFT-Tree
take $O(\log N)$ time.
Among the operations, $\textsc{root}$ is used to determine whether two nodes in $\Fpo$ are from the same tree;
$\textsc{cut}$ and $\textsc{link}$ are used to make the structural changes as in Figure~\ref{fig:e-e-sw};
$\textsc{change-val}$ is used to record (and update during switches) 
the levels of leaves;
$\textsc{subtree-min}$ is used to return the leaf at the lowest level in a subtree.

\paragraph{Detecting cycles.}

Algorithm~\ref{alg:std-switch-abs}  needs to check whether 
the edge $e_1$ resides in a 1-cycle in $G_{i+1}$ (see the edge-edge switch).
We show how to check this by dynamically maintaining 
a minimum spanning forest (MSF) for $G:=G_m$ using the Link-Cut tree~\cite{sleator1981data}
in $O(\log m)$ time per switch.
First,
let the weight of each edge $e$ of $G$
be the index of $e$ in $\Fcal$,
i.e., if $e$ is added to a graph $G_j$ in $\Fcal$,
then $w(e)=j$.
Moreover,
for a path 
in a graph,
define its \emph{bottleneck weight} as the maximum weight of edges on the path.
By representing the MSF of $G$ as a Link-Cut tree,
we use the following procedure to
check whether 
$e_1$ resides in a 1-cycle in $G_{i+1}$
when switching two edges $e_1,e_2$
with $e_1$ negative and $e_2$ positive:
\begin{enumerate}
    \item 
    Let $e_2=(u,v)$.
    Retrieve the bottleneck weight $w_\infty$ of the path connecting
  $u,v$ in the MSF of $G$.
  This can be done in $O(\log m)$ time
  using the Link-Cut tree.
  \item
  If $w_\infty$ equals $i-1$ (which is the weight of $e_1$), 
  then $e_1$ is in a 1-cycle in $G_{i+1}$;
  otherwise, $e_1$ is not in a 1-cycle in $G_{i+1}$. 
\end{enumerate}

To see the correctness of the above procedure,
  first consider the case that $e_1$ is in a 1-cycle in $G_{i+1}$.
  We must have that $e_1,e_2$ connect to the same two components
  as shown in Figure~\ref{fig:e1-e2-same-conns}
  (see the proof of Proposition~\ref{prop:e-e-switch-pos-neg}).
  Consider running the Kruskal's algorithm on the sequence of edges in $\Fcal$:
  It must be true that 
  $u,v$ first become connected 
  after $e_1$ is added to the partial MSF maintained by the Kruskal's algorithm.
  Therefore, the bottleneck weight of the path connecting $u,v$ 
  in the MSF of $G$ 
  equals $i-1$ which is the weight of $e_1$.
  Now consider the case that $e_1$ is not in a 1-cycle in $G_{i+1}$.
  The addition of $e_2$ creates a 1-cycle in $G'_i$ because $e_2$
  is positive in $\Fcal'$.
  So we have that $u,v$
  are already connected in $G_{i-1}$.
  Consider running the Kruskal's algorithm on the sequence of edges in $\Fcal$.
  Before adding $e_1$ to the partial MSF in Kruskal's algorithm,
  $u,v$ are already connected by a path in the partial MSF.
  Therefore, the bottleneck weight of the path connecting $u,v$ 
  in the MSF of $G$ 
  must be less than $i-1$.

For the switch from $\Fcal$ to $\Fcal'$,
we also need to update the MSF of $G$
due to the weight change on certain edges.
We first notice that switching a vertex and an edge
does not change the structure of the MSF
and the update is easy.
For a switch of two edges $e_1,e_2$, we do the following 
in the different cases:
\begin{description}
  \item[$e_1,e_2$ are both positive:]
  After the switch, $e_1$ and $e_2$ are still positive.
  Since only negative edges appear in the MSF of $G$ (see, e.g., the Kruskal's algorithm), 
  they are not in the MSF of $G$ 
  before and after the switch
  and therefore
  the MSF stays the same.
  \item[$e_1$ is positive and $e_2$ is negative:]
  Based on Proposition~\ref{prop:e-e-switch-pos-neg}
  in  Section~\ref{sec:pf-e-e-switch}, 
  $e_1$ stays positive and $e_2$ stays negative after the switch.
  Therefore, the fact that
  $e_1$ is not in the MSF and $e_2$ is in the MSF
  does not change before and after the switch.
  So we have that the structure of the MSF stays the same.
  Notice that due to the switch, the weight of $e_1$ in the MSF 
  changes, whose update can be done by
  cutting $e_1$ first and then re-linking it.
  \item[$e_1,e_2$ are both negative:]
  After the switch, $e_1$ and $e_2$ are still negative.
  The structure of the MSF stays the same because 
   $e_1,e_2$ are both in the MSF before and after the switch.
   Notice that we also need to update the weights for the two edges
in the Link-Cut tree as done previously.
  \item[$e_1$ is negative and $e_2$ is positive:]
  As in Algorithm~\ref{alg:std-switch-abs},
  we have the following subcases:
  \begin{description}
  \item[$e_1$ is in a 1-cycle in $G_{i+1}$:]
  In this case, $e_1$ becomes positive and $e_2$ becomes negative
  after the switch (see Proposition~\ref{prop:e-e-switch-pos-neg}).
  Therefore, after the switch, we should delete $e_1$
  from the MSF
  and add $e_2$ to the MSF.
  This can be done by cutting $e_1$ and then linking $e_2$ in the Link-Cut tree.
  \item[$e_1$ is not in a 1-cycle in $G_{i+1}$:]
  In this case, $e_1$ stays negative and $e_2$ stays positive
  after the switch (see Proposition~\ref{prop:e-e-switch-pos-neg}).
  The structure of the MSF does not change and we only 
  need to update the weight of $e_1$ as done previously.
  \end{description}

\end{description}

\begin{remark}
Updating the MSF for $G$  
can be done in $O(\log^4 m)$ amortized time
by directly utilizing the Dynamic-MSF data structure
proposed by Holm et al.~\cite{holm2001poly}.
However, according to the  cases described above,
our update on the MSF is not `fully-dynamic'
because it only entails simple operations such as 
changing the weights by one unit for one or two edges and possibly switching two edges
whose weights are consecutively ordered.
Therefore, we could achieve an $O(\log m)$ update
in the worst case simply using the Link-Cut tree~\cite{sleator1981data}.
\end{remark}

\paragraph{Determining different connecting configurations.}
We also use the MSF of $G$ (represented by Link-Cut trees) 
to determine the different connecting configurations
in Figure~\ref{fig:C1-C2-C3} for the relevant case in Algorithm~\ref{alg:std-switch-abs}.
Specifically, if $u,w$  in Figure~\ref{fig:C1-C2-C3}
are already connected in $G'_i$, then the configuration in
Figure~\ref{fig:C1-C2-C3-2} applies;
otherwise, the the configuration in
Figure~\ref{fig:C1-C2-C3-1} applies.

To see how to use the MSF to determine the connectivity of vertices
in a filtration, 
we first rephrase Proposition 21 in the full version\footnote{\url{https://arxiv.org/pdf/2103.07353.pdf}} of~\cite{dey2021computing} as follows: 
\begin{proposition}\label{prop:conn-msf}
For a graph filtration $\Lcal:H_1\incto H_2\incto \cdots \incto H_s$, 
assign a weight $\idx_\Lcal(e)$ to each edge $e$ in $H:=H_s$.
For two vertices $x,y$ that are connected in $H$, 
let $\pi$ be the unique path connecting $x,y$ in the unique minimum spanning forest
of $H$
and let $w_\infty$ be the bottleneck weight of $\pi$.
Then,
the index of the first graph in $\Lcal$ where $x,y$
are connected equals $w_\infty+1$.
\end{proposition}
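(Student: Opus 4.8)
The plan is to reduce the claim to the classical minimum-bottleneck-path property of minimum spanning trees and then to transfer it to the filtration $\Lcal$ via Kruskal's algorithm. First I would record that the weights $\idx_\Lcal(e)$ are pairwise distinct, since distinct edges of $H$ receive distinct indices; this is exactly what makes the minimum spanning forest of $H$ unique, so the path $\pi$ and its bottleneck weight $w_\infty$ are well defined. I also make explicit the indexing convention implicit in the statement: an edge $e$ lies in $H_k$ precisely when $\idx_\Lcal(e)\le k-1$.

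The main structural observation is that, since the weights increase along $\Lcal$, processing the edges of $H$ in order of increasing weight coincides with processing them in the order in which they appear in $\Lcal$. Running Kruskal's algorithm in this order builds the unique minimum spanning forest of $H$, and after it has examined exactly the edges $e$ with $\idx_\Lcal(e)\le k-1$ the set of edges it has kept is precisely the set of minimum-spanning-forest edges of weight at most $k-1$; moreover the connected components of this partial forest coincide with those of $H_k$, because the edges Kruskal discards are exactly those closing a cycle and discarding them does not change connectivity. Since the minimum spanning forest is acyclic, the only candidate path between $x$ and $y$ inside it is $\pi$; hence $x$ and $y$ are connected in $H_k$ if and only if every edge of $\pi$ has weight at most $k-1$, i.e.\ $w_\infty\le k-1$. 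The smallest such $k$ is therefore $w_\infty+1$, as claimed.

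The one step carrying actual content is the fact, used above, that the heaviest edge of the spanning-forest path $\pi$ is the bottleneck of \emph{every} $x$--$y$ path in $H$ --- equivalently, that when Kruskal's run merges $x$ and $y$ it does so exactly upon inserting the heaviest edge of $\pi$. I would prove this by the usual cut-exchange argument: if some $x$--$y$ path $\gamma$ in $H$ had strictly smaller bottleneck than $\pi$, then deleting the heaviest edge $e^\star$ of $\pi$ from the forest would separate $x$ from $y$, forcing $\gamma$ to contain an edge crossing this cut; that edge has weight smaller than $w(e^\star)$, and swapping it in for $e^\star$ yields a spanning forest of strictly smaller total weight, contradicting minimality. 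I expect this exchange argument to be the only mildly nontrivial point; everything else is bookkeeping about the off-by-one between ``$e$ has index $j$'' and ``$e$ is present in $H_k$ for all $k\ge j+1$''.
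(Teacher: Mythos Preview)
Your argument is correct. Note, however, that the paper does not actually prove this proposition: it merely restates Proposition~21 from the full version of~\cite{dey2021computing} and uses it as a black box. Your Kruskal-based argument is the standard one and is exactly what underlies that cited result.

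One small remark: your final paragraph is not needed. Once you have observed that (i) the partial Kruskal forest after processing weights $\le k-1$ has the same components as $H_k$, and (ii) this partial forest is a subgraph of the full (acyclic) MSF in which $\pi$ is the \emph{only} $x$--$y$ path, the equivalence ``$x,y$ connected in $H_k$ $\Leftrightarrow$ $w_\infty\le k-1$'' is immediate. The stronger min-bottleneck-path property of spanning trees (that $\pi$ beats every $x$--$y$ path in $H$, not just those inside the forest) is a true and related fact, but you do not actually invoke it anywhere in the preceding paragraph, so the cut-exchange argument can be dropped.
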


For two vertices $x,y$ in a graph filtration $\Lcal$, 
checking whether $x,y$ are connected in a graph $H_j$ in $\Lcal$
boils down to finding the first graph $H_{j_*}$ in $\Lcal$ where $x,y$ are connected,
and then checking whether $H_j$ is before or after $H_{j_*}$.
This can be done in $O(\log m)$ time by finding the bottleneck weight 
of the path connecting $x,y$ in the MSF of $H$
based on Proposition~\ref{prop:conn-msf}.

\paragraph{Time complexity.}
The costliest
steps of Algorithm~\ref{alg:std-switch-abs}
are the operations on the DFT-Tree and Link-Cut tree.
Hence, Algorithm~\ref{alg:std-switch-abs} takes $O(\log m)$ time in the worst case.

\paragraph{Preprocessing.}
In order to perform a sequence of updates on a given filtration,
we also need to construct the data structures maintained by 
Algorithm~\ref{alg:std-switch-abs}
for the initial filtration.
Constructing $\Pi$ is nothing but computing standard persistence pairs on graphs,
which takes $O(m\,\aG(m))$ time using Union-Find.
Constructing $\Fpo$ entails continuously performing the $\textsc{link}$ operations
on the DFT-Tree,
which takes $O(m\log m)$ time.
For constructing the MSF, 
we simply run the Kruskal's algorithm,
which can be done in $O(m\log m)$ time.
Hence, the preprocessing takes $O(m\log m)$ time.

\subsection{Full details and justifications for Algorithm~\ref{alg:std-switch-abs}}
\label{sec:alg-std-switch-full}
We present the full details of Algorithm~\ref{alg:std-switch-abs} as follows:

\begin{algr}[Full details of Algorithm~\ref{alg:std-switch-abs}]\label{alg:std-switch-full}
For the switch operation in Equation~(\ref{eqn:std-switch}),
the algorithm maintains a merge forest $\Fpo$
(which initially represents $\MF(\Fcal)$)
and a pairing of simplices $\Pi$
(which initially corresponds to $\Fcal$).
The algorithm makes changes
to $\Fpo$ and $\Pi$
so that they correspond to $\Fcal'$
after the processing.
Specifically, 
it does the following according to different cases:

\smallskip\noindent
\underline{If the switch is a vertex-edge switch or an edge-vertex switch},
then do nothing.

\smallskip
\noindent \underline{If the switch is a vertex-vertex switch}, let 
    $v_1:=\sG$, $v_2:=\tG$.
    If $v_1,v_2$ are in the same tree in $\Fpo$,
    then do the following:
        \begin{itemize}
            \item
            Find the \emph{nearest common ancestor} $x$ of $v_1,v_2$ in $\Fpo$
            and let $e$ be the edge corresponding to $x$.
            If both of the following are true:
            \begin{itemize}
                \item $v_1$ is unpaired in $\Pi$ or $v_1$ is paired with an $e_1$ in $\Pi$ s.t.\ $\idx_\Fcal(e_1)\geq \idx_\Fcal(e)$
                \item $v_2$ is unpaired in $\Pi$ or $v_2$ is paired with an $e_2$ in $\Pi$ s.t.\ $\idx_\Fcal(e_2)\geq \idx_\Fcal(e)$
            \end{itemize}
            then  swap the paired simplices of $v_1,v_2$ in $\Pi$. Notice that 
            $v_1$ or $v_2$ may be unpaired in $\Pi$, e.g., 
            we could have that $v_1$ is paired with $e_1$ and $v_2$ is unpaired,
            in which case $v_2$ becomes paired with $e_1$
            and $v_1$ becomes unpaired after the swap.
            
        \end{itemize}

\noindent \underline{If the switch is an edge-edge switch}, let 
    $e_1:=\sG$, $e_2:=\tG$.
  We have the following sub-cases:
   \begin{description}
       \item[$e_1$ and $e_2$ are both  positive:]
       Do nothing.
       \item[$e_1$ is positive and $e_2$ is negative:]
       Do nothing.
       \item[$e_1$ is negative and $e_2$ is positive:]
       If $e_1$ is in a 1-cycle in $G_{i+1}$,
       then: let the node corresponding to $e_1$ in $\Fpo$
       now correspond to $e_2$;
       let the vertex paired with $e_1$ in $\Pi$ now be paired with $e_2$;
       let $e_1$ be unpaired in $\Pi$. 
       \item[$e_1$ and $e_2$ are both  negative:]
    If (the corresponding node of) $e_1$ is a child of (the corresponding node of) $e_2$
           in $\Fpo$, then do the following:
       \begin{itemize}
           \item
           Let $T_1,T_2$ be the subtrees rooted at the two children of $e_1$ in $\Fpo$.
           Furthermore, let $c\neq e_1$ be the other child of $e_2$,
           and let $T_3$ be the subtree rooted at $c$ (see Figure~\ref{fig:e-e-sw}a). 
           Since $T_1,T_2,T_3$ can be considered as trees in $\MF^{i-1}(\Fcal)$,
           let $C_1,C_2,C_3$ be the connected components of $G_{i-1}$
           corresponding to $T_1,T_2,T_3$ respectively.
           We have that $C_1,C_2,C_3$ are connected by $e_1,e_2$ in $G_{i+1}$
           in the two different ways illustrated in Figure~\ref{fig:C1-C2-C3}.
           
           Let $u,v,w$ be the leaves at the lowest (smallest) levels in $T_1,T_2,T_3$ respectively. 
           WLOG, assume that $\idx_\Fcal(v)<\idx_\Fcal(u)$.
           We further have the following cases:
           \begin{itemize}
               \item
               If $e_2$ directly connects $C_2,C_3$ as in Figure~\ref{fig:C1-C2-C3-1},
               then let the roots of $T_2,T_3$ be the children of $e_2$,
               and let $e_2$ and the root of $T_1$ be the children of $e_1$ in $\Fpo$ (see Figure~\ref{fig:e-e-sw}b).
               \item
               If $e_2$ directly connects $C_1,C_3$ as in Figure~\ref{fig:C1-C2-C3-2},
               then let the roots of $T_1,T_3$ be the children of $e_2$,
               and let $e_2$ and the root of $T_2$ be the children of $e_1$ in $\Fpo$ (see Figure~\ref{fig:e-e-sw}c).
               Moreover,
               if 
               $\idx_\Fcal(w)<\idx_\Fcal(u)$, then  swap the paired vertices of $e_1,e_2$
               in $\Pi$.
           \end{itemize}
       \end{itemize}
   \end{description}

\noindent In all cases,
the algorithm also updates the levels of the leaves in $\Fpo$ corresponding to $\sG$ and $\tG$
(if such leaves exist) due to the change of indices for the vertices.
Notice that
the positivity/negativity of simplices can be easily read off from the 
simplex pairing $\Pi$.
\end{algr}

In the rest of the section, we justify the correctness of Algorithm~\ref{alg:std-switch-abs}
for all cases. 

\subsubsection{Justification for vertex-vertex switch} 

\begin{proposition}\label{prop:std-vv-pi}
For the switch operation in Equation~(\ref{eqn:std-switch})
where $v_1:=\sG$ and $v_2:=\tG$ are vertices ($v_1,v_2$ are thus both positive),
the pairings for $\Fcal$ and $\Fcal'$  change
if and only if the following two 
conditions hold:
\begin{enumerate}
    \item $v_1$ and $v_2$ are in the same tree in $\MF(\Fcal)$;
    \item $v_1$ and $v_2$ are both  unpaired when 
    $e$ is added in $\Fcal$, where $e$ is 
    the edge corresponding to 
    the nearest common ancestor $x$ of $v_1,v_2$ in $\MF(\Fcal)$.
\end{enumerate}
\end{proposition}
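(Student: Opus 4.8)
The plan is to analyze the switch of two positive vertices $v_1,v_2$ by tracking exactly how the standard persistence pairing is produced by the Union-Find / elder-rule algorithm, and showing the pairing is unaffected unless the two stated conditions hold. Recall that in a standard graph filtration the pairing is entirely governed by the merge forest: a negative edge $e=(x,y)$ merging components $C,C'$ is paired with the \emph{younger} of the two representatives of $C,C'$ (the representative being the oldest unpaired vertex in each component), and this younger representative becomes negative while the older one survives as the representative of the merged component. Since $v_1,v_2$ are consecutive in the filtration (indices $i-1$ and $i$ in $\Fcal$), swapping them changes only the relative age of $v_1$ versus $v_2$; the age of either relative to every other vertex, and the set of edges and their order, are all unchanged. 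Hence the only pairings that can possibly change are those of negative edges $e$ for which, at the moment $e$ is added, $v_1$ and $v_2$ are the two competing representatives of the two components being merged — equivalently, the negative edge at the nearest common ancestor of $v_1$ and $v_2$ in $\MF(\Fcal)$.

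First I would establish the ``only if'' direction. Suppose the pairing changes. By the remark preceding Algorithm~\ref{alg:std-switch-abs} (the general vineyard dichotomy for switching two simplices of the same sign), the only possible change is that $v_1$'s and $v_2$'s partners get swapped; so some negative edge $e$ that was paired with one of $v_1,v_2$ is, after the switch, paired with the other. For $e$ to be paired with $v_j$ in $\Fcal$, the component of $v_j$ just before $e$ is added must have $v_j$ as its representative, i.e. $v_j$ is unpaired at that point and is the oldest such vertex in its component; and for the switch to reassign the partner, the \emph{other} vertex must be the competing representative in the other component merged by $e$. This forces: $v_1$ and $v_2$ lie in the two distinct components merged by $e$, they are both unpaired just before $e$, and $e$ is the first edge connecting them — which is exactly saying $e$ corresponds to the nearest common ancestor $x$ of $v_1,v_2$ in $\MF(\Fcal)$, establishing conditions (1) and (2). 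I would also check the degenerate subcase where one of $v_1,v_2$ is globally unpaired: then $e$'s would-be partner is the other vertex, and the argument is the same with one ``partner'' being $null$.

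Next, the ``if'' direction: assuming (1) and (2), I would show the pairing genuinely changes. Let $e$ be the edge at the nearest common ancestor $x$, merging components $C_1 \ni v_1$ and $C_2 \ni v_2$ just before $e$ is added. Condition (2) says both $v_1,v_2$ are unpaired there; I must argue that each is in fact its component's representative at that moment, i.e. the \emph{oldest} unpaired vertex in its component. This is where I expect the main obstacle: it requires knowing that $v_1$ is the unique unpaired vertex of $C_1$ and $v_2$ the unique unpaired vertex of $C_2$ — and indeed in a graph filtration every component has exactly one unpaired (representative) vertex at all times, so ``unpaired in this component'' already means ``the representative.'' With that in hand, the elder rule pairs $e$ with the younger of $v_1,v_2$. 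In $\Fcal$ we have $\idx_\Fcal(v_1)<\idx_\Fcal(v_2)$ (since $v_1=\sG$ comes first), so $\Fcal$ pairs $e$ with $v_2$ and keeps $v_1$ as representative; in $\Fcal'$ the order is reversed, so $e$ is paired with $v_1$ and $v_2$ survives. These two outcomes differ, so the pairing changes. Finally I would note that no \emph{other} pair can change: for any negative edge $e'\neq e$, the two representatives competing at $e'$ are the same in $\Fcal$ and $\Fcal'$ (the only vertices whose relative age changed are $v_1,v_2$, and they never simultaneously represent the two sides of any merge other than at $x$), and the merge-forest structure is likewise unchanged, which also justifies the claim in Algorithm~\ref{alg:std-switch-abs} that $\Fpo$ is structurally untouched in the vertex-vertex case.
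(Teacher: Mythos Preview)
Your ``if'' direction is correct and essentially identical to the paper's: both use that each connected component has a unique unpaired vertex (its oldest one), so condition~(2) forces $v_1,v_2$ to be the two competing representatives at the NCA edge, and the elder rule then pairs that edge with different vertices in $\Fcal$ and $\Fcal'$.

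Your ``only if'' direction, however, has a genuine gap. You pick \emph{some} negative edge $e$ whose partner switches from one of $v_1,v_2$ to the other, and then assert that ``the other vertex must be the competing representative in the other component merged by $e$.'' This is not justified and is in fact false for some choices of $e$. Consider
\[
v_0 \;\hookrightarrow\; v_1 \;\hookrightarrow\; v_2 \;\hookrightarrow\; (v_1,v_2) \;\hookrightarrow\; (v_0,v_1),
\]
and take $e=(v_0,v_1)$. In $\Fcal$ this $e$ is paired with $v_1$; after the switch it is paired with $v_2$. But at the moment $e$ is added, $v_1$ and $v_2$ lie in the \emph{same} component $\{v_1,v_2\}$, not in the two distinct components merged by $e$; the switch simply changes which of them is that component's representative. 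So your inference that $e$ is the first edge connecting $v_1,v_2$ (hence the NCA) does not follow. The proposition is still true in this example---the NCA edge $(v_1,v_2)$ does satisfy condition~(2)---but your argument does not locate it.

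The paper avoids this by arguing the contrapositive directly at the NCA edge: if, say, $v_1$ is already paired when the NCA edge is added, then $v_1$ is not the oldest vertex of its component $C_1$ there, and since $v_2\notin C_1$ the relative order of vertices inside $C_1$ is untouched by the switch, so $v_1$'s pairing cannot change. To repair your approach you would need to either pick $e$ more carefully (e.g., the earliest edge whose partner changes, and argue that one must be the NCA) or treat the case where $v_1,v_2$ sit in the same component at $e$ and trace the change back to the NCA.
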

\begin{proof}
Suppose that the two conditions hold.
Let $j=\idx_\Fcal(e)$, i.e.,
$e$ is added to $G_j$ to form $G_{j+1}$
and $x$ is at level $j$.
Based on the definition of nearest common ancestors,
we have that
$v_1,v_2$ are descendants of different children of $x$.
Let $T_1,T_2$ be the two trees rooted at the two children of $x$
in $\MF(\Fcal)$ respectively.
WLOG, we can assume that $v_1$ is in $T_1$ and $v_2$ is in $T_2$ (see Figure~\ref{fig:v-v-sw}).
Since $x$ is at level $j$,
we can view
$T_1,T_2$ as trees
in $\MF^j(\Fcal)$.
Let $C_1,C_2$ be the connected components in $G_j$
corresponding to $T_1,T_2$ respectively
(see Definition~\ref{dfn:MF}).
We have that $v_1\in C_1$ and $v_2\in C_2$.
We then observe that as the simplices are added in a graph filtration,
each connected component contains only one unpaired vertex 
which is the  oldest one~\cite{edelsbrunner2000topological}.
Since $v_1$, $v_2$ are both unpaired when $e$ is added to $\Fcal$,
we must have that $v_1$ is the oldest vertex of $C_1$
and $v_2$ is the oldest vertex of a $C_2$.
Then, when $e$ is added in $\Fcal$,  $v_2$ must be paired with $e$
because $v_2$ is   younger than $v_1$
(see the pairing in the persistence algorithm~\cite{edelsbrunner2000topological}).
However, after the switch, 
$v_1$ must be paired with $e$ in $\Fcal'$
because $v_1$ is now  younger than $v_2$.
Therefore, the pairing changes after the switch 
and we have finished the proof of the `if' part of the proposition.

We now prove the `only if' part of the proposition.
Suppose that the pairing changes after the switch.
First, if $v_1$, $v_2$ are in different trees in $\MF(\Fcal)$,
then the two vertices are in different connected components in  $G:=G_m$.
The pairings for $v_1$ and $v_2$ are completely independent in the filtrations and 
therefore cannot change due to the switch.
Then,
let $T$ be the subtree of $\MF(\Fcal)$ rooted at $x$
and let $j=\idx_\Fcal(e)$.
Similarly as before,
we have that $v_1$ is in a connected component $C_1$ of $G_j$
and $v_2$ is in a connected component $C_2$ of $G_j$
for $C_1\neq C_2$.
For contradiction, suppose instead that 
at least one of
$v_1$, $v_2$ is paired when $e$ is added to $G_j$ in $\Fcal$.
If $v_1$ is paired  when $e$ is added in $\Fcal$,
then let $u$ be the oldest vertex of $C_1$.
Notice that $u\neq v_1$ because the oldest vertex of $C_1$ must be unpaired
when $e$ is added in $\Fcal$.
We notice that the pairing for vertices in $C_1\setminus\Set{u}$
only depends on the index order of these vertices in the filtrations,
before and after the switch.
Since the indices for simplices in a filtration are unique
and $v_2\not\in C_1$, 
changing the index of $v_1$ from $i-1$ to $i$
does not change 
the index order of vertices in $C_1\setminus\Set{u}$.
Therefore, the pairing for vertices in $C_1\setminus\Set{u}$
stays the same after the switch, 
which means that the pairing of $v_1$ does not change.
This contradicts the assumption that the pairing for $v_1$,
$v_2$ changes due to the switch.
If $v_2$ is paired  when $e$ is added in $\Fcal$,
we can reach a similar contradiction,
and the proof is done.
\end{proof}

\begin{proposition}\label{prop:std-vv-mf}
For the switch operation in Equation~(\ref{eqn:std-switch})
where $v_1:=\sG$ and $v_2:=\tG$ are both vertices,
$\MF(\Fcal)$ and $\MF(\Fcal')$ have the same structure,
with the only difference being on the levels of $v_1$, $v_2$
due to the index change of simplices.
\end{proposition}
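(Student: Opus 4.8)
The plan is to argue directly from the constructive definition of the merge forest (Definition~\ref{dfn:MF}), tracking how the levelwise construction $\MF^0 \subseteq \MF^1 \subseteq \cdots$ is affected when we swap the order in which the two vertices $v_1 = \sG$ and $v_2 = \tG$ are inserted. Since $v_1$ and $v_2$ are vertices, they contribute leaves; the only insertions whose behavior could conceivably differ between $\Fcal$ and $\Fcal'$ are those at levels $i-1$ and $i$, and all other insertions (vertices, positive edges, negative edges) depend only on the set of connected components present, not on the internal index order of their vertices. So the first step is to observe that $\MF^{i-1}(\Fcal) = \MF^{i-1}(\Fcal')$ as labelled forests, and that $G_{i-1}$ is the same graph in both filtrations; likewise $G_{i+1}$ is the same graph in both, so $\MF^{j}(\Fcal)$ and $\MF^{j}(\Fcal')$ for $j \ge i+1$ are built from the same graphs by the same rule and hence will agree once we show they agree at step $i+1$.

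The second step is the local analysis at levels $i-1, i$. In $\Fcal$ we add the isolated leaf $v_1$ at level $i-1$, then the isolated leaf $v_2$ at level $i$; in $\Fcal'$ we add $v_2$ at level $i-1$, then $v_1$ at level $i$. In either case the underlying forest after both insertions is $\MF^{i-1}$ together with two new isolated leaves, one carrying label $v_1$ and one carrying label $v_2$; the only difference is which leaf sits at level $i-1$ and which at level $i$. Thus $\MF^{i+1}(\Fcal)$ and $\MF^{i+1}(\Fcal')$ have identical graph structure, and their level functions differ exactly by swapping the levels of the two leaves $v_1, v_2$ — which is precisely the asserted conclusion. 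Combined with Step~1's remark that everything from level $i+1$ onward is determined by $G_{i+1}$ and the common forest $\MF^{i+1}$, this propagates to $\MF(\Fcal)$ versus $\MF(\Fcal')$.

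I expect no serious obstacle here; the proposition is essentially a bookkeeping statement. The one point that needs a clean word is the claim that the negative-edge and positive-edge steps of Definition~\ref{dfn:MF} are insensitive to the internal index order of vertices within components: a negative edge $(u,v)$ at level $j$ acts on the two \emph{trees} containing the leaves $u$ and $v$, and which two trees those are depends only on the partition of vertices into components of $G_j$, which is unchanged. One should also note in passing that $\sG \not\subseteq \tG$ is automatic here since both are vertices, so $G'_i$ is a valid graph and the switch is well-defined; and that the positivity of $v_1, v_2$ (both are vertices, hence positive) together with Proposition~\ref{prop:std-vv-pi} is not actually needed for this structural statement, only the combinatorics of the construction. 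I would therefore write the proof as: (1) reduce to levels $i-1,i$ by the insensitivity remark; (2) do the two-line local comparison; (3) invoke determinacy of the tail.
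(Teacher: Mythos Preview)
Your proposal is correct and follows essentially the same idea as the paper's proof: both rest on the observation that the structure of the merge forest is determined solely by how edges merge connected components, so swapping two vertex insertions cannot affect it. The paper states this in a single sentence, whereas you unpack it carefully via the levelwise construction of Definition~\ref{dfn:MF}; your version is more detailed but not substantively different.
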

\begin{proof}
The structure of the merge forest for a graph filtration
only depends on how the edges merge different connected components for the filtration.
Thus, switching two vertices does not alter the structure of the merge forest.
\end{proof}

\subsubsection{Justification for vertex-edge switch} 
\label{sec:pf-v-e-switch}
We have the following Proposition~\ref{prop:std-ve}:
\begin{proposition}\label{prop:std-ve}
For the switch operation in Equation~(\ref{eqn:std-switch})
where $v:=\sG$ is a vertex and $e:=\tG$ is an edge,
$\MF(\Fcal)$ and $\MF(\Fcal')$ have the same structure,
with the only difference being on the levels of $v$ and (possibly) $e$
due to the index change of simplices.
Moreover,  the pairings for $\Fcal$ and $\Fcal'$ stay the same.
\end{proposition}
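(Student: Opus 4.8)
The plan is to build both $\MF(\Fcal)$ and $\MF(\Fcal')$ (and compute both pairings) by the usual incremental construction, run the two constructions in parallel, and observe that the hypothesis $\sG\not\subseteq\tG$ (i.e.\ $v$ is not an endpoint of $e$) makes the addition of $v$ and the addition of $e$ non-interfering, so that swapping their order affects nothing but their levels/indices. First I would record the elementary facts: $\Fcal$ and $\Fcal'$ agree on $G_0\incto\cdots\incto G_{i-1}$ and on $G_{i+1}\incto\cdots\incto G_\filtcnt$, and differ only at the middle graph, since $G_i=G_{i-1}\cup\{v\}$ while $G'_i=G_{i-1}\cup\{e\}$. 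Because $v\not\subseteq e$, both endpoints of $e$ already lie in $G_{i-1}$, so $e$ is legitimately addable to $G_{i-1}$; moreover, adjoining the isolated vertex $v$ changes no pairwise connectivity among vertices already present. Hence $e$'s two endpoints lie in the same connected component in $G_{i-1}$, in $G_i$, and in $G'_i$ simultaneously; in particular $e$ is positive (resp.\ negative) in $\Fcal$ iff it is so in $\Fcal'$, and when negative it merges the same two components in either filtration.

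For the merge forest I would argue by the inductive construction of Definition~\ref{dfn:MF}. We have $\MF^{i-1}(\Fcal)=\MF^{i-1}(\Fcal')=:M$. In $\Fcal$ one next attaches an isolated leaf $v$ at level $i-1$ and then, if $e$ is negative, attaches an internal node $e$ at level $i$ as the common parent of the roots of the two trees of $M$ containing $e$'s endpoints; in $\Fcal'$ one first performs this (possible) $e$-merge at level $i-1$ and then attaches the isolated leaf $v$ at level $i$. Since $v$ forms its own tree and is not one of the two trees the $e$-merge touches (again because $v\not\subseteq e$), the two sequences of operations yield the same forest, the only discrepancy being that $v$ sits at level $i-1$ versus $i$, and (when $e$ is negative) $e$ sits at level $i$ versus $i-1$. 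All remaining steps $i+1,\dots,\filtcnt-1$ are identical, so $\MF(\Fcal)$ and $\MF(\Fcal')$ have the same structure up to those level relabelings.

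For the pairing I would invoke the elder-rule/Union-Find description of the standard persistence algorithm~\cite{edelsbrunner2000topological}: each component carries its oldest vertex as representative; a new vertex starts its own component; a positive edge changes nothing; a negative edge is paired with the younger of the two representatives it joins, the older surviving. Running this for $\Fcal$ and $\Fcal'$ in parallel, the two runs reach a common state after processing $G_{i-1}$; the step adding $v$ only creates the singleton component $\{v\}$ with representative $v$, and the step adding $e$ acts solely on the components of $e$'s endpoints, neither of which is $\{v\}$. Thus these two steps commute, after both the runs are again in a common state, and $e$ (if negative) is paired with the same vertex in both. Processing the identical tail from a common state yields identical subsequent pairings; and since indices $i-1$ and $i$ are occupied by $v$ and $e$ in both filtrations (merely swapped), no vertex other than $v$ changes its relative age, so $v$ is paired with the same later negative edge, or left unpaired, in both. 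Hence the pairings coincide, which is the assertion of the proposition.

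The only real point requiring care, and thus the main obstacle, is pinning down exactly where the hypothesis $\sG\not\subseteq\tG$ enters: it is precisely what guarantees that, as a connected component, $v$ is disjoint from the two components joined by $e$, and that $e$ stays a valid addition once $v$ is deleted from the intermediate graph. Invariance of $e$'s sign, commutation of the two construction steps, and stability of the representatives all rest on this single fact; the remainder of the proof is bookkeeping about levels and indices.
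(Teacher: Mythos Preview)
Your argument is correct and, for the merge forest, essentially coincides with the paper's: both run the inductive construction of $\MF$ through index $i+1$, use $\sG\not\subseteq\tG$ to see that the isolated leaf $v$ is disjoint from the two trees merged by $e$, and conclude that the remaining construction is identical.

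For the pairing, however, you take a different route. The paper simply invokes the general vineyard analysis of Cohen-Steiner, Edelsbrunner, and Morozov: when the two transposed simplices have different dimensions, the reduced matrix $R$ cannot lose its reducedness, so the pairing is unchanged. You instead give a direct, graph-specific argument via the elder rule, showing that the Union-Find steps for $v$ and $e$ commute because the singleton $\{v\}$ is disjoint from the components $e$ touches, and then observing that the relative age of $v$ among vertices is unaffected since the only other simplex changing index is the edge $e$. Your approach is more elementary and self-contained (no appeal to the matrix-reduction machinery), at the cost of a few more lines; the paper's approach is a one-line citation but imports a general-complex result into a graph-only setting. Both are valid, and your explicit identification of where $\sG\not\subseteq\tG$ is used is a nice touch that the paper leaves somewhat implicit.
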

\begin{remark*}
Notice that $e$ may not correspond to a node in the merge forests in the above setting.
\end{remark*}
\begin{proof}
The fact that the pairing stays the same follows from~\cite[Section 3]{cohen2006vines},
i.e., the `$R$ matrix' is not reduced iff the two transposed simplices have the same dimension.

For the structure of the merge forests,
we argue on the following cases:
\begin{description}
\item[$e$ is negative:]
Consider $\MF^{i+1}(\Fcal)$ as in Figure~\ref{fig:v-e-sw}.
Since $v$ is not a vertex of $e$,
$v$ must be an isolated node in $\MF^{i+1}(\Fcal)$.
From the figure, it is evident that the structure of 
$\MF^{i+1}(\Fcal)$ and $\MF^{i+1}(\Fcal')$ is the same
where the only change is the levels of $v$ and $e$.
Since the remaining construction of  $\MF(\Fcal)$ and $\MF(\Fcal')$
from $\MF^{i+1}(\Fcal)$ and $\MF^{i+1}(\Fcal')$ (respectively)
follows the same process, we have our conclusion.
\item[$e$ is positive:]
Since adding $e$ does not alter the connected components in $\Fcal$ and $\Fcal'$,
$e$ does not correspond to a node in the merge forests and the proposition is obvious.
\qedhere
\end{description}
\end{proof}

\begin{figure}[!tbh]
  \centering
  \includegraphics[width=0.5\linewidth]{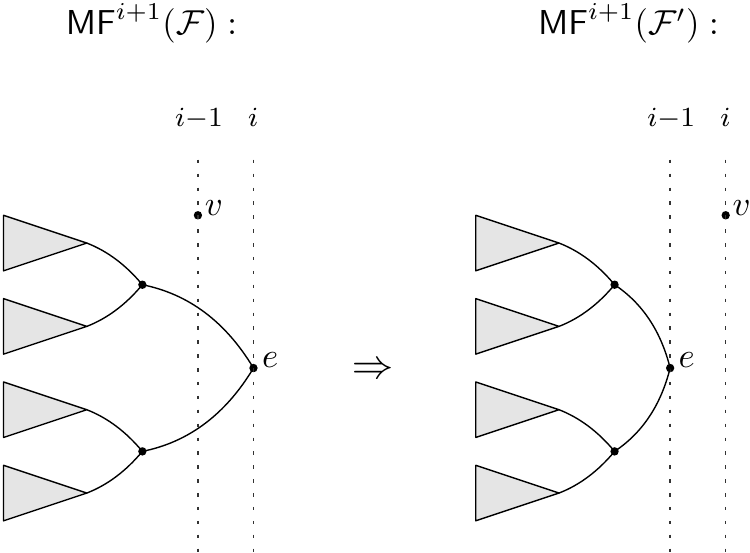}
  \caption{Parts of the sub-forests $\MF^{i+1}(\Fcal)$, $\MF^{i+1}(\Fcal')$. Node level increases from left to right.}
  \label{fig:v-e-sw}
\end{figure}

\subsubsection{Justification for edge-vertex switch}
\label{sec:pf-e-v-switch}
The behavior of an edge-vertex switch is symmetric to a vertex-edge switch
and the justification is similar as done in Section~\ref{sec:pf-v-e-switch}.

\subsubsection{Justification for edge-edge switch}
\label{sec:pf-e-e-switch}
For an edge-edge switch, rewrite the switch in Equation~(\ref{eqn:std-switch})
as follows:
\begin{equation}\label{eqn:std-switch-ee}
\begin{tikzpicture}[baseline=(current  bounding  box.center)]
\tikzstyle{every node}=[minimum width=24em]
\node (a) at (0,0) {$\Fcal: G_0 \incto
\cdots
\incto
G_{i-1}\inctosp{e_1} 
G_i 
\inctosp{e_2} G_{i+1}
\incto
\cdots \incto G_\filtcnt$}; 
\node (b) at (0,-0.6){$\Fcal': G_0 \incto
\cdots
\incto 
G_{i-1}\inctosp{e_2} 
G'_i 
\inctosp{e_1} G_{i+1}
\incto
\cdots \incto G_\filtcnt$};
\path[->] (a.0) edge [bend left=90,looseness=1.5,arrows={-latex},dashed] (b.0);
\end{tikzpicture}
\end{equation}

\begin{proposition}
For the switch operation in Equation~(\ref{eqn:std-switch-ee})
where $e_1,e_2$ are positive  in $\Fcal$,
$\MF(\Fcal')=\MF(\Fcal)$ and the pairings for the two filtrations stay the same.
\end{proposition}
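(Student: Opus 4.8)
The plan is to show that both claims follow because neither the merge forest structure nor the persistence pairing depends on the relative order of two consecutive \emph{positive} edges. For the pairing, I would invoke the general transposition analysis of Cohen-Steiner, Edelsbrunner and Morozov~\cite{cohen2006vines} exactly as was done in the proof of Proposition~\ref{prop:std-ve}: when two simplices of the same dimension are transposed and both are positive, the ``$R$ matrix'' stays reduced, so the reduction---and hence the pairing---is unchanged. (Here $e_1,e_2$ both being positive means both are unpaired on the left in $\Fcal$ at the moment they are added; after the switch they remain positive by the same matrix argument, so $\Pi$ is literally identical.) This is the short half of the argument and needs no graph-specific reasoning.

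For the merge forest, I would argue structurally rather than through matrices. Recall from Definition~\ref{dfn:MF} that $\MF(\Fcal)$ is built incrementally: vertices create leaves, positive edges do nothing, and negative edges create internal nodes joining two subtrees. Since $e_1$ and $e_2$ are both positive, neither contributes a node to $\MF(\Fcal)$ or to $\MF(\Fcal')$. Moreover, for every index $j\notin\{i-1,i\}$ the graph $G_j$ is the same in both filtrations, and at the two indices in question $G'_i$ is obtained from $G_{i-1}$ by adding $e_2$ instead of $e_1$---but $G_{i-1}$ already has $e_1$ and $e_2$ forming $1$-cycles (that is what positivity of both edges, in either order, means), so the connected components of $G_{i-1}$, $G_i$, $G'_i$, and $G_{i+1}$ all coincide with those of $G_{i+1}$ restricted appropriately. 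Hence at every stage of the construction the ``current components'' agree and the list of merge events (the negative edges, with their levels) is identical. Since $e_1,e_2$ are edges, not vertices, there is not even a leaf-level change to record. Therefore $\MF(\Fcal')=\MF(\Fcal)$ as a levelled forest.

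I would then combine the two halves: the pairing is unchanged by the transposition lemma, and the merge forest is unchanged because positive edges are invisible to its construction and the component structure is insensitive to the swap. I expect the only point needing care---the ``main obstacle,'' such as it is---to be making precise that $e_1$ and $e_2$ being positive in $\Fcal$ forces them to remain positive in $\Fcal'$; but this is immediate from the same-dimension transposition case of~\cite{cohen2006vines}, since a simplex is positive iff its column in the reduced boundary matrix is zero, and that property is preserved under a transposition of two same-dimension simplices. Everything else is a direct unwinding of Definition~\ref{dfn:MF}, so I would keep the written proof to a few lines citing Proposition~\ref{prop:std-ve} and~\cite[Section~3]{cohen2006vines} for the pairing and giving the one-paragraph structural observation for the merge forest.
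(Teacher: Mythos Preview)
Your proposal is correct and follows essentially the same approach as the paper: both argue that $e_1,e_2$ remain positive after the switch, that positive edges are unpaired (hence the pairing is unchanged), and that positive edges leave the connectivity---and therefore the merge forest---untouched. The paper's proof is even terser because it uses the graph-specific observation directly (positive edges are \emph{always} unpaired since there are no $2$-simplices) rather than routing through the general transposition analysis of~\cite{cohen2006vines}; note that your claim ``when both are positive the $R$ matrix stays reduced'' is not literally true for arbitrary complexes (Case~2.1 of~\cite{cohen2006vines} may still require a column operation and a pairing change), but it holds here precisely because no higher-dimensional simplex can have its low in row $i-1$ or $i$.
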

\begin{proof}
The edges
$e_1,e_2$ stay positive after the switch and so the pairing stays 
the same as positive edges are always unpaired.
The merge forest stays the same because the positive edges cause no change to the connectivity
in a filtration.
\end{proof}

\begin{proposition}
\label{prop:e-e-switch-pos-neg}
For the switch operation in Equation~(\ref{eqn:std-switch-ee})
where $e_1$ is positive and $e_2$ is negative  in $\Fcal$,
after the switch,
$e_1$ stays positive and $e_2$ stays negative  in $\Fcal'$.
Moreover,
$\MF(\Fcal)$ and $\MF(\Fcal')$ have the same structure,
with the only difference being on the level of $e_2$
due to the index change of simplices.
The pairings for $\Fcal$ and $\Fcal'$ also stay the same.
\end{proposition}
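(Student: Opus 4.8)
The plan is to prove the three assertions in the order stated — first that $e_1$ stays positive and $e_2$ stays negative in $\Fcal'$, then that the merge forests have the same shape, and finally that the pairings coincide — since each step relies on the preceding ones.

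For the sign assertion I would argue purely about connectivity, using that an edge is positive exactly when its two endpoints already lie in a common connected component at the moment it is added. Since $e_1$ is positive in $\Fcal$, its endpoints are connected in $G_{i-1}$, so $G_i = G_{i-1}\cup\{e_1\}$ and $G_{i-1}$ induce the same partition of the vertex set into connected components. Because $e_2$ is negative in $\Fcal$, its endpoints lie in distinct components of $G_i$, hence also of $G_{i-1}$; therefore $e_2$ is negative when added to $G_{i-1}$ in $\Fcal'$, and $G'_i = G_{i-1}\cup\{e_2\}$ is $G_{i-1}$ with those two components merged. Adding $e_1$ next to $G'_i$ keeps its endpoints connected (they already were in $G_{i-1}\subseteq G'_i$), so $e_1$ stays positive in $\Fcal'$.

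For the merge forest I would unwind Definition~\ref{dfn:MF} over the two contested steps. The two filtrations coincide up to $G_{i-1}$, so $\MF^{i-1}(\Fcal)=\MF^{i-1}(\Fcal')$; call this common sub-forest $M$, and note that its leaves (the vertices) carry the same levels in both filtrations since the switch only permutes the indices of two edges. In $\Fcal$, the positive edge $e_1$ at level $i-1$ leaves the forest unchanged, and then the negative edge $e_2$ at level $i$ adjoins one internal node whose children are the roots of the two trees of $M$ carrying the components of the endpoints of $e_2$ in $G_{i-1}$ (equal to their components in $G_i$, by the previous paragraph). In $\Fcal'$, $e_2$ at level $i-1$ adjoins exactly that same internal node — same children, hence the same two subtrees merged — and then $e_1$ does nothing. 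So $\MF^{i+1}(\Fcal)$ and $\MF^{i+1}(\Fcal')$ are identical except that the $e_2$-node sits at level $i$ rather than $i-1$; in particular the bijection between their trees and the components of $G_{i+1}$, and the root of each tree, coincide. Since the filtrations agree again from $G_{i+1}$ onward and Definition~\ref{dfn:MF} only consults the current forest together with this component correspondence, the remaining steps build $\MF(\Fcal)$ and $\MF(\Fcal')$ identically, which is the stated conclusion.

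For the pairing I would use that in a standard graph filtration the persistence pairing is a function of the merge-forest shape together with the leaf levels alone: each negative edge is paired with the younger of the two minimum-level leaves in its two child subtrees, and every other simplex (a positive edge, or in each tree the minimum-level leaf) is unpaired — the elder rule as realized by Union-Find~\cite{edelsbrunner2000topological}. Since $\Fcal$ and $\Fcal'$ have the same merge-forest shape, the same leaf levels, and hence the same set of internal-node edges and the same positive edges, this function returns the same pairing for both. (Alternatively one may read it off from the $R$-matrix analysis of~\cite{cohen2006vines}, but the merge-forest description is cleaner.) I do not anticipate a real obstacle; the one point to treat carefully is the middle step — verifying that the component-to-tree correspondence, and thus which node a later negative edge attaches to, is genuinely preserved across the switch, since this is what legitimizes running the construction of Definition~\ref{dfn:MF} in lockstep for the two filtrations.
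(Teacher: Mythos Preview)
Your proposal is correct and follows essentially the same line as the paper for the first two parts: both argue the sign preservation from connectivity (a positive edge does not change components, so the endpoints of $e_2$ still lie in distinct components of $G_{i-1}$), and both observe that the merge-forest construction therefore runs identically on the two filtrations except for the level of the single internal node $e_2$.

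The one genuine difference is the pairing argument. The paper simply cites Case~4 of the transposition analysis in~\cite{cohen2006vines}. You instead give a self-contained argument: the standard-persistence pairing on a graph filtration is completely determined by the merge-forest shape together with the leaf levels (the elder rule), and since the switch alters only the level of an internal node while leaving all leaf levels fixed, the pairing is unchanged. Your route is more elementary and keeps the argument internal to the merge-forest framework the paper is building, at the cost of needing to state (or assume) the elder-rule characterization explicitly; the paper's route is shorter but imports the matrix-reduction case analysis from~\cite{cohen2006vines}. Either is fine here.
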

\begin{proof}
First consider adding $e_2$ in $\Fcal$,
and suppose that $e_2=(u,v)$. 
Since $e_2$ is negative in $\Fcal$,
the vertices $u,v$ are in different connected components $C_1,C_2$
of $G_i$ (see~\cite{edelsbrunner2000topological}).
Moreover, since $e_1$ is positive in $\Fcal$, 
the connectivity of $G_{i-1}$ and $G_i$ is the same.
Then, when we add $e_2$ to $G_{i-1}$ in $\Fcal'$,
we can also consider $C_1,C_2$ as connected components of $G_{i-1}$
and consider $u,v$ to be vertices in $C_1,C_2$ (respectively).
Therefore, adding $e_2$ to $G_{i-1}$ in $\Fcal'$
connects the two components $C_1,C_2$,
which indicates that $e_2$ is still negative in $\Fcal'$.
Notice that $e_1$ in $\Fcal'$ also creates a 1-cycle when added because $G_i\subseteq G_{i+1}$
(indicating that $e_1$ stays positive in $\Fcal'$),
whose addition does not change the connectivity.
Therefore, the variation of connected components in $\Fcal$ and $\Fcal'$ is the same,
and we have that $\MF(\Fcal),\MF(\Fcal')$ have the same structure.
The fact that the pairings of $\Fcal$ and $\Fcal'$ stay the same
follows from Case 4 of the algorithm presented in~\cite[Section 3]{cohen2006vines}.
\end{proof}

\begin{proposition}\label{prop:std-switch-ee-neg-pos}
For the switch operation in Equation~(\ref{eqn:std-switch-ee})
where $e_1$ is negative and $e_2$ is positive  in $\Fcal$,
there are two different situations:
\begin{description}
\item[$e_1$ is in a 1-cycle in $G_{i+1}$:]
In this case,
$\MF(\Fcal)$ and $\MF(\Fcal')$ have the same structure,
with the only difference that the node corresponding to $e_1$ in $\MF(\Fcal)$
now corresponds to $e_2$ in $\MF(\Fcal')$.
Furthermore,
the vertex paired with $e_1$ in $\Fcal$ is now paired with $e_2$ in $\Fcal'$,
and $e_1$ becomes positive (unpaired) in $\Fcal'$.
\item[$e_1$ is not in a 1-cycle in $G_{i+1}$:]
In this case,
$\MF(\Fcal)$ and $\MF(\Fcal')$ have the same structure,
with the only difference being on the levels of $e_1$
due to the index change of simplices.
The pairings for $\Fcal$ and $\Fcal'$ stay the same.
\end{description}
\end{proposition}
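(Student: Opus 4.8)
The plan is to analyze the two situations separately, in each case tracking (a) how the edges $e_1,e_2$ change their positive/negative status, (b) how the connected components of the intermediate graphs $G_{i-1}\subset G_i\subset G_{i+1}$ (and their $G'_i$ analogue) are affected, and (c) how the pairing changes. Throughout I will use the standard fact~\cite{edelsbrunner2000topological} that, in a graph filtration, an edge is negative iff it joins two distinct connected components at the moment it is added, and that each component carries exactly one unpaired (oldest) vertex, which is the one the next merging edge pairs with.

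First consider the case where $e_1$ lies in a $1$-cycle in $G_{i+1}$. I would first show this is equivalent to saying that $e_1$ and $e_2$ connect the same two components of $G_{i-1}$: since $e_1$ is negative, adding it to $G_{i-1}$ merges components $C_1,C_2$; for $e_1$ to close a $1$-cycle in $G_{i+1}=G_{i-1}\cup\{e_1,e_2\}$, the endpoints of $e_1$ must already be connected in $(G_{i-1}\cup\{e_2\})$, which (since $e_2$ is a single edge) forces $e_2$ to also join $C_1$ and $C_2$. Now swap the order. In $\Fcal'$ we add $e_2$ first: its endpoints lie in the distinct components $C_1,C_2$ of $G_{i-1}$, so $e_2$ is negative in $\Fcal'$; then $e_1$ is added to $G'_i$, where its endpoints are already connected, so $e_1$ is positive (unpaired) in $\Fcal'$. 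For the merge forest: in $\MF(\Fcal)$ the node for $e_1$ has as children the roots of the trees representing $C_1,C_2$ (at level $i-1$), and $e_2$ contributes no node (positive). In $\MF(\Fcal')$ the node for $e_2$ has exactly those same two children, at levels unchanged, and $e_1$ contributes no node. Since everything built above level $i+1$ depends only on which components got merged and when — which is identical — the forests agree except that the node formerly labelled $e_1$ is now labelled $e_2$. For the pairing: in $\Fcal$, $e_1$ pairs with the younger of the two oldest vertices of $C_1,C_2$; this same vertex is the one $e_2$ pairs with in $\Fcal'$ for the identical reason, while $e_1$ is now unpaired. (One should note the quoted change-of-pairing dichotomy from~\cite{cohen2006vines} does not directly apply here because $e_1$ and $e_2$ have different status before and after; the argument above is self-contained.)

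Now consider the case where $e_1$ is not in a $1$-cycle in $G_{i+1}$. Since $e_2$ is positive in $\Fcal$, adding $e_2$ to $G_i$ closes a cycle, so the endpoints of $e_2$ are connected in $G_i$; as $e_1$ is not in a $1$-cycle in $G_{i+1}$, by the equivalence established above $e_1$ and $e_2$ do \emph{not} connect the same pair of components of $G_{i-1}$, so the endpoints of $e_2$ are in fact already connected in $G_{i-1}$. Hence in $\Fcal'$, adding $e_2$ to $G_{i-1}$ still closes a cycle and $e_2$ stays positive; and adding $e_1$ to $G'_i$ still merges the same two components $C_1,C_2$ (the positivity of $e_2$ means $G_{i-1}$ and $G'_i$ have the same connectivity), so $e_1$ stays negative. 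The sequence of component merges in $\Fcal$ and $\Fcal'$ is therefore identical, giving $\MF(\Fcal)$ and $\MF(\Fcal')$ the same structure up to the level of $e_1$ (which shifts from $i-1$ to $i$); and since the status of both edges is unchanged and they are of equal dimension, the pairing is unchanged by Case~4 of~\cite[Section~3]{cohen2006vines}.

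The main obstacle is the first case: one must verify carefully that the structural rewiring of the merge forest really is just a relabeling of one node, i.e.\ that swapping the \emph{order} of a negative edge closing a cycle and a positive edge that secretly also bridges the same two components does not disturb any levels or ancestry relations elsewhere in the forest. This hinges on the observation that both $e_1$ (in $\Fcal$) and $e_2$ (in $\Fcal'$) attach to the \emph{roots} of the trees for $C_1$ and $C_2$ at the same level $i-1$, so the subtree they create is isomorphic and plugs into the rest of the forest identically; the rest is bookkeeping about oldest vertices.
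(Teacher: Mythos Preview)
Your proof is correct and follows essentially the same approach as the paper's: in both cases you determine the positive/negative status of $e_1,e_2$ in $\Fcal'$ by analyzing connectivity in $G_{i-1}$ and $G_{i+1}$, and then read off the merge-forest and pairing consequences. Your treatment is more explicit than the paper's (which, for the first case, simply observes that $e_1$ lying in a cycle of $G_{i+1}=G'_i\cup\{e_1\}$ makes it positive in $\Fcal'$ and hence $e_2$ must effect the merge of $C_1,C_2$; and for the second case, picks a specific cycle $z$ created by $e_2$, notes $e_1\notin z$, and concludes $z\subseteq G'_i$), but the underlying argument is the same.
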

\begin{proof}
First suppose that 
$e_1$ is in a 1-cycle in $G_{i+1}$.
Then adding $e_1$ to $G'_{i}$ in $\Fcal'$ does not
change the connectivity of $G'_{i}$
because $e_1$ is positive in $\Fcal'$.
But we know that two connected components $C_1,C_2$ merge into a single one
from $G_{i-1}$ to $G_{i+1}$, following the assumptions on $\Fcal$  (see Figure~\ref{fig:e1-e2-same-conns}). 
So we must have that 
adding $e_2$ to $G_{i-1}$ in $\Fcal'$ causes the merge of $C_1,C_2$.
Hence, the change on the merges forests and pairings 
as described in the proposition is true.
See Figure~\ref{fig:e1-e2-same-conns} for an illustration of the situation described above.

Now suppose that 
$e_1$ is not in a 1-cycle in $G_{i+1}$.
Then, it is obvious that a 1-cycle $z\subseteq G_{i+1}$ created by the addition of $e_2$ in $\Fcal$
does not contain $e_1$.
Therefore, we have $z\subseteq G'_{i}$
because the only difference of $G'_i$ with 
$G_{i+1}$ is the missing of $e_1$.
Then we have that $e_2$ is positive and $e_1$ is negative 
in $\Fcal'$.
Since positive edges do not alter the connectivity of graphs,
the second part of the proposition follows.
\end{proof}

Proposition~\ref{prop:std-switch-ee-neg-neg-ez}
and~\ref{prop:std-switch-ee-neg-neg-hard}
justify the case where $e_1,e_2$ are both negative:

\begin{proposition}
\label{prop:std-switch-ee-neg-neg-ez}
For the switch operation in Equation~(\ref{eqn:std-switch-ee})
where $e_1,e_2$ are negative,
if the corresponding node of $e_1$ is not a child of the corresponding node of $e_2$
in $\MF(\Fcal)$, 
then
$\MF(\Fcal)$ and $\MF(\Fcal')$ have the same structure,
with the only difference being on the levels of $e_1,e_2$
due to the index change of simplices.
The pairings for $\Fcal$ and $\Fcal'$ also stay the same.
\end{proposition}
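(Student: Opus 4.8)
The plan is to translate the structural hypothesis about $\MF(\Fcal)$ into a statement about connected components, and then read both conclusions directly off the inductive construction in Definition~\ref{dfn:MF}. Write $e_1,e_2$ for the two switched edges as in Equation~(\ref{eqn:std-switch-ee}); then $\idx_\Fcal(e_1)=i-1$ and $\idx_\Fcal(e_2)=i$, so the node of $e_1$ sits at level $i-1$ and the node of $e_2$ at level $i$ in $\MF(\Fcal)$. Since, by Definition~\ref{dfn:MF}, levels strictly increase along any root-directed path and there is no level strictly between $i-1$ and $i$, the node of $e_1$ must be the root of the tree of $\MF^i(\Fcal)$ that contains it; call this tree $T^*$, and let $C^*$ be the corresponding component of $G_i$, which $e_1$ formed by merging two components $A_1,A_2$ of $G_{i-1}$. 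Also, $e_2=(u,v)$ being negative merges two distinct components $C_u\ni u$, $C_v\ni v$ of $G_i$; let $T_u,T_v$ be the corresponding trees of $\MF^i(\Fcal)$, whose roots are exactly the children of the node of $e_2$. Thus ``the node of $e_1$ is not a child of the node of $e_2$'' says precisely that $T^*\neq T_u$ and $T^*\neq T_v$; since distinct trees of a forest are vertex-disjoint, this makes $C^*$, $C_u$, $C_v$ pairwise disjoint components of $G_i$. In particular, neither endpoint of $e_1$ lies in $C_u\cup C_v$, so $C_u,C_v$ are already distinct components of $G_{i-1}$, while $A_1,A_2\subseteq C^*$ are disjoint from $C_u\cup C_v$.

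With this in hand I would just run the construction of $\MF(\Fcal')$. First, adding $e_2$ to $G_{i-1}$ merges the distinct components $C_u,C_v$ of $G_{i-1}$, so $e_2$ stays negative and its node in $\MF(\Fcal')$ (now at level $i-1$) is created with children the roots of $T_u,T_v$ — exactly the children it had in $\MF(\Fcal)$. Next, adding $e_1$ to $G_i'=G_{i-1}\cup\{e_2\}$: because $A_1,A_2$ are disjoint from $C_u\cup C_v$, they survive as distinct components of $G_i'$, so $e_1$ stays negative and its node in $\MF(\Fcal')$ (now at level $i$) is created with children the roots of the trees of $A_1,A_2$ — again exactly as in $\MF(\Fcal)$. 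Everything below level $i-1$ is untouched, so $\MF^{i+1}(\Fcal)$ and $\MF^{i+1}(\Fcal')$ have the same underlying graph and agree on all node levels except that the levels $i-1$ and $i$ of $e_1$ and $e_2$ get swapped. For every $j\ge i+1$ we have $G_j'=G_j$ and the $j$-th added simplex is the same, and the step from $\MF^j$ to $\MF^{j+1}$ depends only on the underlying graph of $\MF^j$ and on $\idx(\sigma_j)=j$ (never on any node levels), so an easy induction shows $\MF(\Fcal)$ and $\MF(\Fcal')$ have the same structure, differing only in the levels of $e_1$ and $e_2$ — in contrast to the situation depicted in Figure~\ref{fig:e-e-sw}.

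For the pairings I would invoke the elder rule: in a standard graph filtration each negative edge is paired with the younger of the representatives (the lower-level leaves) of the two subtrees it joins, and every positive edge is unpaired~\cite{edelsbrunner2000topological}; hence the entire pairing is a function of the structure of $\MF$ together with the levels of its leaves only. An edge-edge switch changes no vertex index, so all leaf levels are unchanged, and by the previous paragraph the structure of $\MF$ is unchanged; the levels of the internal nodes $e_1,e_2$ do change, but those never enter the elder-rule computation. Therefore the pairing for $\Fcal$ coincides with that for $\Fcal'$.

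The step I expect to be the main obstacle is the first one: correctly extracting from ``the node of $e_1$ is not a child of the node of $e_2$'' the pairwise-disjointness of $C^*,C_u,C_v$. The observation that makes it clean is the level argument — because the node of $e_1$ is at level $i-1$, it cannot be a proper non-child descendant of the node of $e_2$, so ``not a child'' already means ``lies in a different tree of $\MF^i(\Fcal)$ than both children of $e_2$''. Once that is set up, the remainder is routine bookkeeping with Definition~\ref{dfn:MF}, entirely parallel to the proof of Proposition~\ref{prop:e-e-switch-pos-neg}.
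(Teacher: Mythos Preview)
Your proposal is correct and follows essentially the same approach as the paper: the paper's proof simply asserts that, under the hypothesis, the two components merged by $e_1$ and the two merged by $e_2$ are four distinct components of $G_{i-1}$, and declares the conclusion ``evident from this fact.'' You supply exactly the details the paper omits---the level argument showing $e_1$'s node is a root in $\MF^i(\Fcal)$, the translation of ``not a child'' into disjointness of the relevant trees/components, and the explicit elder-rule justification that the pairing depends only on the merge-forest structure plus leaf levels---so your argument is a fleshed-out version of the paper's two-line sketch rather than a different route.
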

\begin{proof}
Following the assumptions in the proposition, 
we have that the connected components $C_1,C_2$ 
that $e_1$ merges 
and the connected components $C_3,C_4$ 
that $e_2$ merges 
are all different
and can all be considered as connected components in $G_{i-1}$.
The proposition is then evident from this fact.
\end{proof}

\begin{proposition}\label{prop:std-switch-ee-neg-neg-hard}
For the switching of two edges $e_1,e_2$
where $e_1,e_2$ are both negative
and the corresponding node of $e_1$ is a child of the corresponding node of $e_2$
in $\MF(\Fcal)$, 
Algorithm~\ref{alg:std-switch-abs} makes the correct changes on the merge forest and the  simplex pairing.
\end{proposition}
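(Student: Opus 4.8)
The plan is to verify that the structural change in $\Fpo$ prescribed by Algorithm~\ref{alg:std-switch-abs} (illustrated in Figure~\ref{fig:e-e-sw}b,c) really produces $\MF(\Fcal')$, and that the pairing update is correct. First I would recall from Definition~\ref{dfn:MF} that, because the node for $e_1$ is a child of the node for $e_2$ in $\MF(\Fcal)$ and these two nodes sit at levels $i-1$ and $i$ respectively, the subtrees $T_1,T_2$ (rooted at the children of $e_1$) and $T_3$ (the other child of $e_2$) are exactly the trees in $\MF^{i-1}(\Fcal)$ that get merged when $e_1,e_2$ are added; hence $C_1,C_2,C_3$ are pairwise distinct connected components of $G_{i-1}$, and $e_1$ merges $C_1,C_2$ into one component which $e_2$ then merges with $C_3$. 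The key combinatorial observation is that, as an unordered pair of merges of the three components $\{C_1,C_2,C_3\}$, the two additions can be realized in $G_{i+1}$ in exactly the two ways shown in Figure~\ref{fig:C1-C2-C3}: $e_2$ physically joins $C_2$ to $C_3$, or $e_2$ physically joins $C_1$ to $C_3$ (the roles of $C_1,C_2$ being fixed by $v,u$ with $\idx_\Fcal(v)<\idx_\Fcal(u)$).

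Next I would run the merge-forest construction of Definition~\ref{dfn:MF} on $\Fcal'$, where now $e_2$ is added first (at level $i-1$) and $e_1$ second (at level $i$). In the first configuration, $e_2$ merges $C_2$ and $C_3$, so in $\MF(\Fcal')$ the node for $e_2$ has the roots of $T_2,T_3$ as children; then $e_1$ merges this new component with $C_1$, so the node for $e_1$ has $e_2$ and the root of $T_1$ as children — exactly Figure~\ref{fig:e-e-sw}b. In the second configuration $e_2$ merges $C_1,C_3$, giving $e_2$ the roots of $T_1,T_3$ as children, and then $e_1$ has $e_2$ and the root of $T_2$ as children — exactly Figure~\ref{fig:e-e-sw}c. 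Everything at levels outside $\{i-1,i\}$ is untouched since the connectivity evolution of $\Fcal$ and $\Fcal'$ agrees before $G_{i-1}$ and from $G_{i+1}$ on, and the two merges involve the same three components either way; so the rest of $\MF(\Fcal')$ is built by the identical process and matches $\MF(\Fcal)$ outside the depicted region. This settles the structural claim.

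For the pairing I would invoke the observation recorded just before Algorithm~\ref{alg:std-switch-abs}: since $e_1,e_2$ are both negative, either the pairing is unchanged or $e_1,e_2$ simply exchange their partners. So it suffices to decide which. The partner of a negative edge is the youngest still-unpaired (i.e.\ representative) vertex of one of the two components it merges — concretely, for $e_1$ in $\Fcal$ it is the younger of the two representatives of $C_1,C_2$, which by Proposition~\ref{prop:std-vv-pi}-type reasoning are the lowest-level leaves $u\in T_1$, $v\in T_2$; since $\idx_\Fcal(v)<\idx_\Fcal(u)$, $e_1$ pairs with $u$ and $v$ survives as the representative of $C_1\cup C_2$, which $e_2$ then kills by pairing with the younger of $v$ and $w$ (the representative of $C_3$). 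I would then go through the four combinations of (which of the two configurations holds) $\times$ ($\idx_\Fcal(w)$ vs.\ $\idx_\Fcal(u)$), computing the partners of $e_1,e_2$ in $\Fcal'$ the same way after the order swap, and check that the partners differ from those in $\Fcal$ precisely in the single case singled out by the algorithm — configuration Figure~\ref{fig:C1-C2-C3-2} together with $\idx_\Fcal(w)<\idx_\Fcal(u)$ — and agree otherwise.

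The main obstacle I expect is the pairing case analysis: one must be careful that "partner of a negative edge = younger representative of the two merged components" is applied with the correct notion of "younger" after the index of $e_1$ changes from $i-1$ to $i$ (the edges' relative order flips, but the vertices' order is fixed), and one must track which of $u,v,w$ remains unpaired after the first of the two edge-additions in each filtration. Handling the configuration of Figure~\ref{fig:C1-C2-C3-2} is the delicate one, because there $e_1$ and $e_2$ attach to overlapping pairs of the three components, so swapping their order genuinely can reassign who is killed by whom; the configuration of Figure~\ref{fig:C1-C2-C3-1} and all subcases with $\idx_\Fcal(u)<\idx_\Fcal(w)$ turn out to leave the set of (edge, killed-vertex) pairs invariant, which I would verify explicitly. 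Once these bookkeeping cases are laid out, the proposition follows.
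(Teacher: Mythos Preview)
Your proposal is correct and follows essentially the same route as the paper. The paper handles the merge-forest part by simply pointing to Figures~\ref{fig:e-e-sw} and~\ref{fig:C1-C2-C3} and declaring it ``not hard to see,'' whereas you spell out the reconstruction of $\MF(\Fcal')$ from Definition~\ref{dfn:MF}; for the pairing, the paper does the same representative-tracking case analysis you outline (it splits configuration~\ref{fig:C1-C2-C3-1} on $\idx(w)$ versus $\idx(v)$ rather than $\idx(u)$, but this is immaterial since in that configuration $e_1$ pairs with $u$ in both filtrations regardless).
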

\begin{proof}
From Figure~\ref{fig:e-e-sw} and~\ref{fig:C1-C2-C3},
it is not hard to see that Algorithm~\ref{alg:std-switch-abs} makes the correct changes on the merge forest 
for the situation in the proposition.
Therefore, we only need to show that 
Algorithm~\ref{alg:std-switch-abs} makes the correct changes on the  pairing.
Since indices of 
$u,v,w$ as defined in  Algorithm~\ref{alg:std-switch-abs}
stay the same in $\Fcal$ and $\Fcal'$,
for these vertices,
we use, e.g., $\idx(u)$ to denote the index of $u$ in the filtrations.
We have the following cases (notice that $u$ is always paired with $e_1$ in $\Fcal$):
\begin{description}
   \item[$e_2$ directly connects $C_2,C_3$ as in Figure~\ref{fig:C1-C2-C3-1}:]
   Suppose that $\idx(w)<\idx(v)$.
   After the addition of $e_2$ in $\Fcal'$,
   $w$ is the representative (oldest vertex) for the
   merged component $C$ of $C_2$ and $C_3$ in $G'_i$.
   Subsequently, $u$ is paired with $e_1$ in $\Fcal'$ 
   due to the merge of $C$ and $C_1$ because $\idx(w)<\idx(v)<\idx(u)$.
   It is then evident that the pairings for $\Fcal$ and $\Fcal'$ do not change
   with the current assumptions.
   If $\idx(v)<\idx(w)$, by similar arguments, we also have that the pairing does not change.
   \item[$e_2$ directly connects $C_1,C_3$ as in Figure~\ref{fig:C1-C2-C3-2}:]
    If 
    $\idx(w)<\idx(u)$,
    when adding $e_2$ in $\Fcal'$, $u$ is paired with $e_2$
    due to the merge of $C_1$ and $C_3$.
    Therefore,
    the pairings for $\Fcal$ and $\Fcal'$ change
    with the current assumptions.
    If $\idx(w)>\idx(u)$, 
    by similar arguments, we have that the pairing does not change.
    \qedhere
\end{description}
\end{proof}

From the justifications above,
we conclude the following:

\begin{theorem}
Algorithm~\ref{alg:std-switch-abs} correctly updates the pairing
and the merge forest for a switch operation in $O(\log m)$ time.
\end{theorem}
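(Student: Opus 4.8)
The plan is to obtain this theorem as the assembly of the case analysis already carried out in Algorithm~\ref{alg:std-switch-full} together with the data-structure discussion. A switch in Equation~(\ref{eqn:std-switch}) is of one of four types according to whether $\sG,\tG$ are vertices or edges, and the positivity or negativity of each of $\sG,\tG$ can be read off directly from $\Pi$, so the algorithm can always decide which branch to enter. Correctness will then follow by invoking the corresponding propositions for each branch, and the running time by checking that each branch issues only a constant number of DFT-Tree and Link-Cut tree primitives, each costing $O(\log m)$.

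For correctness I would go through the branches in turn. In the vertex--vertex case, Proposition~\ref{prop:std-vv-mf} gives that the structure of $\Fpo$ is unchanged (only leaf levels move) and Proposition~\ref{prop:std-vv-pi} characterizes exactly when the pairing changes and how; this is precisely the update Algorithm~\ref{alg:std-switch-full} performs, where the nearest common ancestor $x$ of $v_1,v_2$ is located with \textsc{nca} and its level compared against the simplices paired with $v_1,v_2$. In the vertex--edge and edge--vertex cases, Proposition~\ref{prop:std-ve} and its symmetric counterpart in Section~\ref{sec:pf-e-v-switch} show that neither $\Fpo$ nor $\Pi$ changes, so "do nothing" is correct. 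In the edge--edge case I would treat the four sub-cases: when both edges are positive the dedicated proposition shows nothing changes; when $e_1$ is positive and $e_2$ negative, Proposition~\ref{prop:e-e-switch-pos-neg} shows the structure is unchanged; when $e_1$ is negative and $e_2$ positive, Proposition~\ref{prop:std-switch-ee-neg-pos} covers both outcomes according to whether $e_1$ lies on a $1$-cycle in $G_{i+1}$, a condition detected correctly by the bottleneck-weight test justified after Proposition~\ref{prop:conn-msf}; and when both edges are negative, Proposition~\ref{prop:std-switch-ee-neg-neg-ez} handles the case where the node of $e_1$ is not a child of the node of $e_2$ in $\Fpo$, while Proposition~\ref{prop:std-switch-ee-neg-neg-hard} handles the remaining case, where the connecting configuration of Figure~\ref{fig:C1-C2-C3} is identified by testing whether $u,w$ are connected in $G'_i$ via Proposition~\ref{prop:conn-msf}. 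Finally I would remark that in every branch the algorithm also updates the levels of the leaves corresponding to $\sG,\tG$, which is exactly the residual difference permitted by all these propositions, so afterwards $\Fpo$ and $\Pi$ correspond to $\Fcal'$.

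For the running time I would observe that each branch performs only $O(1)$ DFT-Tree operations — a bounded number of \textsc{root}, \textsc{nca}, \textsc{subtree-min}, \textsc{cut}, \textsc{link}, and \textsc{change-val} calls to locate $x$, to identify the subtrees $T_1,T_2,T_3$ and their lowest-level leaves $u,v,w$, to carry out the structural surgery shown in Figure~\ref{fig:e-e-sw}, and to reset leaf levels — and $O(1)$ Link-Cut tree operations for the bottleneck-weight and connectivity queries and the at most two weight updates (implemented as cut-then-link) needed to maintain the MSF of $G$ according to the case list preceding the remark. Since the number of nodes $N$ satisfies $N\le m$, each DFT-Tree operation costs $O(\log N)=O(\log m)$, and each Link-Cut tree operation costs $O(\log m)$, so the total cost of a switch is $O(\log m)$.

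The one place that needs care, rather than a genuine obstacle, is verifying that every quantity the algorithm consults is reachable through $O(1)$ calls to the listed primitives and never through a traversal whose cost scales with subtree size; the most delicate instances are the identification of the three subtrees and their lowest-level leaves in the both-negative sub-case, and the $1$-cycle membership together with the connecting-configuration determination in the negative--positive and negative--negative sub-cases, all of which reduce to a constant number of \textsc{nca}/\textsc{subtree-min} calls on $\Fpo$ and bottleneck-weight/connectivity queries on the Link-Cut MSF.
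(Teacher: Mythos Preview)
Your proposal is correct and mirrors the paper's approach: the paper states this theorem simply as the conclusion of the preceding case-by-case propositions (Propositions~\ref{prop:std-vv-pi}--\ref{prop:std-switch-ee-neg-neg-hard}) together with the data-structure discussion, and you have accurately spelled out that assembly, including the $O(1)$-primitives-per-branch running-time argument. There is nothing missing; if anything, your writeup is more explicit than the paper's, which merely writes ``From the justifications above, we conclude the following'' before stating the theorem.
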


\section{Computing graph zigzag persistence}\label{sec:gzz-non-up}
In this section, we show how the usage of  
the Link-Cut Tree~\cite{sleator1981data},
can improve the computation of graph zigzag persistence. For this purpose,
we combine two recent results: 

\begin{itemize}
    \item 
We 
show in~\cite{DBLP:conf/esa/DeyH22} that a given zigzag filtration can be converted into
 a standard filtration for a fast computation of zigzag barcode.
 \item
Yan et al.~\cite{yan2021link} show that the extended persistence
of a given graph filtration can be computed by using operations only on trees. 
\end{itemize}

Building on the work of~\cite{DBLP:conf/esa/DeyH22}, we first convert a
given simplex-wise graph zigzag filtration into a \textit{cell}-wise \textit{up-down} 
filtration, where
all insertions occur before deletions. Then, using the extended
persistence algorithm of Yan et al.~\cite{yan2021link} on the up-down filtration with the  Link-Cut Tree~\cite{sleator1981data} data structure,
we obtain an improved $O(\filtcnt\log \filtcnt)$ algorithm for computing graph zigzag persistence.

\subsection{Converting to up-down filtration}
\label{sec:conv-2-ud}
First, we recall the necessary set up from~\cite{DBLP:conf/esa/DeyH22} relevant to our purpose.
The algorithm, called \textsc{FastZigzag}~\cite{DBLP:conf/esa/DeyH22}, builds filtrations 
on extensions of simplicial complexes called 
\textit{$\DG$-complexes}~\cite{hatcher2002algebraic},
whose building blocks are
called \textit{cells} or \textit{$\DG$-cells}.
Notice that  1-dimensional $\DG$-complexes are nothing but graphs with \emph{parallel edges}  (also termed as \emph{multi-edges})~\cite{DBLP:conf/esa/DeyH22}.

Assume a \textit{simplex}-wise graph zigzag filtration 
\[\Fcal:
\emptyset=
G_0\leftrightarrowsp{\fsimp{}{0}} G_1\leftrightarrowsp{\fsimp{}{1}}
\cdots 
\leftrightarrowsp{\fsimp{}{\filtcnt-1}} G_\filtcnt
=\emptyset
\]
consisting of simple graphs as input.
We convert $\Fcal$ into the following \textit{cell}-wise up-down~\cite{carlsson2009zigzag-realvalue} filtration
consisting of graphs with parallel edges:
\begin{equation}\label{eqn:ud}
\Ud:\emptyset=\hat{G}_0\inctosp{\hatfsimp_0}
\hat{G}_1\inctosp{\hatfsimp_{1}}
\cdots\inctosp{\hatfsimp_{\simpcnt-1}} 
\hat{G}_{\simpcnt}
\bakinctosp{\hatfsimp_{\simpcnt}}
\hat{G}_{\simpcnt+1}
\bakinctosp{\hatfsimp_{\simpcnt+1}}
\cdots
\bakinctosp{\hatfsimp_{\filtcnt-1}} \hat{G}_{\filtcnt}=\emptyset.
\end{equation}
Cells $\hatfsimp_0,\hatfsimp_1,\ldots,\hatfsimp_{\simpcnt-1}$ 
are \emph{uniquely identified} copies of vertices and edges added in $\Fcal$ 
with the addition order preserved
(notice that the same vertex or edge could be repeatedly
added and deleted in $\Fcal$).
Cells 
$\hatfsimp_{\simpcnt},\hatfsimp_{\simpcnt+1},\ldots,\hatfsimp_{\filtcnt-1}$ 
are \emph{uniquely identified} copies of vertices and edges deleted in $\Fcal$,
with the order also preserved.
Notice that $\filtcnt=2\simpcnt$
because an added simplex
must be eventually deleted in ${\Fcal}$.
\opt{showWADSdel}{Figure~\ref{fig:convert} illustrates an example for converting an 
input graph zigzag filtration $\Fcal$
into an up-down filtration $\Ud$ with parallel edges.
In Figure~\ref{fig:convert},
the edge $e$ is added twice in $\Fcal$
in which the first addition corresponds to $\hat{e}_1$ in $\Ud$
and the second addition corresponds to $\hat{e}_2$ in $\Ud$.}

\opt{showWADSdel}{
\begin{figure}[!tb]
  \centering
  \includegraphics[width=\linewidth]{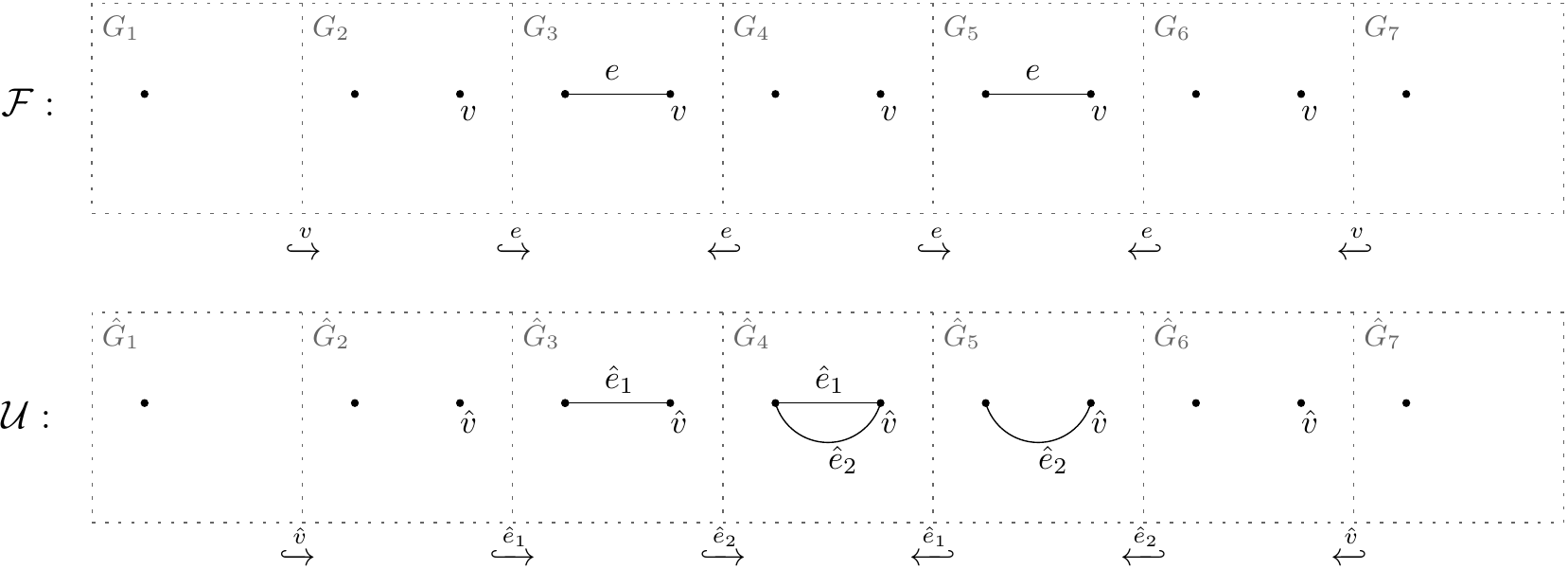}
  \caption{An example of converting a graph zigzag filtration $\Fcal$ to an up-down filtration $\Ud$.}
  \label{fig:convert}
\end{figure}}

\begin{definition}\label{dfn:creator-destroyer}
In $\Fcal$ or $\Ud$, 
let each addition or deletion 
be uniquely identified by its index in the filtration,
e.g., index of $G_i\leftrightarrowsp{\fsimp{}{i}}G_{i+1}$ in $\Fcal$ is $i$.
Then, 
the \defemph{creator} of an interval $[b,d]\in\Pers_*(\Fcal)\text{ or }\Pers_*(\Ud)$
is an addition/deletion indexed at $b-1$,
and the \defemph{destroyer} of $[b,d]$
is an addition/deletion indexed at $d$.
\end{definition}

As stated previously, 
each $\hatfsimp_i$ in $\Ud$ for $0\leq i<\simpcnt$ corresponds
to an addition in $\Fcal$, and
each $\hatfsimp_i$ for $\simpcnt\leq i<\filtcnt$ corresponds
to a deletion in $\Fcal$.
This naturally defines a bijection $\phi$
from the additions and deletions in $\Ud$ to the additions and deletions in $\Fcal$.
Moreover, for simplicity, we let the domain and codomain of $\phi$
be the sets of indices for the additions and deletions.
The interval mapping in~\cite{DBLP:conf/esa/DeyH22} (which uses the \emph{Mayer-Vietoris Diamond}~\cite{carlsson2010zigzag,carlsson2009zigzag-realvalue}) can be summarized
as follows:

\begin{theorem}
\label{thm:fzz-intv-map}
Given $\Pers_*(\Ud)$, one can retrieve $\Pers_*(\Fcal)$ 
using the following bijective mapping
from $\Pers_*(\Ud)$
to $\Pers_*(\Fcal)${\rm:}
an 
interval $[b,d]\in\Pers_p(\Ud)$ 
with a creator indexed at $b-1$ and a destroyer indexed at $d$
is mapped to an interval $I\in \Pers_*(\Fcal)$ with
the same creator and destroyer indexed at 
$\phi(b-1)$ and $\phi(d)$ respectively.
Specifically, 
\begin{itemize}
    \item 
If $\phi(b-1)<\phi(d)$, then $I=[\phi(b-1)+1,\phi(d)]\in \Pers_p(\Fcal)$,
where $\phi(b-1)$ indexes the creator and $\phi(d)$ indexes the destroyer.
\item
Otherwise,
$I=[\phi(d)+1,\phi(b-1)]\in \Pers_{p-1}(\Fcal)$,
where $\phi(d)$ indexes the creator and $\phi(b-1)$ indexes the destroyer.
\end{itemize}
\end{theorem}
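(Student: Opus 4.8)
The plan is to derive the interval mapping as an iterated application of the Mayer--Vietoris diamond principle from~\cite{carlsson2010zigzag,carlsson2009zigzag-realvalue}, following the template of \textsc{FastZigzag}~\cite{DBLP:conf/esa/DeyH22}, and then to specialize the bookkeeping to the graph (dimension $0,1$) setting. First I would recall that $\Ud$ is obtained from $\Fcal$ by a sequence of \emph{diamond switches}: whenever $\Fcal$ contains a deletion of a simplex $\sG$ followed later (or immediately) by an addition that we wish to move earlier, one inserts a Mayer--Vietoris diamond whose four corners are the two graphs flanking the switch together with their union and intersection. Since $\Fcal$ starts and ends at $\emptyset$ and is simplex-wise, every added simplex is eventually deleted, so the additions can be pushed to the front and the deletions to the back; the ``uniquely identified copies'' $\hatfsimp_i$ simply record which physical add/delete event in $\Fcal$ each cell of $\Ud$ came from, giving the bijection $\phi$ on indices. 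The $\DG$-complex (parallel-edge) framework is needed precisely because a vertex or edge may be re-added in $\Fcal$ after deletion, so its copies must be kept distinct in $\Ud$.

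Next I would invoke the \emph{diamond principle}: each Mayer--Vietoris diamond induces a bijection between the barcodes of its two ``sides,'' under which an interval is either unchanged away from the diamond or has one endpoint shifted by one and, in the ``crossing'' case, its dimension changed by $\pm 1$ (the Mayer--Vietoris connecting map trades an interval in $\Hm_p$ for one in $\Hm_{p-1}$). Composing all these elementary bijections yields a single bijection $\Pers_*(\Ud)\to\Pers_*(\Fcal)$. The key invariant to track through the composition is: the creator event and the destroyer event of an interval are preserved as \emph{events} (not as indices), i.e.\ an interval of $\Ud$ created by the event $\hatfsimp_{b-1}$ and destroyed by $\hatfsimp_d$ maps to the interval of $\Fcal$ created and destroyed by the corresponding events $\phi(b-1)$ and $\phi(d)$. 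This is the heart of the \textsc{FastZigzag} correctness argument, and I would cite it; the only thing left is to read off the endpoints and dimension from the relative order of $\phi(b-1)$ and $\phi(d)$ in $\Fcal$.

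For that final step I would argue directly on $\Fcal$. In a simplex-wise zigzag filtration, if an interval has creator event at index $c$ and destroyer event at index $e$, then the module is supported exactly on the graphs strictly between these events, so the interval is $[c+1,e]$ when $c<e$ (creator is an addition/the ``birth'' side, destroyer is a deletion/the ``death'' side reading left to right) and is $[e+1,c]$ when $e<c$. In $\Ud$ we always have $b-1<d$ since all additions precede all deletions, so $\hatfsimp_{b-1}$ is an addition and $\hatfsimp_d$ is a deletion, and its $\Ud$-interval $[b,d]$ lies in $\Pers_p(\Ud)$ for the appropriate $p\in\{0,1\}$. Now $\phi(b-1)$ and $\phi(d)$ may occur in either order in $\Fcal$: if $\phi(b-1)<\phi(d)$ the diamond compositions never crossed the relevant threshold an odd number of times, the dimension is unchanged, and $I=[\phi(b-1)+1,\phi(d)]\in\Pers_p(\Fcal)$; if $\phi(d)<\phi(b-1)$, exactly one dimension-shifting crossing occurred, so $I=[\phi(d)+1,\phi(b-1)]\in\Pers_{p-1}(\Fcal)$, with the roles of creator and destroyer swapped accordingly. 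The bijectivity is inherited from the bijectivity of each diamond-induced map and of $\phi$.

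The main obstacle I anticipate is not the endpoint arithmetic but justifying rigorously that the \emph{pairing of events} (creator/destroyer) is what is transported by the composition of diamond bijections --- in particular that a re-added simplex's multiple copies in $\Ud$ get matched to the correct physical occurrences in $\Fcal$, and that the parity of dimension-shifts is governed exactly by the sign of $\phi(b-1)-\phi(d)$. I would handle this by appealing to the already-established interval-mapping lemma of~\cite{DBLP:conf/esa/DeyH22}, of which this theorem is essentially the graph-case restatement, rather than re-proving the diamond calculus from scratch; the contribution here is the clean closed form for $\Dim\in\{0,1\}$, not a new proof of the diamond principle.
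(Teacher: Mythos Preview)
Your proposal is correct and matches the paper's treatment: the paper does not prove Theorem~\ref{thm:fzz-intv-map} at all but simply presents it as a summary of the interval mapping from~\cite{DBLP:conf/esa/DeyH22}, noting that it relies on the Mayer--Vietoris Diamond~\cite{carlsson2010zigzag,carlsson2009zigzag-realvalue}. Your sketch goes further than the paper by outlining how the diamond bijections compose and how the creator/destroyer events are transported, but you correctly identify that the substantive argument is to be cited from \textsc{FastZigzag} rather than reproved, which is exactly what the paper does.
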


Notice the decrease in the dimension of the mapped interval in $\Pers_{*}(\Fcal)$
when $\phi(d)<\phi(b-1)$
(indicating a swap on the roles of the creator and destroyer).

While Theorem~\ref{thm:fzz-intv-map} suggests a simple mapping rule for
$\Pers_*(\Ud)$ and $\Pers_*(\Fcal)$,
we further interpret the mapping in terms of the different types of intervals
in zigzag persistence. We define the following:

\begin{definition}
\label{dfn:open-close-bd}
Let
$\Lcal: \emptyset=K_0 \leftrightarrow K_1 \leftrightarrow 
\cdots \leftrightarrow K_\ell=\emptyset$ be a zigzag filtration.
For any $[\birth,\death]\in \Pers_*(\Lcal)$,
the birth index $\birth$ is 
\defemph{closed} 
if $K_{\birth-1}\incto K_\birth$ is a 
forward 
inclusion;
otherwise, $\birth$ is  \defemph{open}.
Symmetrically, 
the death index $\death$ is 
\defemph{closed}
if $K_{\death}\bakincto K_{\death+1}$ is a backward inclusion;
otherwise, $\death$ is  \defemph{open}.
The types of the birth/death ends
classify intervals in $\Pers_*(\Lcal)$ into four types: 
\defemph{closed-closed}, \defemph{closed-open}, \defemph{open-closed}, and \defemph{open-open}. 
\end{definition}

Table~\ref{tab:fzz-map-types} breaks down the bijection between $\Pers_*(\Ud)$ and $\Pers_*(\Fcal)$
into mappings for the different types,
where $\Pers_0^{\text{co}}(\Ud)$ denotes the set of closed-open intervals in $\Pers_0(\Ud)$
(meanings of other symbols can be derived similarly).
\begin{table}[htb!]
    \centering
    \caption{Mapping of different types of intervals for $\Pers_*(\Ud)$ and $\Pers_*(\Fcal)$}
    \begin{tabular}{ccc}
    \midrule
        $\Ucal$ & & $\Fcal$ \\
        \midrule
        $\Pers_0^{\text{co}}(\Ud)$ & $\leftrightarrow$ & $\Pers_0^{\text{co}}(\Fcal)$\\
        $\Pers_0^{\text{oc}}(\Ud)$ & $\leftrightarrow$ & $\Pers_0^{\text{oc}}(\Fcal)$\\
        $\Pers_0^{\text{cc}}(\Ud)$ & $\leftrightarrow$ & $\Pers_0^{\text{cc}}(\Fcal)$\\
        $\Pers_1^{\text{cc}}(\Ud)$ & $\leftrightarrow$ & $\Pers_0^{\text{oo}}(\Fcal)\union\Pers_1^{\text{cc}}(\Fcal)$\\
        \midrule
    \end{tabular}
    \label{tab:fzz-map-types}
\end{table}

We notice the following:
\begin{itemize}
    \item $\Pers_*(\Ud)$ has no open-open intervals because there are no additions after
    deletions in $\Ud$.
    \item $\Pers_1(\Ud)$ and $\Pers_1(\Fcal)$ contain only closed-closed intervals
    because graph filtrations have no triangles.
    \item $[b,d]\in\Pers_1^{\text{cc}}(\Ud)$ is mapped to 
    an interval in $\Pers_0^{\text{oo}}(\Fcal)$ when $\phi(d)<\phi(b-1)$.
\end{itemize}

\opt{showWADSdel}{
\begin{example*}
The interval mapping for the example in Figure~\ref{fig:convert} is as follows:
\begin{align*}
&[2,2]\in \Pers^\text{co}_0(\Fcal)\leftrightarrow[2,2]\in \Pers^\text{co}_0(\Ud),\quad
[6,6]\in \Pers^\text{oc}_0(\Fcal)\leftrightarrow[6,6]\in \Pers^\text{oc}_0(\Ud),\\
&[4,4]\in \Pers^\text{oo}_0(\Fcal)\leftrightarrow[4,4]\in \Pers^\text{cc}_1(\Ud),\quad
[1,7]\in \Pers^\text{cc}_0(\Fcal)\leftrightarrow[1,7]\in \Pers^\text{cc}_1(\Ud).
\end{align*}
\end{example*}}

\subsection{Extended persistence algorithm for graphs}
\label{sec:tree-extd-pers}
 Yan et al.~\cite{yan2021link} present an extended persistence algorithm
for graphs in a neural network setting (see also~\cite{ZMD22}) which runs in quadratic time. We adapt it
to computing up-down zigzag persistence while improving its time complexity
with a Link-Cut tree~\cite{sleator1981data} data structure.

Definition~\ref{dfn:creator-destroyer}
indicates that for 
the up-down filtration in Equation~(\ref{eqn:ud}), 
$\Pers_*(\Ud)$ can be considered
as generated from the \emph{cell pairs} similar to the simplex pairs in standard
persistence~\cite{edelsbrunner2000topological}.
Specifically, 
an interval $[b,d]\in\Pers_*(\Ud)$
is generated from the pair $(\hat{\sG}_{b-1},\hat{\sG}_{d})$.
While each cell appears twice in $\Ud$ (once  added and once deleted),
we notice that it should be clear from the context whether 
a cell
in a pair
refers to its addition or deletion.
We then have the following:
\begin{remark}\label{rmk:ud-asc-desc}
Every vertex-edge
pair for a closed-open interval in $\Pers_0(\Ud)$ comes from the
\emph{ascending} part $\Ud_u$ of the filtration $\Ud$,
and every edge-vertex pair for an open-closed interval in $\Pers_0(\Ud)$ 
comes from
the \emph{descending} part $\Ud_d$. 
These ascending and descending parts are as shown below:
\begin{gather}\label{eqn:ud-asc-desc}
\begin{aligned}
&\Ud_u:\emptyset=\hat{G}_0\inctosp{\hatfsimp_0} 
\hat{G}_1\inctosp{\hatfsimp_{1}}
\cdots\inctosp{\hatfsimp_{\simpcnt-1}} 
\hat{G}_{\simpcnt},\\
&\Ud_d:\emptyset= \hat{G}_{\filtcnt}\inctosp{\hatfsimp_{\filtcnt-1}}
\hat{G}_{\filtcnt-1}
\inctosp{\hatfsimp_{\filtcnt-2}}
\cdots
\inctosp{\hatfsimp_{\simpcnt}} \hat{G}_{\simpcnt}.
\end{aligned}
\end{gather}
\end{remark}

We  first run the standard persistence algorithm with the Union-Find
data structure on $\Ud_u$ and $\Ud_d$ to obtain all pairs between vertices 
and edges in $O(\simpcnt\,\alpha(\simpcnt))$ time, retrieving closed-open and open-closed
intervals in $\Pers_0(\Ud)$. 
We also have the following:
\begin{remark}\label{rmk:cc-intv}
Each closed-closed interval in $\Pers_0(\Ud)$
is given by pairing the first vertex in $\Ud_u$,
that comes from a connected component $C$ of $\hat{G}_{\simpcnt}$, and the
first vertex in $\Ud_d$ coming from $C$. 
There is no extra computation necessary
for this type of pairing. 
\end{remark}

\begin{remark}\label{rmk:ee-pair}
Each closed-closed interval in $\Pers_1(\Ud)$
is given by
an edge-edge pair in $\Ud$,
in which one edge 
is a positive edge from
the ascending filtration $\Ud_u$ and the other 
is a positive edge from the descending filtration
$\Ud_d$. 
\end{remark}

To compute the edge-edge pairs,
the algorithm scans $\Ud_d$
and keeps track of whether an edge is positive or negative.
For every positive edge $e$ in $\Ud_d$, it finds the cycle $c$
that is created the earliest in $\Ud_u$ containing $e$ and then
pairs $e$ with the youngest edge $e'$ of $c$ added in $\Ud_u$, which creates $c$ in $\Ud_u$.
To determine $c$ and $e'$,  we use  the following
procedure from~\cite{yan2021link}:

\begin{algr}\label{alg:ext-pers}
\begin{enumerate}
   \item[]
    \item 
    Maintain a spanning forest $T$ of
$\hat{G}_\simpcnt$ while processing $\Ud_d$.
Initially, $T$ consists
of all vertices of $\hat{G}_\simpcnt$ and all negative edges in $\Ud_d$.
    \item For every positive edge $e$ in $\Ud_d$  (in the order of the filtration):
    \begin{enumerate}
    \item 
Add $e$ to $T$ and
check the \textit{unique} cycle $c$ formed by $e$ in $T$. 

\item 
Determine the edge $e'$ which is the youngest
edge of $c$ with respect to the filtration $\Ud_u$. The
edge $e'$ has to be positive in $\Ud_u$.

\item 
Delete $e'$ from $T$.
This
maintains $T$ to be a tree all along. 

\item 
Pair the positive edge $e$ from
$\Ud_d$ with the positive edge $e'$ from $\Ud_u$.
   \end{enumerate}
\end{enumerate}
\end{algr}

We propose to implement the above algorithm by maintaining $T$ as a Link-Cut
Tree~\cite{sleator1981data}, which is a dynamic data structure 
allowing the following
operations in $O(\log N)$ time ($N$ is the number of nodes in the trees): 
(i) insert or delete a node or an edge
from the Link-Cut Trees; (ii)
find the maximum-weight edge on a path in the trees. 
Notice that
for the edges in $T$, we let their weights
equal to their indices in $\Ud_u$. 

We build $T$ by first 
inserting 
all vertices of $\hat{G}_\simpcnt$ and all negative edges in $\Ud_d$
into $T$ 
in $O(m\log m)$ time.
Then,
for every positive
edge $e$ in $\Ud_d$,  
find the maximum-weight edge $\epsilon$ in the unique path in $T$
connecting
the two endpoints of $e$ in $O(\log m)$  time.
Let $e'$ be the edge in $\Set{e,\epsilon}$ whose index in $\Ud_u$ is greater 
(i.e., $e'$ is the younger one in $\Ud_u$).
Pair $e'$ with $e$ to form a closed-closed interval in $\Pers_1(\Ud)$. After this, 
delete $e'$ from $T$ and insert
 $e$ into $T$, which takes $O(\log m)$ time.  
Therefore, processing the entire
filtration $\Ud_d$ and getting all closed-closed intervals in $\Pers_1(\Ud)$ 
takes $O(m\log m)$ time in total.

\begin{theorem}
For a {simplex}-wise graph zigzag filtration $\Fcal$ with $m$ additions and deletions,
$\Pers_*(\Fcal)$ can be computed in $O(m\log m)$ time. 
\end{theorem}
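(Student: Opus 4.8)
The plan is to assemble the pieces that the section has already laid out, so the proof is essentially a bookkeeping argument tying together the conversion step and the extended-persistence computation. First I would recall from Section~\ref{sec:conv-2-ud} that the simplex-wise graph zigzag filtration $\Fcal$ (with $\filtcnt$ total additions and deletions) is converted in linear time into the cell-wise up-down filtration $\Ud$ of Equation~(\ref{eqn:ud}), which has $\simpcnt$ insertions followed by $\simpcnt$ deletions with $\filtcnt = 2\simpcnt$; building $\Ud$ only requires relabelling copies of simplices while preserving order, which is $O(\filtcnt)$. By Theorem~\ref{thm:fzz-intv-map} and Table~\ref{tab:fzz-map-types}, once $\Pers_*(\Ud)$ is known, $\Pers_*(\Fcal)$ is recovered by a bijective relabelling through $\phi$ of the creator/destroyer indices together with the dimension-shift rule, which costs $O(\filtcnt)$ given the barcode of $\Ud$. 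So it suffices to show that $\Pers_*(\Ud)$ can be computed in $O(\filtcnt\log\filtcnt)$ time.

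Next I would break $\Pers_*(\Ud)$ into its four constituent families as enumerated after Remark~\ref{rmk:ud-asc-desc}. The closed-open intervals in $\Pers_0(\Ud)$ and the open-closed intervals in $\Pers_0(\Ud)$ are obtained by running the ordinary Union-Find persistence algorithm separately on the ascending part $\Ud_u$ and the descending part $\Ud_d$ of Equation~(\ref{eqn:ud-asc-desc}); by the standard analysis this takes $O(\simpcnt\,\alpha(\simpcnt))$ time, which is within $O(\filtcnt\log\filtcnt)$. The closed-closed intervals in $\Pers_0(\Ud)$ require no additional work beyond what Union-Find already gives: by Remark~\ref{rmk:cc-intv} each such interval pairs the first vertex of a connected component $C$ of $\hat G_\simpcnt$ appearing in $\Ud_u$ with the first vertex of $C$ appearing in $\Ud_d$, information that is already exposed by the two Union-Find passes. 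Finally, the closed-closed intervals in $\Pers_1(\Ud)$ are the edge-edge pairs described in Remark~\ref{rmk:ee-pair}, and these are computed by Algorithm~\ref{alg:ext-pers}: maintain a spanning forest $T$ of $\hat G_\simpcnt$ as a Link-Cut tree with edge weights equal to indices in $\Ud_u$, initialize $T$ with all vertices and all $\Ud_d$-negative edges in $O(\filtcnt\log\filtcnt)$ time, and then for each positive edge $e$ of $\Ud_d$ perform one maximum-weight-path query, one deletion, and one insertion, each $O(\log\filtcnt)$, giving $O(\filtcnt\log\filtcnt)$ overall.

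Summing the four contributions gives the claimed $O(\filtcnt\log\filtcnt)$ bound for $\Pers_*(\Ud)$, and adding the $O(\filtcnt)$ conversion and interval-remapping costs yields $O(\filtcnt\log\filtcnt)$ for $\Pers_*(\Fcal)$. I would close by noting that correctness of the $\Pers_1$ computation follows from the correctness of the Yan et al.\ procedure (Algorithm~\ref{alg:ext-pers}) together with Remark~\ref{rmk:ee-pair}: the Link-Cut tree is merely an efficient implementation of the bookkeeping "find the earliest-created cycle through $e$ in $\Ud_u$ and pair $e$ with the youngest edge creating that cycle,'' since the maximum-$\Ud_u$-index edge on the unique $T$-path between the endpoints of $e$ is exactly that youngest creator (this is the same bottleneck-weight / Kruskal argument used elsewhere in the paper, e.g.\ around Proposition~\ref{prop:conn-msf}). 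The only real subtlety — hence the main thing to get right — is verifying that the data-structure invariant "$T$ is always a spanning tree of (the relevant component of) $\hat G_\simpcnt$'' is preserved by the delete-$e'$/insert-$e$ step, so that each subsequent path query is well-defined; everything else is a routine tallying of $O(\log\filtcnt)$-time operations.
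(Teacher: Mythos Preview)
Your proposal is correct and follows essentially the same approach as the paper: convert $\Fcal$ to the up-down filtration $\Ud$ in $O(m)$ time, compute $\Pers_*(\Ud)$ in $O(m\log m)$ time via the Union-Find passes on $\Ud_u,\Ud_d$ together with the Link-Cut-tree implementation of Algorithm~\ref{alg:ext-pers}, and then map back via Theorem~\ref{thm:fzz-intv-map} in $O(m)$ time. The paper's proof is terser (it simply cites ``the algorithm described in this section'' for the middle step), whereas you spell out the four interval families and the spanning-forest invariant, but the content is the same.
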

\begin{proof}
We first convert $\Fcal$ into the up-down filtration $\Ud$
in $O(m)$ time~\cite{DBLP:conf/esa/DeyH22}.
We then compute $\Pers_*(\Ud)$ in $O(m\log m)$ time using the algorithm described in 
this section.
Finally, we convert $\Pers_*(\Ud)$ to $\Pers_*(\Fcal)$ using the process in Theorem~\ref{thm:fzz-intv-map},
which takes $O(m)$ time.
Therefore, computing $\Pers_*(\Fcal)$ takes  $O(m\log m)$ time. 
\end{proof}

\section{Updating graph zigzag persistence}
\label{sec:zz-switch}

In this section, we describe the update of persistence for switches on 
graph zigzag filtrations. In~\cite{dey2021updating}, we considered the
updates in zigzag filtration for general simplicial complexes. Here, we focus
on the special case of graphs, for which we find more efficient algorithms
for switches.
In a similar vein to the switch operation on standard filtrations~\cite{cohen2006vines} (see also Section~\ref{sec:std-switch}),
a switch on a zigzag filtration swaps two consecutive simplex-wise inclusions.
Based on the directions of the inclusions, we have the following four types of switches (as defined in~\cite{dey2021updating}),
where $\Fcal,\Fcal'$ are both simplex-wise graph zigzag filtrations  starting and ending with empty graphs:

\begin{itemize}
    \item 
\textit{{Forward}} switch is the counterpart of the switch on standard filtrations, 
which swaps two forward inclusions (i.e., additions)
and also requires $\sG\nsubseteq\tG$:
\begin{equation}\label{eqn:fwd-switch}
\begin{tikzpicture}[baseline=(current  bounding  box.center)]
\tikzstyle{every node}=[minimum width=24em]
\node (a) at (0,0) {$\Fcal:G_0 \leftrightarrow\cdots\leftrightarrow G_{i-1}\inctosp{\sG}G_i\inctosp{\tG}G_{i+1}\leftrightarrow\cdots\leftrightarrow G_\filtcnt$}; 
\node (b) at (0,-0.6){$\Fcal':G_0\leftrightarrow\cdots\leftrightarrow G_{i-1}\inctosp{\tG} G'_i\inctosp{\sG} G_{i+1}\leftrightarrow\cdots\leftrightarrow G_\filtcnt$};
\path[->] (a.0) edge [bend left=90,looseness=1.5,arrows={-latex},dashed] (b.0);
\end{tikzpicture}
\end{equation}

\item
\textit{{Backward}} switch
is the symmetric version of forward switch,
requiring $\tG\not\subseteq\sG$:
\begin{equation}
\label{eqn:bak-switch}
\begin{tikzpicture}[baseline=(current  bounding  box.center)]
\tikzstyle{every node}=[minimum width=24em]
\node (a) at (0,0) {$\Fcal: G_0 \leftrightarrow
\cdots
\leftrightarrow 
G_{i-1}\bakinctosp{\sG} 
G_i 
\bakinctosp{\tG} G_{i+1}
\leftrightarrow
\cdots \leftrightarrow G_\filtcnt$}; 
\node (b) at (0,-0.6){$\Fcal': G_0 \leftrightarrow
\cdots
\leftrightarrow 
G_{i-1}\bakinctosp{\tG} 
G'_i 
\bakinctosp{\sG} G_{i+1}
\leftrightarrow
\cdots \leftrightarrow G_\filtcnt$};
\path[->] (a.0) edge [bend left=90,looseness=1.5,arrows={-latex},dashed] (b.0);
\end{tikzpicture}
\end{equation}

\item
The remaining switches
swap two inclusions of opposite directions:
\begin{equation}\label{eqn:out-in-switch}
\begin{tikzpicture}[baseline=(current  bounding  box.center)]
\tikzstyle{every node}=[minimum width=24em]
\node (a) at (0,0) {$\Fcal: G_0 \leftrightarrow
\cdots
\leftrightarrow 
G_{i-1}\inctosp{\sG} 
G_i 
\bakinctosp{\tG} G_{i+1}
\leftrightarrow
\cdots \leftrightarrow G_\filtcnt$}; 
\node (b) at (0,-0.6){$\Fcal': G_0 \leftrightarrow
\cdots
\leftrightarrow 
G_{i-1}\bakinctosp{\tG} 
G'_i 
\inctosp{\sG} G_{i+1}
\leftrightarrow
\cdots \leftrightarrow G_\filtcnt$};
\path[-] (a.0) edge [bend left=90,looseness=1.5,arrows={latex-latex},dashed] (b.0);
\end{tikzpicture}
\end{equation}
The switch from $\Fcal$ to $\Fcal'$ is called
an \textit{{outward}} 
switch
and the switch from $\Fcal'$ to $\Fcal$ is called an \textit{{inward}} switch.
We also require $\sG\neq\tG$
because if $\sG=\tG$, e.g., for outward switch, we cannot delete $\tG$ from $G_{i-1}$ in $\Fcal'$
because $\tG\not\in G_{i-1}$.
\end{itemize}

\subsection{Update algorithms}

Instead of performing the updates in Equation~(\ref{eqn:fwd-switch}\,--\,\ref{eqn:out-in-switch}) 
directly on the graph zigzag filtrations,
our algorithms work
\emph{on the corresponding up-down filtrations} for $\Fcal$ and $\Fcal'$,
with the conversion described in Section~\ref{sec:conv-2-ud}.
Specifically, we maintain a pairing of cells for the corresponding up-down filtration,
and the pairing for the original graph zigzag filtration
can be derived from the bijection $\phi$ as defined in Section~\ref{sec:conv-2-ud}.

For outward and inward switches, the corresponding up-down filtration
before and after the switch
is the same and hence the update takes $O(1)$ time.
Moreover, the backward switch is a symmetric version of the forward switch
and the algorithm is also symmetric.
Hence, in this section we focus on how to perform the forward switch.
The symmetric behavior for backward switch is mentioned only when necessary.

For the forward switch in Equation~(\ref{eqn:fwd-switch}),
let $\Ud$ and $\Ud'$ be the corresponding up-down filtrations for $\Fcal$ and $\Fcal'$ respectively.
By the conversion in Section~\ref{sec:conv-2-ud},
there is also a forward switch 
(on the ascending part)
from $\Ud$ to $\Ud'$,
where $\hat{\sG},\hat{\tG}$ are $\DG$-cells corresponding to $\sG,\tG$ respectively:
\begin{equation}\label{eqn:fwd-switch-ud}
\begin{tikzpicture}[baseline=(current  bounding  box.center)]
\tikzstyle{every node}=[minimum width=24em]
\node (a) at (0,0) {$\Ud:\hat{G}_0\incto\cdots\incto
\hat{G}_{j-1}\inctosp{\hat{\sG}}
\hat{G}_{j}\inctosp{\hat{\tG}}
\hat{G}_{j+1}
\incto
\cdots
\incto
\hat{G}_{\simpcnt}
\bakincto
\cdots
\bakincto \hat{G}_{\filtcnt}$}; 
\node (b) at (0,-0.6){$\Ud':\hat{G}_0\incto\cdots\incto
\hat{G}_{j-1}\inctosp{\hat{\tG}}
\hat{G}'_{j}\inctosp{\hat{\sG}}
\hat{G}_{j+1}
\incto
\cdots
\incto
\hat{G}_{\simpcnt}
\bakincto
\cdots
\bakincto \hat{G}_{\filtcnt}$};
\path[->] (a.0) edge [bend left=90,looseness=1.5,arrows={-latex},dashed] (b.0);
\end{tikzpicture}
\end{equation}

We observe that the update of the different types of intervals
for up-down filtrations (see Table~\ref{tab:fzz-map-types}) can be done \emph{independently}:

\begin{itemize}
    \item 
To update the \emph{closed-open} intervals for  Equation~(\ref{eqn:fwd-switch-ud})
(which updates the closed-open intervals for  Equation~(\ref{eqn:fwd-switch})),
we run Algorithm~\ref{alg:std-switch-abs} in Section~\ref{sec:std-switch}
on the ascending part of the up-down filtration.
This is based on
descriptions in Section~\ref{sec:gzz-non-up}
(Table~\ref{tab:fzz-map-types} and
Remark~\ref{rmk:ud-asc-desc}).

\item 
A backward switch in Equation~(\ref{eqn:bak-switch})
causes a backward switch on the descending parts of the up-down filtrations,
which may change the \emph{open-closed} intervals for the filtrations.
For this,
we run Algorithm~\ref{alg:std-switch-abs} 
on the descending part of the up-down filtration.
Our update algorithm hence maintains two sets of data structures 
needed by Algorithm~\ref{alg:std-switch-abs},
for the ascending and descending parts separately.

\item
Following Remark~\ref{rmk:cc-intv} in Section~\ref{sec:gzz-non-up},
to update the \emph{closed-closed} intervals in dimension 0 for the switches,
we only need to keep track of 
the oldest vertices
in the ascending and descending parts
for each connected component of $\hat{G}:=\hat{G}_\simpcnt$ (defined in Equation~(\ref{eqn:fwd-switch-ud})).
Since indices of no more than two vertices can change in a switch,
this can be done in constant time by a simple bookkeeping.
\end{itemize}

We are now left with the update of 
the closed-closed intervals in dimension 1 
for the switch on up-down filtrations,
which are generated from the edge-edge pairs 
(see Remark~\ref{rmk:ee-pair} in Section~\ref{sec:gzz-non-up}).
To describe the update, we first present 
a high-level algorithm (Algorithm~\ref{alg:fwd-switch-ud-cc}) 
by explicitly maintaining representatives (see Definition~\ref{def:rep}) for the edge-edge pairs.
We then justify the correctness of 
Algorithm~\ref{alg:fwd-switch-ud-cc}
by showing that the algorithm correctly maintains the representatives.
After this, 
we propose an efficient implementation of 
Algorithm~\ref{alg:fwd-switch-ud-cc}
by eliminating the maintenance of representatives.
This implementation maintains MSF's for graphs
in $\sqrt{m}$ positions in the up-down filtration,
achieving the claimed time complexity of $O(\sqrt{m}\,\log m)$ per update.

As mentioned, the maintenance of the edge-edge pairs is independent 
of the maintenance of pairs generating other types of intervals,
i.e., when we perform update on one type of pairs (e.g., edge-edge pairs),
we do not need to inform the data structure maintained for updating
other types of pairs (e.g., vertex-edge pairs).
One reason is that a switch involving a vertex, which affects other types of pairs (e.g., vertex-edge pairs), 
does not affect edge-edge pairs 
(see Algorithm~\ref{alg:std-switch-abs} and~\ref{alg:std-switch-full}).
Also, it can be easily verified that for the different cases in an edge-edge switch,
the update in Algorithm~\ref{alg:std-switch-abs} and  Algorithm~\ref{alg:fwd-switch-ud-cc} (presented below)
can be conducted completely independently.
Now define the following:
\begin{definition}\label{def:rep}
For a cell-wise up-down filtration of graphs with parallel edges
\[\Lcal:\emptyset=H_0\inctosp{\varsigma_0} 
H_1\inctosp{\varsigma_{1}}
\cdots\inctosp{\varsigma_{\ell-1}} 
H_{\ell}
\bakinctosp{\varsigma_{\ell}}
H_{\ell+1}
\bakinctosp{\varsigma_{\ell+1}}
\cdots
\bakinctosp{\varsigma_{2\ell-1}} H_{2\ell}=\emptyset,\]
a \defemph{representative cycle} (or simply \defemph{representative})
for an edge-edge pair $(\varsigma_{b},\varsigma_{d})$
is a 1-cycle $z$
s.t.\ $\varsigma_{b}\in z\subseteq H_{b+1}$ and
$\varsigma_{d}\in z\subseteq H_d$.
\end{definition}

The following algorithm 
describes the general idea for
the update on edge-edge pairs:

\begin{algr}\label{alg:fwd-switch-ud-cc}
We describe the algorithm 
for the forward switch in Equation~(\ref{eqn:fwd-switch-ud}).
The procedure for a backward switch on an up-down filtration
is symmetric.
The algorithm maintains 
a set of edge-edge pairs $\Pi$ initially for $\Ud$.
It also maintains 
a representative cycle for each edge-edge pair in $\Pi$.
After the processing,
edge-edge pairs in $\Pi$ and their representatives 
correspond to $\Ud'$.
As mentioned, a switch containing a vertex makes no changes to 
the edge-edge pairs (see Algorithm~\ref{alg:std-switch-full}).
So, suppose that the switch is an edge-edge switch and let 
$e_1:=\hat{\sG}$, $e_2:=\hat{\tG}$.
Moreover,
let $\Ud_u$ be the ascending part of $\Ud$. 
We have the following cases:
\begin{description}
   \item[A. $e_1$ and $e_2$ are both negative in $\Ud_u$:]
   Do nothing (negative edges are not in edge-edge pairs).
   \item[B. $e_1$ is positive and $e_2$ is negative in $\Ud_u$:]
   No pairing changes by this switch.
   \item[C. $e_1$ is negative and $e_2$ is positive in $\Ud_u$:]
   Let $z$ be the representative cycle for the pair $(e_2,\epsilon)\in\Pi$.
   If $e_1\in z$, 
   pair $e_1$ with $\epsilon$ in $\Pi$
   with the same representative $z$
   (notice that $e_2$ becomes unpaired).
   \item[D. $e_1$ and $e_2$ are both positive in $\Ud_u$:]
   Let $z,z'$ be the representative cycles for the pairs $(e_1,\epsilon),(e_2,\epsilon')\in\Pi$
   respectively. Do the following according to different cases:
   \begin{itemize}
       \item If $e_1\in z'$ and the deletion of $\epsilon'$ is before the deletion of $\epsilon$ in $\Ud$:
        Let the representative for $(e_2,\epsilon')$ be $z+z'$.
        The pairing does not change.
       \item If $e_1\in z'$ and the deletion of $\epsilon'$ is after the deletion of $\epsilon$ in $\Ud$:
       Pair $e_1$ and $\eG'$ in $\Pi$
       with the representative $z'$;
       pair $e_2$ and $\eG$ in $\Pi$
       with the representative $z+z'$.
   \end{itemize}
\end{description}
\end{algr}

\begin{remark*}
Algorithm~\ref{alg:fwd-switch-ud-cc} 
has a time complexity of $O(m)$ dominated
by the summation of  1-cycles.
\end{remark*}

Proposition~\ref{prop:cc-rep-pair} and Theorem~\ref{thm:alg-fwd-switch-ud-cc-correct}
justify the correctness of Algorithm~\ref{alg:fwd-switch-ud-cc}.

\begin{proposition}\label{prop:cc-rep-pair}
For $\Lcal$ which is an up-down filtration as in Definition~\ref{def:rep},
let $E_u$ be the set of positive edges in the ascending part of $\Lcal$
and $E_d$ be the set of positive edges in the descending part of $\Lcal$.
Formalize a pairing of $E_u$ and $E_d$ as a bijection $\pi:E_u\to E_d$.
Then, the pairs $\Set{(e,\pi(e))\mid e\in E_u}$ correctly generate
$\Pers^\mathrm{cc}_1(\Lcal)$ if for each $e\in E_u$,
$(e,\pi(e))$ admits a representative cycle in $\Lcal$.
\end{proposition}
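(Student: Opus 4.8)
The plan is to show that a bijection $\pi: E_u \to E_d$ admitting representatives for all its pairs must be \emph{the} pairing produced by the extended-persistence algorithm (Algorithm~\ref{alg:ext-pers}), hence correctly generates $\Pers^{\mathrm{cc}}_1(\Lcal)$. The key tool is the standard fact that in any persistence-type computation, the pairing is uniquely determined once one fixes, for each destroyer, which homology class it kills; equivalently, it suffices to show that the ``latest birth, earliest death'' characterization holds. I would phrase this via the following invariant: for a pair $(e,\pi(e))$ with representative cycle $z$ satisfying $e \in z \subseteq H_{b+1}$ and $\pi(e) \in z \subseteq H_d$ (where $b$ indexes the addition of $e$ in $\Ud_u$ and $d$ indexes the deletion of $\pi(e)$ in $\Ud_d$), the interval $[b+1, d]$ is forced to appear in $\Pers^{\mathrm{cc}}_1(\Lcal)$.

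First I would fix notation: let $\ell$ be the half-length, so the ascending part builds $H_\ell$ and the descending part tears it down. Since $\Lcal$ is a graph up-down filtration, $\Pers_1(\Lcal)$ is entirely closed-closed (no triangles, no additions after deletions), and every $1$-dimensional interval is born in the ascending part at a positive edge of $\Ud_u$ and dies in the descending part at a positive edge of $\Ud_d$. So $|\Pers^{\mathrm{cc}}_1(\Lcal)| = |E_u| = |E_d|$, and a bijection $\pi$ is the correct pairing iff each pair yields an actual interval of the module. The second step is the core argument: I would invoke the interval-decomposition of the zigzag module $\Hm_1(\Lcal)$ and use the representative cycles to exhibit, for each pair $(e,\pi(e))$, an explicit element of $\Hm_1(H_{b+1})$ that maps forward to zero exactly at the moment $\pi(e)$ is deleted. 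Concretely, $[z] \in \Hm_1(H_{b+1})$ is nonzero (since $e$ is a positive edge creating a new cycle, $z$ is not a boundary — there are no $2$-cells), it persists through all of $H_{b+1}, \dots, H_\ell, \dots, H_d$ because $z \subseteq H_d$, and it becomes zero in $\Hm_1(H_{d+1}) $... but wait, here one must be careful: deleting $\pi(e)$ only kills $[z]$ if $z$ genuinely used $\pi(e)$, which is exactly the content of $\pi(e) \in z$. So the representative witnesses that the class born at index $b+1$ survives to exactly index $d$. Running this over all pairs and counting dimensions (the number of such independent classes alive in each $\Hm_1(H_i)$ must equal $\dim \Hm_1(H_i)$, which the bijection guarantees) pins down the barcode.

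The cleanest way to finish is a dimension-counting / linear-independence argument: I would show that for each $i$, the set of representative cycles $z$ of pairs $(e,\pi(e))$ with $b+1 \le i \le d$ spans $\Hm_1(H_i)$, using that these are $\dim \Hm_1(H_i)$ in number (bijectivity plus the up-down structure) and that they are linearly independent in homology. Independence follows because each $z$ contains its creator edge $e \in E_u$ with a distinct creation index, so a minimal dependence relation would have a ``youngest'' edge appearing in only one cycle — contradiction (this is the standard matroid/Kruskal-exchange argument on the cycle space). Once the representatives span $\Hm_1(H_i)$ for every $i$, the interval module structure forces the barcode to be exactly $\{[b_k+1, d_k]\}$, so $\pi$ generates $\Pers^{\mathrm{cc}}_1(\Lcal)$ correctly.

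The main obstacle is the linear-independence step: one must rule out that two different pairs' representative cycles coincide in homology over an overlapping range in a way that would let the bijection ``cheat.'' The resolution is that distinct pairs have distinct ascending-side creator edges (since $\pi$ is a bijection and each representative contains its creator), and the Kruskal-style exchange lemma on the graph's cycle space then gives independence; I expect this to require the most care, especially checking it simultaneously at every filtration index $i$ rather than just at $i = \ell$. Everything else — the reduction to dimension counting, the fact that positive edges of a graph create non-bounding cycles, and the persistence of $[z]$ across the up-down turn — is routine given the preliminaries.
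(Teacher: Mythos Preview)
Your approach is correct but differs substantially from the paper's. The paper gives a two-line proof by citation: it observes that Definition~\ref{def:rep} is a specialization of the general zigzag representative from~\cite{dey2021computing,maria2014zigzag}, invokes Proposition~9 of~\cite{dey2021computing} (any cell pairing in which every pair admits a representative generates the zigzag barcode), and notes that the remaining cells---those producing $\Pers_0(\Lcal)$---already admit representatives, so the full pairing does too.

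Your route is instead a direct, self-contained argument exploiting the special structure of $\Hm_1$ on graphs: since there are no $2$-cells, $\Hm_1(H_i)=\Zyc_1(H_i)$ and every structure map in the zigzag module $\Hm_1(\Lcal)$ is an inclusion of cycle spaces into $\Zyc_1(H_\ell)$. Your youngest-creator argument does give linear independence (for a dependence among the $z_k$, the $z_k$ with the largest $b_k$ contains $e_k$, while every other $z_j$ lies in $H_{b_j+1}\subseteq H_{b_k}$ and hence avoids $e_k$), and the dimension count matches at every index, so the representatives form a basis of each $\Zyc_1(H_i)$ that restricts to interval summands. What each buys: the paper's proof is much shorter and works verbatim in any dimension or complex, but outsources the content to an external reference; yours is longer and specific to graph $\Hm_1$, but is elementary and does not depend on~\cite{dey2021computing}. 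One point to tighten: your opening claim that you will show $\pi$ coincides with the output of Algorithm~\ref{alg:ext-pers} is not what you actually argue (and need not be argued); the basis/decomposition step already yields the barcode directly, so you can drop that framing. Also make explicit the final step that the common basis of all $\Zyc_1(H_i)$ inside $\Zyc_1(H_\ell)$ gives a direct-sum decomposition of the zigzag module into the interval modules $\Ical^{[b_k+1,d_k]}$; you gesture at it (``the interval module structure forces the barcode''), but since all maps are injective this is a one-line check worth writing out.
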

\begin{proof}
The representative defined in Definition~\ref{def:rep} is an adaption of 
the general representative for zigzag persistence 
defined in~\cite{dey2021computing,maria2014zigzag}.
Moreover, Proposition~9 in~\cite{dey2021computing} says that
if we can find a pairing for cells in $\Lcal$ s.t.\ each pair
admits a representative, then the pairing generates the barcode for $\Lcal$.
Since cells in $\Lcal$ other than those in $E_u\union E_d$
generate intervals in $\Pers_0(\Lcal)$, 
the corresponding cell pairs
which generate $\Pers_0(\Lcal)$
must admit representatives.
Combining Proposition~9 in~\cite{dey2021computing} and the assumption that each pair
$(e,\pi(e))$ admits a representative cycle,
we can arrive at the conclusion.
\end{proof}

\begin{theorem}\label{thm:alg-fwd-switch-ud-cc-correct}
Algorithm~\ref{alg:fwd-switch-ud-cc} correctly updates
the edge-edge pairs
for the switch in Equation~(\ref{eqn:fwd-switch-ud}).
\end{theorem}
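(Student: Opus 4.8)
The plan is to verify, case by case, that the four cases A--D of Algorithm~\ref{alg:fwd-switch-ud-cc} produce a pairing of $E_u$ and $E_d$ for $\Ud'$ in which every edge-edge pair still admits a representative cycle; by Proposition~\ref{prop:cc-rep-pair} this suffices, since a pairing admitting representatives necessarily generates $\Pers_1^{\mathrm{cc}}(\Ud')$. Throughout I would use that the forward switch of Equation~(\ref{eqn:fwd-switch-ud}) only alters the ascending part $\Ud_u$, and only at positions $j-1$ and $j$, so the descending part $\Ud_d$ and the set $E_d$ are unchanged; moreover by Remark~\ref{rmk:ee-pair} the set $E_u$ of ascending positive edges can change only when $e_1,e_2$ swap positivity status. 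The first thing to establish (echoing Proposition~\ref{prop:e-e-switch-pos-neg} and Algorithm~\ref{alg:std-switch-full}) is the bookkeeping on positivity of $e_1,e_2$ in $\Ud_u$ under the switch: in cases A, B, and D both edges keep their status, whereas in case C ($e_1$ negative, $e_2$ positive in $\Ud_u$) the status flips exactly when $e_1$ lies on the representative cycle $z$ of $(e_2,\epsilon)$, i.e.\ when $e_1$ belongs to the unique cycle that $e_2$ closes in $\hat G_{j-1}$.

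For cases A and B there is nothing to prove beyond this: the set $E_u$ is unchanged, negative edges are never in edge-edge pairs, and the existing representatives $z$ for untouched pairs $(e,\pi(e))$ still satisfy $e,\pi(e)\in z\subseteq \hat G_{b+1}$, since the graphs $\hat G_k$ are unchanged for $k\neq j$ and a representative valid at index $j$ or $j+1$ remains valid after a transposition of two edges both present in $\hat G_{j+1}=\hat G'_{j+1}$. For case C, when $e_1\in z$, after the switch $e_1$ becomes positive and $e_2$ negative in $\Ud'_u$; I would check that $z$ is still a legitimate representative for the new pair $(e_1,\epsilon)$: we have $e_1\in z$ by hypothesis, $\epsilon\in z$ since $z$ was a representative for $(e_2,\epsilon)$, and the containment $z\subseteq \hat G'_{j+1}$ holds because $z$ used only $e_1$, $e_2$ (now both in $\hat G'_{j+1}=\hat G_{j+1}$) and earlier edges; on the death side the index $\epsilon$ of the destroyer in $\Ud_d$ is unchanged, so $z\subseteq H_{d}$ still holds. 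When $e_1\notin z$, a representative of $(e_2,\epsilon)$ avoiding $e_1$ survives the removal of $e_1$ from the partial complex, so $e_2$ stays positive and no pair changes.

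Case D is the substantive one and I expect it to be the main obstacle. Here $e_1,e_2$ are both positive in $\Ud_u$ with pairs $(e_1,\epsilon)$, $(e_2,\epsilon')$ and representatives $z,z'$; after the switch $e_1$ is inserted \emph{after} $e_2$ in $\Ud'_u$, so if $e_1\in z'$ then the cycle $z'$ is no longer available to $e_2$ at birth index $j$ of $e_2$ in $\Ud'_u$ and a repair is needed. I would argue: if $e_1\in z'$ and $\epsilon'$ is deleted before $\epsilon$ in $\Ud$, then $z+z'$ is a cycle not containing $e_1$ (the $e_1$ summands cancel), it contains $e_2$ (present in $z'$ but not in $z$, since $e_2$ was added after $e_1$ in $\Ud_u$ and $z$ is a representative born with $e_1$), it is contained in every $\hat G_k$ that contains both $z$ and $z'$, and on the descending side $z\subseteq H_d$ with $d$ indexing $\epsilon$ and $\epsilon$ deleted after $\epsilon'$, so $z+z'\subseteq H_{d'}$ where $d'$ indexes $\epsilon'$; hence $z+z'$ is a valid representative for the \emph{unchanged} pair $(e_2,\epsilon')$ in $\Ud'$, and $z$ remains valid for $(e_1,\epsilon)$. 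If instead $\epsilon'$ is deleted after $\epsilon$, then by the Algorithm~\ref{alg:ext-pers}/youngest-edge rule applied in $\Ud'$ the roles swap: $e_1$ (now the younger ascending positive edge on the cycle $z'$) should pair with $\epsilon'$, with representative $z'$, and $e_2$ should pair with $\epsilon$, with representative $z+z'$; I would verify the two representative conditions exactly as above, now using that $z'\subseteq H_{d'}$ and $d'>d$ gives $z'\subseteq H_d$, hence $z+z'\subseteq H_d$. Finally, if $e_1\notin z'$, then $z'$ is untouched by the switch and no pairing changes. Assembling the cases, in every instance Algorithm~\ref{alg:fwd-switch-ud-cc} outputs a bijection $E_u\to E_d$ together with a representative for each pair in $\Ud'$, so Proposition~\ref{prop:cc-rep-pair} yields that the updated pairs generate $\Pers_1^{\mathrm{cc}}(\Ud')$, which is exactly the claim. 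The symmetric statement for a backward switch follows by reversing the up-down filtration and applying the same argument to the descending part.
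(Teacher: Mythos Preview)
Your approach is exactly the paper's: reduce via Proposition~\ref{prop:cc-rep-pair} to checking that each updated pair admits a representative cycle in $\Ud'$, and then verify this case by case. The paper simply asserts that ``the validity of these cycles [is] evident from the algorithm'' and omits the details you supply; your case analysis is correct, with one small omission worth noting---in Case~D you should also observe that $\epsilon'\notin z$ (first sub-case) and $\epsilon\notin z'$ (second sub-case), which follows because $z\subseteq H_d$ excludes $\epsilon'$ once $d'<d$, and $z'\subseteq H_{d'}$ excludes $\epsilon$ once $d<d'$, so that $\epsilon'$ (resp.\ $\epsilon$) indeed survives in $z+z'$.
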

\begin{proof}
By Proposition~\ref{prop:cc-rep-pair},
the correctness of the updated edge-edge pairs in Algorithm~\ref{alg:fwd-switch-ud-cc}
follows from the correctness of the updated representative cycles for these pairs.
We omit the details for verifying the validity of these cycles,
which are evident from the algorithm.
\end{proof}

As mentioned, to efficiently implement Algorithm~\ref{alg:fwd-switch-ud-cc},
we eliminate maintaining the representatives in the algorithm.
We notice that there are only two places in Algorithm~\ref{alg:fwd-switch-ud-cc}
where the pairing changes: 
\begin{itemize}
    \item Case C when $e_1\in z$;
    \item Case D when 
$e_1\in z'$ and the deletion of $\epsilon'$ is after the deletion of $\epsilon$ in $\Ud$.
\end{itemize}

For Case C,
we have the following fact:
\begin{proposition}
In Case C of Algorithm~\ref{alg:fwd-switch-ud-cc},
the edge 
$e_1$ is in $z$ 
if and only if $e_1$ is in a 1-cycle in $\hat{G}_{j+1}$ 
(refer to Equation~(\ref{eqn:fwd-switch-ud})). 
\end{proposition}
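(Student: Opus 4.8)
The plan is to read off both directions from the defining property of the representative $z$ together with the fact that $e_1$ is negative in $\Ud_u$. Recall that in Case~C we have the inclusions $\hat{G}_{j-1}\inctosp{e_1}\hat{G}_j\inctosp{e_2}\hat{G}_{j+1}$ of Equation~(\ref{eqn:fwd-switch-ud}), and $z$ is the representative cycle for the edge-edge pair $(e_2,\epsilon)\in\Pi$. By Definition~\ref{def:rep}, applied with $e_2$ the cell of the inclusion $\hat{G}_j\incto\hat{G}_{j+1}$, the cycle $z$ satisfies $e_2\in z\subseteq\hat{G}_{j+1}$. The ``only if'' direction is then immediate: if $e_1\in z$, then $z$ itself is a $1$-cycle contained in $\hat{G}_{j+1}$ and containing $e_1$.

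For the ``if'' direction I would first translate negativity of $e_1$ in $\Ud_u$ into a statement about cycle spaces. Since $e_1$ is negative, inserting it in $\hat{G}_{j-1}\incto\hat{G}_j$ merges two connected components, so $e_1$ is a bridge of $\hat{G}_j$; hence the space of $1$-cycles satisfies $\Zyc_1(\hat{G}_j)=\Zyc_1(\hat{G}_{j-1})$, and no element of $\Zyc_1(\hat{G}_j)$ contains $e_1$ (indeed $e_1\notin\hat{G}_{j-1}$). Working over $\Zbb_2$, I would split the $1$-cycles of $\hat{G}_{j+1}=\hat{G}_j\cup\{e_2\}$ according to whether they contain $e_2$: a $1$-cycle avoiding $e_2$ lies in $\Zyc_1(\hat{G}_j)$, while a $1$-cycle $\gamma$ containing $e_2$ satisfies $\gamma+z\in\Zyc_1(\hat{G}_j)$ because $\gamma+z$ avoids $e_2$. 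Now assume $e_1$ lies in some $1$-cycle $\gamma\subseteq\hat{G}_{j+1}$. Then $\gamma$ cannot avoid $e_2$, since no cycle in $\Zyc_1(\hat{G}_j)$ contains $e_1$; so $\gamma=z+\gamma_0$ with $\gamma_0:=\gamma+z\in\Zyc_1(\hat{G}_j)$, whence $e_1\notin\gamma_0$. Comparing coefficients of $e_1$ in $z=\gamma+\gamma_0$ then gives $e_1\in z$, as desired.

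The only genuinely load-bearing step is the passage ``$e_1$ negative in $\Ud_u$'' $\Rightarrow$ ``$e_1$ is a bridge of $\hat{G}_j$ and $\Zyc_1(\hat{G}_j)=\Zyc_1(\hat{G}_{j-1})$'', which is the standard fact that inserting an edge joining two distinct connected components leaves the cycle space unchanged; this remains valid for $\DG$-complexes (multigraphs) via the rank formula $\dim\Zyc_1=|E|-|V|+c$ with $c$ the number of connected components, so it costs only a sentence. Everything after that is a routine $\Zbb_2$ manipulation of cycles sorted by whether they contain $e_2$, so I do not expect any real obstacle here.
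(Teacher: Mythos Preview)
Your proof is correct. The ``only if'' direction matches the paper verbatim. For the ``if'' direction the paper argues by contradiction via Betti numbers: assuming $e_1\notin z$ but $e_1$ lies on some cycle in $\hat{G}_{j+1}$, it observes that in the switched filtration $\Ucal'$ both additions $\hat{G}_{j-1}\inctosp{e_2}\hat{G}'_j\inctosp{e_1}\hat{G}_{j+1}$ would then be positive (since $z\subseteq\hat{G}'_j$ witnesses $e_2$ positive and the hypothesized cycle witnesses $e_1$ positive), forcing $\dim\Zyc_1$ to jump by $2$ from $\hat{G}_{j-1}$ to $\hat{G}_{j+1}$, contradicting that $e_1$ is negative and $e_2$ positive in $\Ud_u$. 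Your argument is instead a direct cycle-space decomposition: you use that $e_1$ is a bridge of $\hat{G}_j$ to conclude no cycle of $\hat{G}_j$ contains $e_1$, then split cycles of $\hat{G}_{j+1}$ by their $e_2$-coefficient and read off $e_1\in z$ by $\Zbb_2$-arithmetic. Your route is a bit more elementary in that it never invokes the switched filtration $\Ucal'$ and stays entirely inside $\hat{G}_{j+1}$; the paper's route is shorter to state once one is willing to cite the Betti-number behavior of positive versus negative edges. Both arguments ultimately hinge on the same fact (adding a bridge leaves the cycle space unchanged), just packaged differently.
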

\begin{proof}
By definition,
$z$ is a 1-cycle in  $\hat{G}_{j+1}$. 
So the fact that $e_1$ is in $z$ 
trivially implies that 
$e_1$ is in a 1-cycle in $\hat{G}_{j+1}$.

To prove the `if' part of the proposition,
suppose that $e_1$ is in a cycle $z'\subseteq\hat{G}_{j+1}$
but $e_1\not\in z$.
We have that 
the 1st betti number increases by 1 
from $\hat{G}'_{j}$ to $\hat{G}_{j+1}$
because
$z'$ is a 1-cycle created by the addition of $e_1$
 in $\Ucal'$.
By definition, $e_2\in z\subseteq\hat{G}_{j+1}$.
Since $e_1\not\in z$,
we have that $z\subseteq\hat{G}'_{j}$.
So 
the 1st betti number also increases by 1 
from $\hat{G}_{j-1}$ to $\hat{G}'_{j}$
because
$z$ is a 1-cycle created by the addition of $e_2$
 in $\Ucal'$.
Therefore,
the 1st betti number  increases by 2
from $\hat{G}_{j-1}$ to $\hat{G}_{j+1}$. 
However, this is a contradiction
because the fact that $e_1$ is negative and $e_2$ is positive in $\Ud_u$
implies that
the 1st betti number increases by 1
from $\hat{G}_{j-1}$ to $\hat{G}_{j+1}$.
\end{proof}

The above proposition implies that to check whether $e_1\in z$,
we only need to check whether $e_1$ is in a 1-cycle in $\hat{G}_{j+1}$.
The checking can be similarly performed as in Algorithm~\ref{alg:std-switch-abs}
by maintaining a Link-Cut tree~\cite{sleator1981data}
for the MSF of $\hat{G}:=\hat{G}_\simpcnt$,
and then looking up the bottleneck weight of the path between $e_2$'s vertices 
in the MSF
(see Section~\ref{sec:std-switch}).
Notice that edge weights in $\hat{G}$ 
are the edges' indices in the ascending part of the up-down filtration.
This also means that to perform the backward switch on the up-down filtration,
we also need to maintain an MSF of $\hat{G}$
where the edge weights are
indices in the descending part of the up-down filtration.
Maintaining and querying the MSF is done in $O(\log m)$ time
using the Link-Cut tree~\cite{sleator1981data}.

For Case D when 
the deletion of $\epsilon'$ is after the deletion of $\epsilon$ in $\Ud$,
we only need to 
to determine whether 
to keep the pairing
(pairing $e_1$ with $\eG$ and 
$e_2$ with $\eG'$)
or
change the pairing 
(pairing $e_1$ with $\eG'$ and 
$e_2$ with $\eG$).
Let the edge $\eG'$ be deleted from $\hat{G}_h$ to form $\hat{G}_{h+1}$ in $\Ucal$,
i.e., the edge pair $(e_2,\eG')$ generates an interval $[j+1,h]\in\Pers_1(\Ucal)$.
We have the following:
\begin{observation}
\label{obsv:fwd-switch-ud-cc-caseD-change}
For Case D of Algorithm~\ref{alg:fwd-switch-ud-cc} when 
the deletion of $\epsilon'$ is after the deletion of $\epsilon$ in $\Ud$,
if there is a cycle containing $e_2$ and $\eG'$ in $\hat{G}'_j\intersect\hat{G}_h$,
then the pairing stays the same before and after the switch;
otherwise, the pairing changes.
\end{observation}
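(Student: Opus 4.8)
The plan is to recall that in Case~D we are in the setting where $e_1,e_2$ are both positive in $\Ud_u$, the deletion of $\epsilon'$ comes after the deletion of $\epsilon$, and $e_1 \in z'$ (where $z'$ is the representative for $(e_2,\epsilon')$ and $z$ is the representative for $(e_1,\epsilon)$). Algorithm~\ref{alg:fwd-switch-ud-cc} in this sub-case prescribes the pairing $e_1 \leftrightarrow \epsilon'$ with representative $z'$, and $e_2 \leftrightarrow \epsilon$ with representative $z+z'$. The first step is to observe that this prescription is \emph{one} valid pairing (it admits representatives, by the reasoning already used in Theorem~\ref{thm:alg-fwd-switch-ud-cc-correct}), but that the pairing need not actually change: whether the barcode forces a change depends on whether the \emph{old} pairing $e_1 \leftrightarrow \epsilon$, $e_2 \leftrightarrow \epsilon'$ still admits representatives \emph{for the new filtration} $\Ud'$. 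So by Proposition~\ref{prop:cc-rep-pair}, the pairing stays the same if and only if $(e_1,\epsilon)$ and $(e_2,\epsilon')$ each admit a representative cycle in $\Ud'$.

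Next I would analyze these two conditions. For $(e_1,\epsilon)$: in $\Ud'$ the cell $e_1$ is added at position $j$ (one step earlier than in $\Ud$), and the ascending parts of $\Ud$ and $\Ud'$ agree except for the swap of $e_1,e_2$; since $z \ni e_1$ is a cycle in $\hat G_j \subseteq \hat G'_j$ already (as $e_1$ is positive in $\Ud_u$, $z \setminus \{e_1\}$ is a path in $\hat G_{j-1}$), $z$ is still born at or before position $j$ in $\Ud'$ and still dies at the deletion of $\epsilon$; so $(e_1,\epsilon)$ automatically admits the representative $z$ in $\Ud'$. Hence the only obstruction is whether $(e_2,\epsilon')$ admits a representative in $\Ud'$. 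In $\Ud'$, $e_2$ is added at position $j+1$ instead of $j$, so a representative for $(e_2,\epsilon')$ must be a cycle $z''$ with $e_2,\epsilon' \in z'' \subseteq \hat G_{j+1} \cap \hat G_h$ (using $\hat G_{j+1}$ from $\Ud'$, which equals $\hat G_{j+1}$ from $\Ud$ at the level of underlying complexes, and $\hat G_h$ where $\epsilon'$ is deleted). The key reduction is that such a $z''$ exists if and only if $e_2$ and $\epsilon'$ lie on a common cycle in the \emph{graph} $\hat G'_j \cap \hat G_h$ — this is the content of the Observation, and I would prove it by noting that $\hat G'_j \cap \hat G_h$ is exactly the largest subgraph available in both $\Ud'$ (from position $j$ onward, before $e_2$) and $\Ud$ (up to position $h$, before $\epsilon'$ is removed), so adding $e_2$ at position $j+1$ to this subgraph and having $\epsilon'$ present up to $h$ is precisely when the cycle through $e_2$ and $\epsilon'$ can be realized within the window $[j+1,h]$; conversely if no such cycle exists, then no representative for $(e_2,\epsilon')$ exists in $\Ud'$, so by Proposition~\ref{prop:cc-rep-pair} the old pairing fails and the pairing must change (and Algorithm~\ref{alg:fwd-switch-ud-cc}'s prescribed change is then the correct one).

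The main obstacle I anticipate is making the "if and only if" tight: I need to argue that the \emph{only} candidate cycle that matters is one living in $\hat G'_j \cap \hat G_h$, i.e., that requiring $z'' \subseteq \hat G_{j+1}$ (with $e_2$ present) and $z'' \subseteq \hat G_h$ (with $\epsilon'$ present) is equivalent to $z'' \setminus \{e_2\}$ being a path in $\hat G'_j \cap \hat G_h$ connecting $e_2$'s endpoints and passing through $\epsilon'$. The forward direction is immediate; the subtle direction is that if $(e_2,\epsilon')$ admits \emph{any} representative in $\Ud'$, it admits one of this restricted form — here I would invoke that $\hat G_{j+1} \setminus \{e_1,e_2\} = \hat G'_j \setminus \{e_1\}$ and that $e_1 \notin z''$ can be arranged (if $e_1 \in z''$ we can replace $e_1$ by the path $z \setminus \{e_1\}$, which lies in $\hat G_{j-1} \subseteq \hat G'_j \cap \hat G_h$, since $\epsilon$ is deleted after $\epsilon'$... wait, $\epsilon$ is deleted \emph{before} $\epsilon'$, so $z \setminus\{e_1\} \subseteq \hat G_{j-1}$ but we need it $\subseteq \hat G_h$ — and $\hat G_{j-1} \subseteq \hat G_h$ fails in general on the descending side). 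I would resolve this by working with $z' \ni e_1$ instead: since $e_1 \in z'$ and $z'$ is the representative for $(e_2,\epsilon')$, we have $z' \subseteq \hat G_h$ (as $\epsilon'$ dies at $h$) and $z' \subseteq \hat G_{j+1}$, so $z' \setminus \{e_1,e_2\}$ is a path in $\hat G'_j \cap \hat G_h$, and $z + z'$ (the representative Algorithm~\ref{alg:fwd-switch-ud-cc} assigns) minus $e_2$ decomposes appropriately; carefully tracking which of $\hat G_{j-1}, \hat G'_j, \hat G_{j+1}, \hat G_h$ contains each piece is the delicate bookkeeping that the proof must get right.
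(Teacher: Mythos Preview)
Your overall strategy is the same as the paper's: invoke Proposition~\ref{prop:cc-rep-pair} and reduce the question to whether the \emph{old} pair $(e_2,\epsilon')$ still admits a representative in $\Ud'$, noting that $(e_1,\epsilon)$ automatically keeps $z$ as representative. That is exactly what the paper does.

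However, you have the direction of the switch reversed, and this creates a phantom difficulty. In $\Ud'$ (see Equation~(\ref{eqn:fwd-switch-ud})), $e_2=\hat\tau$ is added \emph{earlier}, at index $j-1$, producing $\hat G'_j$; and $e_1=\hat\sigma$ is added \emph{later}, at index $j$, producing $\hat G_{j+1}$. Consequently, the representative condition for $(e_2,\epsilon')$ in $\Ud'$ reads, verbatim from Definition~\ref{def:rep},
\[
e_2,\epsilon'\in z''\subseteq \hat G'_j\quad\text{and}\quad z''\subseteq \hat G_h,
\]
i.e.\ $z''\subseteq \hat G'_j\cap\hat G_h$. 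This is \emph{literally} the condition in the Observation; there is no reduction to perform. The paper's argument is accordingly two sentences: if such a cycle $z''$ exists it is a representative for $(e_2,\epsilon')$ in $\Ud'$ (and $z$ still serves for $(e_1,\epsilon)$), so the pairing is unchanged; if no such cycle exists then $(e_2,\epsilon')$ has no representative in $\Ud'$ and hence cannot be a pair, so the pairing changes.

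All of your last two paragraphs---the worry about getting from $\hat G_{j+1}\cap\hat G_h$ down to $\hat G'_j\cap\hat G_h$, the attempt to excise $e_1$ from a hypothetical $z''$ by routing through $z$ or $z'$, and the ``delicate bookkeeping''---stem from the indexing slip and can be dropped entirely once you note that the complex immediately after adding $e_2$ in $\Ud'$ is $\hat G'_j$, not $\hat G_{j+1}$.
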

\begin{remark*}
Since we are focusing on the 1-cycles in graphs,
for the graph intersection $\hat{G}'_j\intersect\hat{G}_h$,
we only need to take the intersection of their edge sets.
For convenience, we can assume $\hat{G}'_j\intersect\hat{G}_h$ 
to contain all vertices in $\hat{G}$.
\end{remark*}

To see the correctness of the observation,
first suppose that 
there is a cycle $z''$ containing $e_2$ and $\eG'$ in $\hat{G}'_j\intersect\hat{G}_h$.
Notice that $z''$ is nothing but a representative for the pair $(e_2,\eG')$
in $\Ucal'$.
We also have that $z$ is still a representative for  the pair $(e_1,\eG)$
in $\Ucal'$.
So the paring stays the same before and after the switch 
based on Proposition~\ref{prop:cc-rep-pair}.
If there is not a cycle containing $e_2$ and $\eG'$ in $\hat{G}'_j\intersect\hat{G}_h$,
then there is no representative cycle for the pair $(e_2,\eG')$
in $\Ucal'$.
So $e_2$ cannot be paired with $\eG'$ in $\Ucal'$.

Based on Observation~\ref{obsv:fwd-switch-ud-cc-caseD-change},
in order to obtain the pairing in Case D,
we only need to check whether there is a cycle 
containing both $e_2$ and $\eG'$ in $\hat{G}'_j\intersect\hat{G}_h$.
This is equivalent to independently checking
(i) whether there is a cycle $c_1$ in $\hat{G}'_j\intersect\hat{G}_h$
containing $e_2$
and (ii) whether there is  a cycle $c_2$ in $\hat{G}'_j\intersect\hat{G}_h$
containing $\eG'$.
The reason is that, if $\eG'\in c_1$ or $e_2\in c_2$,
then either $c_1$ or $c_2$ would be 
a cycle in $\hat{G}'_j\intersect\hat{G}_h$
containing both edges;
otherwise, $c_1+c_2$ is
a cycle in $\hat{G}'_j\intersect\hat{G}_h$
containing both edges.
We then reduce the checking of (i) and (ii)
into a graph connectivity problem,
e.g., 
checking (ii) is equivalent to 
checking whether the two vertices of $\eG'$ is connected
in $\hat{G}'_{j}\intersect\hat{G}_{h+1}$.
We conclude:
\begin{observation}\label{obsv:obsv:fwd-switch-ud-cc-caseD-query}
To determine the pairing in Case D of Algorithm~\ref{alg:fwd-switch-ud-cc},
one only needs to perform two graph connectivity queries. Each query
asks whether two vertices $u,v$ are connected in $H\intersect H'$,
where $H$ is a graph in the ascending part of $\Ucal'$
and $H'$ is a graph in the descending part of $\Ucal'$.
\end{observation}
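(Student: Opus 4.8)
The plan is to unpack the statement from Observation~\ref{obsv:fwd-switch-ud-cc-caseD-change} through two elementary reductions. By that observation, deciding whether the pairing is retained in Case~D is the same as deciding whether $\hat{G}'_j\intersect\hat{G}_h$ carries a $1$-cycle through \emph{both} $e_2$ and $\epsilon'$. First I would record the decoupling claim: such a cycle exists in $\hat{G}'_j\intersect\hat{G}_h$ iff this graph contains a $1$-cycle $c_1$ through $e_2$ and (independently) a $1$-cycle $c_2$ through $\epsilon'$. The forward implication is immediate. For the converse, if $\epsilon'\in c_1$ or $e_2\in c_2$ we are done; otherwise $c_1+c_2$ (mod-$2$ sum) is again a $1$-cycle, lies in $\hat{G}'_j\intersect\hat{G}_h$, and has coefficient $1$ on each of $e_2$ and $\epsilon'$, so it is the desired cycle.

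Second, I would convert each of the two cycle-membership questions into a connectivity question. For a multigraph $K$ and an edge $f=(u,v)\in K$, the edge $f$ lies on a $1$-cycle of $K$ iff $u$ and $v$ remain connected in $K$ after deleting the single copy $f$. Applying this with $K=\hat{G}'_j\intersect\hat{G}_h$: deleting $e_2$ yields $\hat{G}_{j-1}\intersect\hat{G}_h$, since $e_2=\hat{\tG}$ is precisely the cell whose insertion takes $\hat{G}_{j-1}$ to $\hat{G}'_j$ in $\Ud'$; and deleting $\epsilon'$ yields $\hat{G}'_j\intersect\hat{G}_{h+1}$, since $\epsilon'$ is the cell whose deletion takes $\hat{G}_h$ to $\hat{G}_{h+1}$ in the descending part of $\Ud'$. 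Here $\hat{G}_{j-1}$ and $\hat{G}'_j$ are graphs in the ascending part of $\Ud'$ while $\hat{G}_h$ and $\hat{G}_{h+1}$ are graphs in its descending part, so both resulting queries are exactly of the asserted form. Chaining the two reductions with Observation~\ref{obsv:fwd-switch-ud-cc-caseD-change} gives: the pairing is retained iff both connectivity queries answer ``yes'', and changes otherwise, which is the claim.

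The step requiring genuine care is the bookkeeping with the parallel edges introduced by the $\DG$-complex conversion: ``$K$ minus $f$'' must mean deleting the specific copy $f$, ``a $1$-cycle through $f$'' must mean a cycle using that copy, and the cycle/connectivity equivalence is only valid when $f\in K$, i.e.\ when $e_2,\epsilon'\in\hat{G}'_j\intersect\hat{G}_h$. I expect this to be the main obstacle, and I would handle it through the representative-cycle characterization of Definition~\ref{def:rep} and Proposition~\ref{prop:cc-rep-pair}: whenever such a containment fails, no representative for $(e_2,\epsilon')$ fits inside $\hat{G}'_j\intersect\hat{G}_h$, so the pairing necessarily changes, and one checks that the two connectivity queries, as formulated, return exactly this verdict in those boundary sub-cases. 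The remainder is the routine chaining of the two equivalences with Observation~\ref{obsv:fwd-switch-ud-cc-caseD-change}.
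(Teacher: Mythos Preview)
Your proposal is correct and follows essentially the same route as the paper: reduce via Observation~\ref{obsv:fwd-switch-ud-cc-caseD-change} to the existence of a cycle through both $e_2$ and $\epsilon'$ in $\hat{G}'_j\cap\hat{G}_h$, decouple this into two separate ``edge-on-a-cycle'' questions using the $c_1+c_2$ trick, and then rewrite each as a connectivity query after deleting the edge in question. You are somewhat more explicit than the paper (you spell out both queries and flag the parallel-edge bookkeeping), but the argument is the same.
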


We now describe how to address the connectivity query 
as in Observation~\ref{obsv:obsv:fwd-switch-ud-cc-caseD-query}
for $\Ucal$
(the query for $\Ucal'$ can be done similarly
by first changing the data structures we maintain for $\Ucal$
into ones for $\Ucal'$).
One way to solve such a connectivity problem
for $\Ucal$
is to maintain an MSF for each $\hat{G}_\lG$ in the descending part $\Ucal_d$ of $\Ucal$ ($\lG\geq\simpcnt$),
where edges in  $\hat{G}_\lG$ are weighted by their indices in the ascending part $\Ucal_u$.
Then for a $\hat{G}_\gG$ in 
$\Ucal_u$
($\gG<\simpcnt$),
determining whether two vertices $u,v$ are connected in $\hat{G}_\gG\intersect\hat{G}_\lG$
can be done by querying the bottleneck weight of the path connecting $u,v$
in the MSF of $\hat{G}_\lG$.
Representing the MSF's as Link-Cut trees~\cite{sleator1981data},
such a query can be done in $O(\log \filtcnt)$ time.

One problem with the previous approach is that whenever we perform a forward switch on $\Ucal$,
weights for $e_1,e_2$ swap, which may induce a change on the MSF's for $O(m)$ many graphs in $\Ucal_d$.
Since updating each MSF takes $O(\log m)$ time (see Section~\ref{sec:std-switch}),
the time complexity of the update on edge-edge pairs 
then becomes $O(m\log m)$.

To perform the update faster,
we maintain MSF's for $\sqrt{m}$ many graphs in $\Ucal_d$,
where the index difference of each two consecutive graphs for which MSF's are maintained
is $\sqrt{m}$. Formally, we \emph{always} maintain MSF's for the following graphs in $\Ucal_d$:
\begin{equation}
\label{eqn:msf-graph-seq}
\hat{G}_{m}\subseteq\hat{G}_{m-\sqrt{m}}\subseteq\hat{G}_{m-2\sqrt{m}}\subseteq\cdots\subseteq\hat{G}_{m-\ell\sqrt{m}},
\end{equation}
where $\ell=\lfloor k/\sqrt{m}\rfloor$.
Notice that whenever we use `$\sqrt{m}$' in the above,
what we actually mean is either `$\lfloor\sqrt{m}\rfloor$'
or `$\lceil\sqrt{m}\rceil$'. 
We will continue abusing the notation
in the rest of this section
by using 
`$\sqrt{m}$' in place of its `floor' or `ceiling'.

Now, to obtain the MSF for an arbitrary $\hat{G}_\lG$ in $\Ucal_d$,
we first find the  graph $\hat{G}_{m-q\sqrt{m}}$ with the smallest index
which is contained in $\hat{G}_\lG$.
Then, from the MSF of $\hat{G}_{m-q\sqrt{m}}$,
we go over each edge $e$ between $\hat{G}_{m-q\sqrt{m}}$ and $\hat{G}_\lG$
in $\Ucal_d$ and see whether $e$ should be inserted into the MSF of $\hat{G}_{m-q\sqrt{m}}$.
We can get the MSF for $\hat{G}_\lG$
in $O(\sqrt{m}\,\log m)$ time using the above process
where no more than $\sqrt{m}$ edges need to be processed.
Notice that we can also understand the above process as
building the MSF for $\hat{G}_\lG$
using Kruskal's algorithm starting from the `already built'  MSF for $\hat{G}_{m-q\sqrt{m}}$,
where Link-Cut trees~\cite{sleator1981data} are used for checking the connectivity.

Notice that after making queries on the MSF of $\hat{G}_\lG$,
we need to recover the MSF of $\hat{G}_{m-q\sqrt{m}}$
so that
exactly the MSF's of graphs in Sequence~(\ref{eqn:msf-graph-seq})
are maintained.
The MSF of $\hat{G}_{m-q\sqrt{m}}$
can be recovered by recording what edges are added
to the original MSF of $\hat{G}_{m-q\sqrt{m}}$
to form the MSF of $\hat{G}_\lG$,
and then deleting those edges
from the MSF of $\hat{G}_\lG$
using the \textsc{cut} operation
in Link-Cut trees~\cite{sleator1981data}.
Since no more than $\sqrt{m}$ edges need to be
deleted,
recovering the MSF of $\hat{G}_{m-q\sqrt{m}}$
takes $O(\sqrt{m}\,\log m)$ time.

To finish a forward switch, 
we also need to update the maintained MSF's
due to the weight change on edges,
which takes $O(\sqrt{m}\,\log m)$ time
because there are no more than $\sqrt{m}$ such MSF's
and updating each MSF takes $O(\log m)$ time.
Notice that
for a backward switch, 
instead of updating $\sqrt{m}$ many MSF's maintained for $\Ucal_d$,
we need to reconstruct at most one such MSF.
The reconstruction is the same as constructing
the MSF for an arbitrary graph in $\Ucal_d$,
which takes $O(\sqrt{m}\,\log m)$ time.
Notice that to do the reconstruction,
we also need to record which edges are added to 
the MSF of $\hat{G}_{m-q\sqrt{m}}$
to form the MSF of $\hat{G}_{m-(q+1)\sqrt{m}}$,
for any consecutive graphs 
$\hat{G}_{m-q\sqrt{m}},\hat{G}_{m-(q+1)\sqrt{m}}$
in Sequence~(\ref{eqn:msf-graph-seq}).
Therefore,
updating the edge-edge pairs
takes $O(\sqrt{m}\,\log m)$ time.

We now conclude the following:
\begin{theorem}
For the switches on graph zigzag filtrations, the \emph{closed-closed} intervals in dimension 0
    can be maintained in $O(1)$ time;
    the \emph{closed-open} and \emph{open-closed} intervals, which appear only in dimension $0$,
    can be maintained in $O(\log m)$ time;
    the \emph{open-open} intervals in dimension 0 and 
    \emph{closed-closed} intervals in dimension 1
    can be maintained in $O(\sqrt{m}\,\log m)$ time. 
\end{theorem}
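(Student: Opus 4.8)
The plan is to assemble the statement from the case analysis developed in this section, reducing every update to operations on the corresponding up-down filtration. First I would recall the reduction of Section~\ref{sec:conv-2-ud}: a forward switch on $\Fcal$ induces a forward switch on the ascending part of the up-down filtration $\Ud$, a backward switch induces a backward switch on the descending part, and outward/inward switches leave $\Ud$ unchanged --- so in the latter two cases every quantity is updated in $O(1)$ and there is nothing to prove. By Table~\ref{tab:fzz-map-types}, the families $\Pers_0^{\text{co}}(\Ud)$, $\Pers_0^{\text{oc}}(\Ud)$, $\Pers_0^{\text{cc}}(\Ud)$, $\Pers_1^{\text{cc}}(\Ud)$ are in bijection with $\Pers_0^{\text{co}}(\Fcal)$, $\Pers_0^{\text{oc}}(\Fcal)$, $\Pers_0^{\text{cc}}(\Fcal)$, and $\Pers_0^{\text{oo}}(\Fcal)\cup\Pers_1^{\text{cc}}(\Fcal)$ respectively, and by Theorem~\ref{thm:fzz-intv-map} each bijection is evaluated in $O(1)$ from $\phi$ (which changes at no more than two indices under a switch). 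Hence it suffices to bound the cost of maintaining each of the four families for $\Ud$.

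Next I would dispatch the three easy families. For $\Pers_0^{\text{cc}}(\Ud)$: by Remark~\ref{rmk:cc-intv} each such interval is determined by the oldest vertex of $\Ud_u$ and the oldest vertex of $\Ud_d$ in each connected component of $\hat{G}_\simpcnt$; a switch neither alters $\hat{G}_\simpcnt$ nor changes more than two vertex indices, so a constant-size bookkeeping table is updated in $O(1)$. For $\Pers_0^{\text{co}}(\Ud)$ and $\Pers_0^{\text{oc}}(\Ud)$: by Remark~\ref{rmk:ud-asc-desc} the relevant vertex--edge pairs come entirely from the standard graph filtrations $\Ud_u$ and $\Ud_d$ respectively, so I would run Algorithm~\ref{alg:std-switch-abs} on the ascending part for a forward switch, and on the descending part for a backward switch, keeping one copy of its DFT-Tree/Link-Cut data structures per part; by the complexity analysis of Section~\ref{sec:std-switch} each run costs $O(\log m)$. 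Since $\Ud$ has no additions after deletions and graphs carry no triangles, these intervals occur only in dimension $0$, which matches the statement.

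The remaining, and genuinely delicate, family is $\Pers_1^{\text{cc}}(\Ud)$, which under the bijection above accounts for \emph{both} the open--open intervals in dimension $0$ and the closed--closed intervals in dimension $1$ of $\Fcal$. Here I would invoke Algorithm~\ref{alg:fwd-switch-ud-cc} (correct by Theorem~\ref{thm:alg-fwd-switch-ud-cc-correct}) together with its representative-free implementation: the pairing changes only in Case~C and in Case~D, and each is reduced to a small number of connectivity/cycle tests. Case~C reduces, by the proposition characterizing $e_1\in z$ via a $1$-cycle in $\hat{G}_{j+1}$, to a bottleneck-weight query answerable in $O(\log m)$ on a Link-Cut tree holding the MSF of $\hat{G}$ weighted by ascending indices (and symmetrically by descending indices for the backward switch). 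Case~D reduces, through Observation~\ref{obsv:fwd-switch-ud-cc-caseD-change} and Observation~\ref{obsv:obsv:fwd-switch-ud-cc-caseD-query}, to two queries of the form ``are $u,v$ connected in $H\cap H'$'' with $H$ in the ascending part and $H'$ in the descending part. The crux is to answer these queries under updates within $O(\sqrt{m}\,\log m)$: maintaining an MSF for \emph{every} descending graph would cost $\Omega(m\log m)$ per switch since a weight swap can disturb $\Omega(m)$ of them, so instead I would store MSFs only for the $\sqrt{m}$ graphs $\hat{G}_{m-t\sqrt{m}}$ of Sequence~\eqref{eqn:msf-graph-seq}; the MSF of an arbitrary descending graph is obtained by a Kruskal-style extension from the nearest stored one (at most $\sqrt{m}$ bridging edges, hence $O(\sqrt{m}\,\log m)$) and then rolled back in the same time, and after a forward switch the $\sqrt{m}$ stored MSFs are updated for the two changed weights in $O(\sqrt{m}\,\log m)$, while a backward switch instead reconstructs at most one stored MSF from its predecessor's recorded bridging edge list, again $O(\sqrt{m}\,\log m)$. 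Summing over the constant number of such operations gives the $O(\sqrt{m}\,\log m)$ bound; the supporting preprocessing (Kruskal for the $\sqrt{m}$ MSFs and their bridging lists, two sets of DFT-Tree/Link-Cut structures, the component bookkeeping) fits in $O(m^{1.5}\log m)$. The main obstacle I expect in the write-up is exactly the verification of the $\sqrt{m}$-block scheme: that the nearest stored graph is genuinely a subgraph of the queried one, that the Kruskal extension from its MSF produces the true MSF of the queried graph, and that the store can be restored after each switch so that the invariant ``precisely the MSFs of Sequence~\eqref{eqn:msf-graph-seq} are held'' is preserved.
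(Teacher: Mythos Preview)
Your proposal is correct and follows essentially the same route as the paper: the theorem is stated there as a direct conclusion of the preceding developments in Section~\ref{sec:zz-switch}, and your write-up assembles precisely those pieces (the $O(1)$ outward/inward case, Remark~\ref{rmk:cc-intv} for closed-closed $0$-intervals, Algorithm~\ref{alg:std-switch-abs} on $\Ud_u$/$\Ud_d$ for closed-open/open-closed intervals, and the $\sqrt{m}$-spaced MSF scheme of Sequence~\eqref{eqn:msf-graph-seq} for the edge--edge pairs). The only point to make explicit in the final write-up is that the backward switch requires the fully symmetric data structures (a second family of $\sqrt{m}$ MSFs for graphs in $\Ud_u$ weighted by descending indices), which the paper leaves implicit under ``the procedure for a backward switch is symmetric.''
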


\paragraph{Preprocessing.}
To perform the updates for a sequence of switches starting from a graph zigzag filtration,
we need to construct the data structures maintained by the update algorithms.
Given the initial filtration,
we first compute its corresponding up-down filtration 
in $O(m)$ time~\cite{DBLP:conf/esa/DeyH22}.
For the ascending and descending parts of the initial up-down filtration,
we run the preprocessing for Algorithm~\ref{alg:std-switch-abs}
in $O(m\log m)$ time.
We then
find the oldest vertices in the ascending and descending parts 
of the initial up-down filtration
for each connected component of $\hat{G}$ in $O(m)$ time.
To find the edge-edge pairs,
we run Algorithm~\ref{alg:ext-pers} 
in $O(m\log m)$ time.
For updating the edge-edge pairs,
we also construct the MSF for each graph in Sequence~(\ref{eqn:msf-graph-seq})
individually. 
Since constructing each one is in $O(m\log m)$ time,
constructing all the MSF's is in $O(m^{1.5}\log m)$ time.
Hence, the overall preprocessing takes  $O(m^{1.5}\log m)$ time.
\bigskip

 \section{Discussion}
 We have designed $O(\log m)$ update algorithms for switches in standard graph persistence. Also, we have designed an $O(m\log m)$ algorithm for computing zigzag persistence on graphs.
 It remains open if computing graph zigzag persistence can be further improved to run in $O(m\log n)$ time.
 Similarly, for updates in graph zigzag persistence, we have designed algorithms for all 
 types of intervals. Among them,
 the updates for closed-closed intervals in dimension 1 and open-open intervals in dimension 0 take more than logarithmic time though still admitting a sub-linear algorithm ($O(\sqrt{m}\,\log m)$). 
 It would be especially interesting to see if the $O(\sqrt{m}\,\log m)$ complexity can be improved. 

  Furthermore,
 for all updates, we currently only considered switches,
  where the length of the filtrations is always fixed.
 How about those operations which change the length of the filtration,
 e.g., ones that  insert/delete some inclusion arrows
 into/from the filtration?
 Can these operations be done in logarithmic time, or at least in sub-linear time?

\bibliographystyle{plainurl}
\bibliography{refs}

\end{document}